\newcommand{\kbar}{\overline{k}}
\newcommand{\phibar}{\overline{\phi}}
\newcommand{\xprime}{x^{\prime}}
\newcommand{\vphi}{f_\phi}
\newcommand{\vphib}{\overline{f_\phi}}
\newcommand{\ebar}{\overline{e}}
\newcommand{\HH}{\mathbb{H}}
\newcommand{\FF}{\mathbb{F}}
\newcommand{\RR}{\mathbb{R}}
\newcommand{\Ncal}{\mathcal{N}}
\newcommand{\I}{\mathcal{I}}
\newcommand{\U}{\mathcal{U}}
\newcommand{\Pcal}{\mathcal{P}}
\newcommand{\Hcal}{\mathcal{H}}
\def\K{\mathcal K}
\def\dlhat{\widehat{\delta}}
\def\W{\mathcal W}
\def\khat{\widehat{k}}
\def\ric{\text{Ric}}
\newcommand{\dv}{{\rm d}}
\newcommand{\apx}{a_{\perp, x}}
\newcommand{\apy}{a_{\perp, y}}
\newcommand{\Hap}{\mathcal H_{\text{app}}}
\newcommand{\tHap}{\widetilde{\mathcal H}_{\text{app}}}
\begin{document}

\title{Many-body excitations in trapped Bose gas: A non-Hermitian view 
\thanks{The work of the second and third authors (DM and SS) was partially supported by Grant No.~1517162 of the Division of Mathematical Sciences (DMS) of the National Science Foundation (NSF).}
}
%\subtitle{}

\date{\today}

\titlerunning{Many-body excitations in trapped Bose gas: A non-Hermitian view}        % if too long for running head

\author{Manoussos G. Grillakis \and Dionisios Margetis  \and    Stephen Sorokanich}

\authorrunning{M.~G. Grillakis, D. Margetis, S. Sorokanich} % if too long for running head

\institute{M.~G. Grillakis \at 
Department of Mathematics, University of Maryland, College Park, MD 20742, USA\\
\email{mggrlk@umd.edu}
\and 
D. Margetis \at
               Institute for Physical Science and Technology, and Department of Mathematics, and Center for Scientific Computation and Mathematical Modeling, University of Maryland, College Park, MD 20742, USA\\
              %Tel.: +1 301-405-5455\\
              %Fax: +1 301-314-0827 \\
              \email{diom@umd.edu}           %  \\
%             \emph{Present address:} of F. Author  %  if needed
           \and
           S. Sorokanich \at
              Department of Mathematics, University of Maryland, College Park, MD 20742, USA\\
              \email{ssorokan@umd.edu}
}

%\date{Received: date / Accepted: date}
% The correct dates will be entered by the editor

\maketitle

\begin{abstract}
We provide the analysis of a physically motivated model for a trapped dilute Bose gas with repulsive pairwise atomic interactions at zero temperature. Our goal is to  describe aspects of the excited many-body quantum states by accounting for the scattering of atoms in pairs from the macroscopic state (condensate). We formally construct a many-body Hamiltonian, $\Hap$, that is quadratic in the Boson field operators for noncondensate atoms. This $\Hap$ conserves the total number of atoms. Inspired by Wu ({\em J.\ Math.\ Phys.}, 2:105--123, 1961), we apply a \emph{non-unitary} transformation to $\Hap$. Key in this non-Hermitian view is the pair-excitation kernel,  which in operator form obeys a Riccati equation. In the stationary case, we develop an existence theory for solutions to this operator equation by a variational approach. We connect this theory to the one-particle excitation wave functions heuristically derived by Fetter ({\it Ann.\ Phys.}, 70:67--101, 1972). These functions solve an eigenvalue problem for a $J$-self-adjoint operator. From the non-Hermitian Hamiltonian, we derive a one-particle nonlocal equation for low-lying excitations, describe its solutions, and recover Fetter's excitation spectrum. Our approach leads to a description of the excited eigenstates of the reduced Hamiltonian in the $N$-particle sector of Fock space.

\keywords{Bose-Einstein condensation \and quantum many-body dynamics \and  Boson excitation spectrum \and operator Riccati equation \and $J$-self-adjoint operator}

%\PACS{03.75.Nt \and 03.75.Hh \and 74.25.Kc \and 03.70.+k \and 03.75.-b}
%\subclass{81V45 \and 82B10 \and 35Q40}

\end{abstract}

%%%%%%%%%%%%%%%%%%%%%%%%%%%%
%%%%%%%%%%%%%%%%%%%%%%%%%%%%
\section{Introduction}
\label{sec:Intro}
%%%%%%%%%%%%%%%%%%%%%%%%%%%%
%%%%%%%%%%%%%%%%%%%%%%%%%%%%

In Bose-Einstein condensation (BEC) integer-spin particles (Bosons) occupy en masse a single-particle macroscopic quantum state, known as the `condensate', at extremely low temperatures. The first experimental realizations of BEC in trapped atomic gases in 1995 \cite{Anderson1995,Ketterle1995} -- nearly 80 years after its first prediction by Bose and Einstein -- were the subject of the 2001 Nobel Prize in Physics~\cite{Cornell-rev2002,Ketterle-rev2002}. Since 1995, the experimental and theoretical research in harnessing ultracold atomic gases has grown considerably. An emergent and far-reaching advance in applied physics is the highly precise manipulation of atoms by optical or magnetic  means~\cite{Chinetal2010,Cooperetal2019,Dalfovoetal1999,Fetter2009,Tomzaetal2019}. 

The dilute atomic gas is amenable to a systematic analysis mainly because of the length scale separation inherent to this system~\cite{LiebSeiringer-book}. The following length scales are involved in this problem: (i) The low-energy scattering length, $a$, which expresses the strength of the atomic interactions and is positive for repulsively interacting atoms. (ii) The mean interatomic distance, $\ell$, which is set by the mean density of the gas. (iii) The de Broglie wavelength, $\ell_{\text{dB}}$, of each  atom. For many experimental situations, it is reasonable to assume that $a\ll \ell\ll \ell_{\text{dB}}$. If a trapping potential is applied externally, another length scale is the linear size of the trap, which can be of the same order as or larger than $\ell_{\text{dB}}$. The gas diluteness usually amounts to the condition $a\ll \ell$, and a macroscopic quantum state may exist if $\ell \lesssim \ell_{\text{dB}}$. 

The realization of BEC in atomic gases has sparked various investigations in the modeling and analysis of nonlinear dynamics and out-of-equilibrium phenomena in Boson systems~\cite{Chinetal2010,Stamper-Kurn2013,Nam2017-chapter,Bossmannetal2020,Morsch2006,PethickSmith2008,Margetis2012}. 
In this context, the usual mean field approach~\cite{Schlein2017-chapter}, which exclusively relies on the macroscopic wave function for the condensate, is often (but not always) employed. Despite the success of this approach for many phenomena, its limitations have been early recognized~\cite{wu61}. In applications, this mean field theory cannot capture, for example, the condensate depletion; see, e.g.,~\cite{Xuetal2006}. The modeling of such an effect requires a systematic description of  truly many-body dynamics, in particular  \emph{pair excitations}~\cite{Bogoliubov1947,leehuangyang,wu61,wu98,Seiringer,MGM}.  

In this paper, our goal is to describe aspects of the excited many-body eigenstates of an interacting Bose system in an external trapping potential. To this end, we employ a simplified effective model: a quadratic-in-Boson-field-operators Hamiltonian, called $\Hap$, that captures pair creation.\footnote{Our terminology in this paper differs from that in~\cite{Zagrebnov2001} where such Hamiltonians are characterized as \emph{bilinear} in creation and annihilation Boson operators associated with noncondensate particles.} This $\Hap$ commutes with the particle number operator; thus, the total number of particles is conserved in Fock space. We formally construct $\Hap$ from the full many-body Hamiltonian with a regularized interaction potential. By invoking the formalism of Wu~\cite{wu61,wu98}, we apply a non-unitary transformation to $\Hap$. For stationary states, we analyze the role of the \emph{pair excitation kernel}, $k$, a function of two spatial variables introduced by this transformation. This $k$ expresses the scattering of atoms from the condensate in pairs; and satisfies a nonlinear integro-differential equation. We develop an \emph{existence theory} for this equation by a variational approach. Our treatment reveals a previously unnoticed connection of $k$ to the one-particle excitation wave functions, $u_j$ and $v_j$, introduced independently by Fetter~\cite{fetter72}. These functions obey a system of linear partial differential equations (PDEs). Our analysis sheds light on the existence of the eigenfunctions $u_j$ and $v_j$, and eigenvalues $E_j$, for this system. By the non-Hermitian Hamiltonian that results from the transformed $\Hap$, we derive a nonlocal PDE for phonon-like excitations in the trap; and express its solutions in terms of $u_j$ and $v_j$. We rigorously relate the eigenvalues of this nonlocal PDE with $E_j$; and recover the excitation spectrum obtained in~\cite{fetter72}. Our approach yields an explicit construction of the excited many-body eigenstates of $\Hap$ in the appropriate sector of Fock space.

Our specific tasks and results can be outlined as follows (see also Sect.~\ref{sec:results}):

\begin{itemize}

\item Starting from the full many-body Hamiltonian with positive and smooth interaction and trapping potentials, we formally apply an approximation scheme that leads to a Hamiltonian, $\Hap$, quadratic in the Boson field operator for noncondensate particles. This $\Hap$ is a regularized version of the model developed by Wu~\cite{wu98}. The total number of particles is conserved. 

%\item In the stationary case, we carry out a detailed comparison of the above two 
%approaches~\cite{wu98,fetter72}. One approach uses a non-unitary transformation of the quadratic Hamiltonian in Fock space~\cite{wu98}, which amounts to a non-Hermitian point of view on the computation of the excitation energy spectrum. Another approach~\cite{fetter72} invokes a Bogoliubov-type transformation of  creation and annihilation operators for one-particle states orthogonal to the condensate, thus retaining the Hermiticity of the Hamiltonian. We discuss in what the sense  these approaches are ``equivalent''.

\item For stationary states, we invoke ideas of pair excitation~\cite{wu61}. A key ingredient is the pair excitation kernel, $k$, which is involved in a \emph{non-unitary} exponential transformation of $\Hap$. In operator form this $k$ satisfies a Riccati equation.

\item 
By constructing a functional of $k$, we prove the existence of solutions to the operator Riccati equation in an appropriate space. Our analysis, based on a variational principle, differs from many previous treatments of the operator Riccati equation. We indicate the possibility of multiple solutions for $k$, and distinguish the physically relevant, unique solution via a restriction on the operator norm of $k$. 

\item We provide an explicit construction of the many-body eigenstates of $\Hap$ in the $N$-particle sector of Fock space. We also show that the spectrum of $\Hap$ is positive and discrete.

\item We show that the existence of solutions to the equation for $k$ implies the existence of solutions to the eigenvalue problem for the one-particle excitation wave functions $u_j$ and $v_j$ with a regularized interaction in~\cite{fetter72}. We employ aspects of the theory of $J$-self-adjoint operators by Albeverio and collaborators~\cite{AlbeverioMotovilov2019,Albeverio2009,AlbeverioMotovilov2010,Tretter2016}. Hence, we connect the apparently disparate approaches for low-lying (phonon-like) excitations of Bosons in a trap by Fetter~\cite{fetter72} and Wu~\cite{wu61,wu98}. 

\item As a consequence of the non-unitarily transformed $\Hap$, we formally derive a one-particle PDE (``phonon PDE'') for single-particle excitations in the Bose gas.  %We express solutions to the phonon PDE in terms of the one-particle excitation wave functions, $u_j$ and $v_j$, introduced by Fetter via a Bogoliubov-type transformation of the Boson field operators~\cite{fetter72}. 
By restricting the operator norm of $k$, we show that the point spectrum of the Schr\"odinger operator of the phonon PDE coincides with physically admissible eigenvalues $E_j$ of the PDEs for $(u_j, v_j)$, in agreement with~\cite{fetter72}.

\end{itemize}

At the risk of redundancy, we repeat that our work reveals a nontrivial connection between the non-Hermitian approach of Wu~\cite{wu61} to the Hermitian framework of Fetter~\cite{fetter72} for low-lying excitations via the operator theory of Albeverio and collaborators~\cite{AlbeverioMotovilov2019,Albeverio2009,AlbeverioMotovilov2010,Tretter2016}; see Fig.~\ref{fig:Schematic}. Because of this connection, we can show the solvability of Fetter's eigenvalue problem with a regularized interaction potential~\cite{fetter72}.

%%%%%%%%%%%%%%%%%%%%%%%%%%%%%%%%%%%%%%%%%%%%%%%
\begin{figure}[h]
\includegraphics[scale=0.35,trim=0in 2.5in 0in 0.55in]{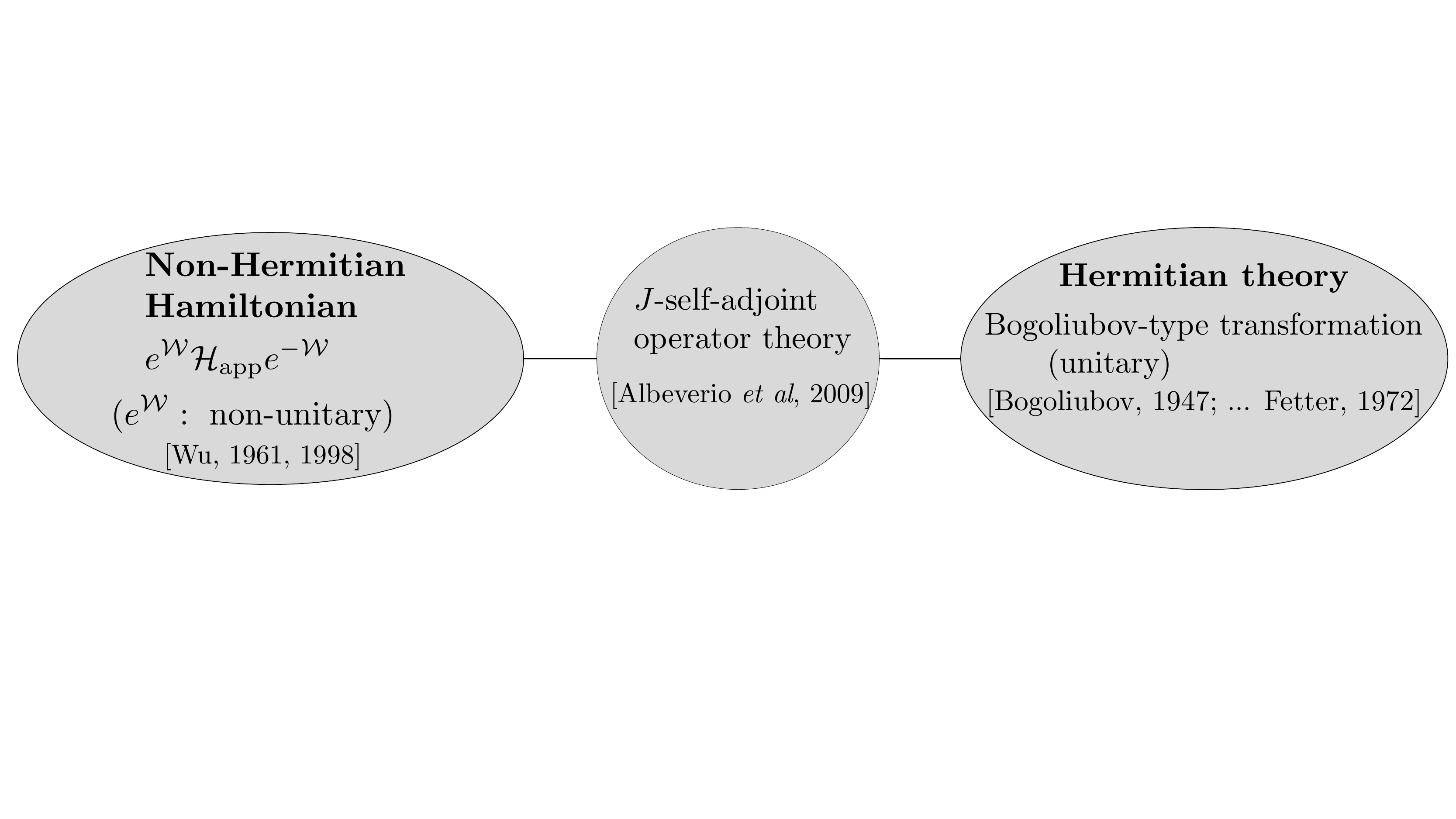}
\centering{}\caption{Schematic for the connection of two main physical approaches (left and right panels) to the problem of excitations in the Bose gas via abstract operator theory (central panel).}
\label{fig:Schematic}
\end{figure}
%%%%%%%%%%%%%%%%%%%%%%%%%%%%%%%%%%%%%%%%%%%%%%%

Our main focus is on the analysis of low-dimensional PDEs that formally result from a non-unitary transformation of the quadratic many-body Hamiltonian $\Hap$~\cite{wu61,wu98}. This Hamiltonian is a starting point of our analysis. Notably, $\Hap$ is physically motivated and is derived heuristically from the full many-particle Hamiltonian, as we show by using a regularized interaction potential. In our procedure, we fix the (conserved) total number of atoms at the value $N$ ($N\gg 1$). A rigorous justification for $\Hap$ lies beyond our scope.  In a similar vein, we sketch plausibility arguments for the extraction of low-dimensional PDEs such as the equation for $k$. On the other hand, the analysis of solutions to these equations is rigorous.  The thermodynamic limit ($N\to\infty$) is not treated by our theory. For aspects of this limit, see, e.g.,~\cite{Cederbaum2017,Lewin2014}.

%We expect that the heuristic manipulations leading to $\Hap$ are legitimate if $N$ is large yet finite. The approximation error is not addressed in this paper.

Notably, the non-unitarily transformed quadratic Hamiltonian considered here has space-time reflection symmetry. This model suggests an example of a physical non-Hermitian quantum theory (see~\cite{Bender2007}). The systematic comparison of the non-Hermitian framework involving the pair-excitation kernel to concepts emerging from the theory of space-time-reflection-symmetric Hamiltonians is left for future work.

 The motivation for the non-Hermitian view of our paper is outlined in Sect.~\ref{subsec:motivation}. Previous related works are discussed in Sect.~\ref{subsec:prev_works}. The underlying mathematical formalism including Fock space concepts is reviewed in Sect.~\ref{subsec:notation}. The paper organization is sketched in Sect.~\ref{subsec:organiz}. (The reader who wishes to skip the remaining introduction and read highlights of our results is deferred to Sect.~\ref{sec:results}.)

%%%%%%%%%%%%%%%%%%%%%%%%%
\subsection{Motivation: Why a non-Hermitian view?}
\label{subsec:motivation}
%%%%%%%%%%%%%%%%%%%%%%%%%

The reader may raise the following question: What is our motivation for focusing on the non-Hermitian approach of~\cite{wu61,wu98}? After all, non-unitary transformations are often deemed as mathematically hard to deal with. Our motivation is twofold. 

First, from a physics perspective, it can be argued that the formalism involving the pair-excitation kernel is a natural extension of the systematic treatment by Lee, Huang and Yang for the setting with translation invariance and periodic boundary conditions~\cite{leehuangyang}. In their case, the eigenvectors of the many-body Hamiltonian can be approximately expressed in terms of the action of a non-unitary operator, $e^{\K}$, on finite superpositions of tensor products of one-particle momentum ($\bm p$) states~\cite{leehuangyang}. The exponent $\K$ is of the form~\cite{leehuangyang,wu61} 
%%%%%%%%%%%
\begin{equation*}
	\K=-\frac{1}{2}\sum_{\bm p\neq 0}\alpha(\bm p)\,a_{\bm p}^* a_{-\bm p}^*~,
\end{equation*}
%%%%%%%%%%%
where $a_{\bm p}$ ($a_{\bm p}^*$) is the annihilation (creation) operator at one-particle momentum $\bm p$, $\bm p \in (2\pi/L) \mathbb{Z}^3$ and $L$ is the linear size of the periodic box. The function $\alpha(\bm p)$, where $\alpha: (2\pi/L)\mathbb{Z}^3\rightarrow \RR_+$, yields the phonon spectrum. In the above, the operator $a_0$ was replaced by $\sqrt{N}$ times the identity operator, which amounts to the Bogoliubov approximation in the periodic setting~\cite{Seiringer}. Each term in the series for $\K$ describes the excitation of particles from the condensate to pairs of opposite momenta. 

Adopting Wu's extension of the above treatment to non-translation invariant settings~\cite{wu61,wu98}, in the case of stationary states we consider a Hamiltonian that conserves the total number of atoms. In addition, we replace the exponent $\K$ by an integral, $\W$, over $\mathbb{R}^3\times \mathbb{R}^3$. This integral involves the pair excitation kernel, $k$, a symmetric function of two spatial variables  ($k: \RR^3\times \RR^3\rightarrow \mathbb{C}$), viz.,
%%%%%%%%%%%%%%%%
\begin{equation*}
\W=-(2N)^{-1}\iint_{\RR^6} \dv x\, \dv y\ a_x^\ast a_y^\ast\, k(x,y)\,a(\overline \phi)^2.
\end{equation*}
%%%%%%%%%%%%%%%%
Here, $a_x^\ast$ is the Boson field creation operator at position $x$, $\phi$ denotes the condensate wave function ($\phi: \RR^3\rightarrow \mathbb{C}$), $a(\bar\phi)$ is the Boson field annihilation operator for the single-particle state $\phi$, and $\phi$ is assumed to be orthogonal to $k$; see Sect.~\ref{subsec:notation}. The kernel $k$ is found to obey a nonlocal and nonlinear PDE~\cite{wu61,wu98}. In this paper, we rigorously study the existence of stationary solutions to this PDE and explore possible implications. Note that $a(\overline\phi)$ is not treated as a $c$-number within our approach.

We show that the formalism based on the operator $\W$ yields an excitation energy spectrum for the Bose gas in agreement with the one derived by Fetter~\cite{fetter72}. His  approach  invokes a Bogoliubov-type rotation of Boson field operators in the space orthogonal to $\phi$ which keeps intact the Hermiticity of the many-body Hamiltonian. Fetter's Hamiltonian does not commute with the particle number operator. Here, we place emphasis on the pair-excitation kernel, $k$, in the context of a non-Hermitian Hamiltonian that conserves the total number of particles. We \emph{explicitly construct the excited many-body eigenstates in the  appropriate ($N$-particle) sector of Fock space}. 
%Hence, the use of $k$ transcends the description of the ground state alone. 

Another reason for our choice of the non-Hermitian view is that this illustrates, and exploits, previously unnoticed connections of abstract operator theory to phonon-like excitations in Boson dynamics. In our analysis we identify the governing equation for $k$ with an operator Riccati equation.  The latter has been studied extensively by Albeverio and coworkers in an abstract context~\cite{AlbeverioMotovilov2019,Albeverio2009,AlbeverioMotovilov2010}; see also~\cite{Tretter2016,Kostrykin03-chap,Tretter-book}. Our existence theory for this equation, based on a variational approach, seems to differ from existence proofs found in these works. Since we focus on $k$, our formalism has a different flavor from the variational approaches for operator matrices in~\cite{Tretter-book}. The Riccati equation for $k$ here is inherent to the non-Hermitian formalism for the Boson system~\cite{wu61}. We rigorously establish that: The excitation spectrum by Fetter's approach~\cite{fetter72} comes from the eigenvalue problem for a $J$-self-adjoint operator intimately connected to the equation for $k$ from Wu's treatment~\cite{wu61,wu98}. In our analysis, we use a regularized interaction potential in the place of the delta-function potential used in~\cite{fetter72,wu61,wu98}. Our findings for the excitation spectrum are independent of the particle-conserving (or not) character of the quadratic Hamiltonian. In contrast, the construction of the many-body eigenstates relies on the particle number conservation.

%%%%%%%%%%%%%%%%%%%%%%%%%%%%
\subsection{On related past works}
\label{subsec:prev_works} 
%%%%%%%%%%%%%%%%%%%%%%%%%%%%

The quantum dynamics of the Bose gas has been the subject of numerous studies. It is impossible to exhaustively list this bibliography. Here, we make an attempt to place our work in the appropriate context of the existing literature. For a broad view on Boson dynamics, the interested reader may consult, e.g.,~\cite{LiebSeiringer-book,Seiringer,Zagrebnov2001,Lewin2016,Schlein2017-chapter,MGM,Dalfovoetal1999}.

Mean field limits of Boson dynamics are usually captured by nonlinear Schr\"odinger-type equations for the condensate wave function~\cite{Gross61,Pitaevskii61,wu61}. Such limits have been rigorously derived from kinetic hierarchies in distinct scaling regimes for the atomic interactions in the works by Erd\H{o}s, Schlein, Yau and collaborators; for the Gross-Pitaevskii regime, see~\cite{ErdosSchleinYau2010}. Our focus in this paper is different. We primarily address the analysis of low-order PDEs that aim to provide corrections to the mean field dynamics of a given quadratic Hamiltonian, and their relation to the excitation spectrum. 

Second-order corrections to the mean field time evolution have been studied rigorously through a Bogoliubov-type transformation in~\cite{GMM2010,GMM2011}. Although that work is inspired by Wu's approach~\cite{wu61,wu98}, it is not strictly faithful to his formalism.  In fact, in~\cite{GMM2010,GMM2011} the many-body Hamiltonian is transformed \emph{unitarily}  whereas in~\cite{wu61,wu98} the corresponding transformation is \emph{non-unitary}. In this paper, we take a firm step towards exploring aspects of the latter approach via a minimal model, by applying a non-unitary transformation to an effective quadratic Hamiltonian in the stationary setting. By this model, we describe the excited many-body quantum states of the gas.

Wu's formal treatment of the interacting Bose system in non-translation invariant settings aims to transcend the mean field limit~\cite{wu61,wu98}. In mathematics, this approach has motivated the use of the pair-excitation kernel $k$ as a means of improving error estimates for the time evolution of Bosons~\cite{MGM}. It has been shown that a unitary, Bogoliubov-type transformation of the many-body Hamiltonian that involves $k$ yields considerably improved Fock space estimates~\cite{GM2013-b,GM2013-a,GM2017,GMM2010,GMM2011}. A price to pay is that $k$ satisfies a nonlocal evolution PDE coupled with the Gross-Pitaevskii equation for the condensate wave function. Because of the use of a unitary transformation in~\cite{GM2013-b,GM2013-a,GM2017,GMM2010,GMM2011,MGM}, the PDE for $k$ in those works is different from the one in~\cite{wu61,wu98}.

There are many other papers that tackle the problems of quantum fluctuations around the mean field limit and the excitation spectrum of the Bose gas in the mathematics literature; see, e.g.,~\cite{Boccatoetal2020,Boccato2020,BrenneckeSchlein2019,Nam2017-chapter,NamNapiorkowski2017,NamNapiorkowski2017-II,NamSolovej2016,NamSeiringer2015,Lewin2015,Lewin2014,Derezinski2014,Seiringer2011,Cornean2009}. A comprehensive review of some of the challenges in analyzing many-body excitations in the periodic box is given by Seiringer~\cite{Seiringer}. Central roles in many treatments of the excitation spectrum are played by the Bogoliubov approximation and the Bogoliubov (unitary) transformation. In particular, in~\cite{Lewin2014} Lewin and coworkers tackle aspects of this problem by use of a quadratic Hamiltonian with a trapping potential via Fock space techniques in the limit $N\to \infty$. \color{black} In these works, the dominant view is Hermitian.  

%\color{red} DM: We need to comment more carefully and extensively on past works in math.\color{black} 

In the physics literature, the excitation spectrum of the Bose gas in non-translation invariant settings has been explicitly described by many authors; for reviews see, e.g.,~\cite{Dalfovoetal1999,Gardiner1997,Griffin1996,Leggett2001,Ozeri2005,Rovenchak2016}. We single out the work by Fetter~\cite{fetter72,fetter96,Fetter2009} who formally addresses this problem through an intriguing linear PDE system. The existence of solutions to this system has not been studied until now.  The underlying many-body formalism relies on a unitary, Bogoliubov-type transformation of Boson field operators for noncondensate particles. This leads to a formula for the excitation spectrum in terms of the eigenvalues, $E_j$, of the PDE system~\cite{fetter72}. This formalism has been invoked in the modeling of phonon scattering~\cite{Danshita2006} and condensate fluctuations at finite temperatures~\cite{Griffin1996}.

In a nutshell, our analysis brings forth an intimate mathematical connection of Fetter's theory~\cite{fetter72} to Wu's approach~\cite{wu61} (see Fig.~\ref{fig:Schematic}). 
Regarding the existence theory for the operator Riccati equation obeyed by $k$,  we develop a variational approach which significantly differs from the previously invoked fixed-point argument~\cite{Albeverio2009,Tretter2016}. We show that this theory naturally implies the existence of solutions to Fetter's PDE system for a regularized interaction potential~\cite{fetter72}. We also construct the eigenvectors of our quadratic Hamiltonian in the $N$-sector of Fock space. The particle-conserving character of the Hamiltonian and the use of kernel $k$ are key in this construction.

%%%%%%%%%%%%%%%%%%%%%%%%%%%%%%%%%%%%%%%%%
\subsection{Notation and terminology}
\label{subsec:notation}

\begin{itemize}

\item[$\bullet$] 

The symbol $\overline{f}$ denotes the complex conjugate of $f$, while $A^\ast$ stands for the Hermitian adjoint
of operator $A$. Also, the symbol $\overline{A}$ indicates the operator which acts according to $\overline{A}[f]=\overline{\{A[f]\}}$ for all functions $f$ in the domain of $A$. 

\item[$\bullet$]

In the symbol $\int$, the integration limits are omitted. The corresponding region is $\mathbb{R}^3$ (for $\int \dv x$) or $\mathbb{R}^3\times \mathbb{R}^3$ (for $\int \dv x\,\dv y$).

\item[$\bullet$] 

The (symmetric) inner product of complex-valued $f,g\in L^2(\mathbb{R}^3)$ is defined by
%%%%%%%%%%%%%%%%%%%%%%%%
\begin{equation*}
\langle \overline{f},g\rangle=\int{\dv x\ \big\{ \overline{f(x)}\,g(x)\big\}}~.
\end{equation*}
%%%%%%%%%%%%%%%%%%%%%%%%
The respective inner product of complex-valued $f,g\in L_V^2(\mathbb{R}^3)$ is $\langle \overline{f},Vg\rangle$ for positive external potential $V(x)$. The $L^2$-norm of $f$ is denoted $\Vert f\Vert_2$.  
For some operator $k$, the (symmetric) inner product of $f(x)$ and $k(x,g)$ is denoted
$\big\langle f,k(\cdot,g)\big\rangle$. 
%This inner product is the 
%matrix element (``expectation'') of the operator $k(x,y)$ at the one-particle states $f,g$. 

\item[$\bullet$]

Function spaces on $\mathbb{R}^{d}$ (e.g., $d=3$) are denoted by lowercase gothic letters, viz., 
%%%%%%%%%%%%%%%%%%%%%%%%%%
\begin{equation*}
    \mathfrak{h}(\mathbb{R}^d) := L^2(\mathbb{R}^d)~,\quad  \mathfrak{h}^1(\mathbb{R}^d) := H^1(\mathbb{R}^d)~, \quad \mathfrak{h}_V^1(\mathbb{R}^d) := H^1(\mathbb{R}^d)\cap L_V^2(\mathbb{R}^d)~.
\end{equation*}
%%%%%%%%%%%%%%%%%%%%%%%%%%
We write $\mathfrak{h}$, $\mathfrak{h}^1$, $\mathfrak{h}_V^1$ for these spaces if $d=3$. As an exception to this notation, we define  $\phi^{\perp}:=
 \big\{e\in \mathfrak{h}^{1}_{V}\ \big\vert\ e\perp\phi\big\}$ where $\phi\in \mathfrak{h}^1$ is the condensate wave function. 
%The Bosonic Fock space over $\mathfrak{h}$ is $\FF(\mathfrak{h})$. More details regarding $\FF$ are given below.

\item[$\bullet$]

For a given ordered set $\{e_j(x)\}_j\subset \mathfrak{h}$, we occasionally use the symbol $\langle A\rangle _j$ for  the inner product $\langle e_j, a(\cdot,e_j) \rangle$, taking $A:=a(x,y)$.

\item[$\bullet$]

The symbol $(\upsilon\ast g)(x)$ denotes the convolution integral $\int \dv y\,\upsilon(x-y) g(y)$.

%\color{red}DM: The last sentence contradicts convention on using lowerc. letters.  \color{black}

\item[$\bullet$]

 The space of bounded linear operators on $\mathfrak{h}$ is denoted $\mathfrak{B}(\mathfrak{h})$, with norm $\|\cdot\|_{\mathrm{op}}$. Also, the space of trace-class operators on $\mathfrak{h}$ is denoted $\mathfrak{B}_1(\mathfrak{h})$ with norm
%%%%%%%%%%%%%%%%%%%%%%%%
\begin{equation*}
\|A\|_{\mathfrak{B}_1(\mathfrak{h})}=\|A\|_1= \mathrm{tr}|A|~,\quad \forall\,\,A\in\mathfrak{B}_1(\mathfrak{h})~.
\end{equation*}
%%%%%%%%%%%%%%%%%%%%%%%%
Similarly, the space of Hilbert-Schmidt operators on $\mathfrak{h}$ is $\mathfrak{B}_2(\mathfrak{h})$ with norm
%%%%%%%%%%%%%%%%%%%%%%%%
\begin{equation*}
\|A\|_{\mathfrak{B}_2(\mathfrak{h})}=\|A\|_2 = (\mathrm{tr}| A^\ast A|)^{1/2}~,\quad \forall\,\,A\in\mathfrak{B}_2(\mathfrak{h})~.
\end{equation*}
%%%%%%%%%%%%%%%%%%%%%%%%
The space of compact operators on $\mathfrak{h}$ is $\mathfrak{B}_0(\mathfrak{h})$. 
Note the  inequalities
%%%%%%%%%%%%%%%%%%%%%%%%
\begin{equation*}
\|A\|_{\mathrm{op}} \le \|A\|_{2} \le \|A\|_{1}~, 
\end{equation*}
%%%%%%%%%%%%%%%%%%%%%%%%
and the inclusions
$\mathfrak{B}_1(\mathfrak{h})\subseteq\mathfrak{B}_2(\mathfrak{h})\subseteq\mathfrak{B}_0(\mathfrak{h})\subseteq\mathfrak{B}(\mathfrak{h})$.

\item[$\bullet$]
We express operators on $\mathfrak{h}$ by use of their integral kernels which we denote by lowercase greek or roman letters. For example, we employ the expression $\delta(x,y)$, in place of $\delta(x-y)$, of  the Dirac mass for the identity operator. In this vein, an effective one-particle Hamiltonian of interest is denoted by the singular kernel   
%%%%%%%%%%%%%%%%%%%%%%%%
\begin{equation*}
h(x,y):= \big\{-\Delta+V(x)+N(\upsilon\ast|\phi|^2)(x)\big\}\delta(x,y) + N\phi(x)\upsilon(x-y)\overline{\phi(y)}-\mu~,
\end{equation*}
%%%%%%%%%%%%%%%%%%%%%%%%
where $\phi(x)$ is the condensate wave function, $V(x)$ is the trapping potential, $\upsilon(x)$ is the two-body interaction potential, and $\mu$ is a constant. 
Another example of notation is $k(x,y)$ for the pair-excitation operator. We use the superscript `$T$' for a kernel to denote its transpose. The star ($\ast$) as a superscript indicates the adjoint (complex conjugate and transpose) kernel; e.g., $k^{\ast}(x,y)=\overline{k(y,x)}$. We write $k\in\mathfrak{S}$ to mean that `the operator with integral kernel $k$' belongs to the space $\mathfrak S$, e.g., for $\mathfrak S=\mathfrak B_2(\mathfrak{h})$.

\item[$\bullet$]

The composition of operators $h$ and $k$ is expressed by 
%%%%%%%%%%%%%%%%%%%%%%%%
\begin{equation*}
(h\circ k)(x,y) := \int{\dv x'\, \big\{h(x,x')k(x',y)\big\}}~.
\end{equation*}
%%%%%%%%%%%%%%%%%%%%%%%%

\item[$\bullet$]

If a bounded operator $k\in\mathfrak{B}(\mathfrak{h})$ acts on $f\in\mathfrak{h}$, the result is the function 
%%%%%%%%%%%%%%%%%%%%%%%%
\begin{equation*}
k(x\,,f) := \int{\dv x'\,\{k(x,x')f(x')\}}~,\quad \mbox{or}
\quad k(f,x'):=\int{\dv x\,\{f(x)k(x,x')\}}~.
\end{equation*}
%%%%%%%%%%%%%%%%%%%%%%%%
The same notation is used for kernels corresponding to unbounded operators, with the understanding that the domain of such an operator is defined appropriately.
%Occasionally, it is convenient to refer to an operator `abstractly', without explicitly writing the integral kernel for this operator. In that case, capital letters are used. 
%For example, we may write $K[f]$ to imply an  integral of the form 
%%%%%%%%%%%%%%%%%%%%%%%%
%$$
%\int{\dv x'\ \big\{k(x,x')f(x')\big\}}~. 
%$$
%%%%%%%%%%%%%%%%%%%%%%%%

\item[$\bullet$] 

For $f, g \in \mathfrak{h}$
the tensor-product operator corresponding to integral kernel $f(x)\overline{g(x')}$ is sometimes expressed as $f\otimes g$. The symmetrized tensor product of $f,g$ is 
%%%%%%%%%%%%%%%%%%%%%%%%
\begin{equation*}
f\otimes_{\mathrm{s}}g:= \frac{1}{\sqrt{2}}\big\{f\otimes g +g\otimes f\big\}~.
\end{equation*}
%%%%%%%%%%%%%%%%%%%%%%%%

\item[$\bullet$] For the condensate wave function $\phi\in\mathfrak{h}$ with $L^2$-norm $\|\phi\|_2=1$,  the projection operator $\widehat\delta:\mathfrak{h}\to\mathfrak{h}$ is defined by 
\begin{equation*}
\widehat{\delta}(x,y)=\delta(x,y)-\phi(x)\overline{\phi(y)}~.
\end{equation*}
 % Evidently, this $\widehat\delta$ projects a function $f\in \mathfrak h$ to the subspace of $\mathfrak h$ orthogonal to $\phi$. 

\item[$\bullet$]

The Bosonic Fock space $\FF$ is a direct sum of $n$-particle symmetric $L^2$-spaces, viz.,
%%%%%%%%%%%%%%%%%%
\begin{equation*}
\FF=\bigoplus_{n=0}^{\infty}\FF_{n}~;\quad \FF_0=\mathbb{C}~,\quad \FF_n=L^2_s(\mathbb{R}^{3n})\ \mbox{if}\ n\ge 1~. 
\end{equation*}
%%%%%%%%%%%%%%%%%%
Hence, vectors in $\FF$ are described as sequences $\{u^n\}$ of $n$-particle wave functions where $u^n\in L^2_s(\mathbb{R}^{3n})$, $n\ge0$. The inner product of $\vert u\rangle=\{u^n\}, \vert w\rangle=\{w^n\}\in\FF$ is 
%%%%%%%%%%%%%%%%%%
\begin{equation*}
\langle u,w\rangle_{\FF}:=\sum_{n=0}^\infty{\langle \overline{u}^n,w^n\rangle_{L^2(\mathbb{R}^{3n})}}~,
\end{equation*}
%%%%%%%%%%%%%%%%%%
which induces the norm $\Vert |u\rangle\Vert=\sqrt{\langle u,u\rangle_{\FF}}$.  
We employ the bra-ket notation for Schr\"odinger state vectors in $\FF$ to distinguish them from wave functions in $L^2_s(\mathbb{R}^{3n})$. We often write the inner product of $\vert u\rangle$ with $\mathcal A\vert w\rangle$ ($\mathcal A: \FF \mapsto \FF$) as $\langle u\vert \mathcal A \vert w\rangle$. The vacuum state in $\FF$ is $\vert  vac\rangle:={\{1,0,0,\dots\}}$, where the unity is placed in the zeroth slot. A symmetric $N$-particle wave function, $\psi_N\in L^2_s(\mathbb{R}^{3N})$, has a natural embedding into $\FF$ given by $\vert\psi\rangle_N=\{0,0,\dots,\psi_N(x),0,\dots\}$, where $\psi_N(x)$ is in the $N$-th slot. The set of such state vectors $\vert \psi\rangle_N$ is the `$N$-th fiber' ($N$-particle sector) of $\FF$, denoted $\mathbb{F}_N$. We sometimes omit the subscript `$N$' in $\vert\psi\rangle_N$, simply writing $\vert\psi\rangle$.

\item[$\bullet$]

A Hamiltonian on $L_s^2(\mathbb{R}^{3N})$ admits an extension to an operator on $\FF$. This extension is carried out via the Bosonic field operator $a_x$ and its adjoint, $a_x^\ast$, which are indexed by the spatial coordinate $x\in\mathbb{R}^3$. To define these field operators, first consider the annihilation and creation operators for a one-particle state $f\in \mathfrak{h}$, denoted by $a(\overline{f})$ and $a^\ast(f)$. These operators act on $\vert u\rangle=\{u^n\}\in\FF$ according to 
%%%%%%%%%%%%%%%%%%
\begin{equation*}
\big(a(\overline{f})\vert u\rangle\big)^n:=\sqrt{n+1}\int{\dv x\, \overline{f(x)}\,u^{n+1}(x,x_2,\dots,x_n)}~,
\end{equation*}
%%%%%%%%%%%%%%%%%%
%%%%%%%%%%%%%%%%%%
\begin{equation*}
\big(a^\ast(f)\vert u\rangle\big)^n:=\frac{1}{\sqrt{n}}\sum_{j\le n}{f(x_j)\,u^{n-1}(x_1,\dots,x_{j-1},x_{j+1},\dots,x_n)}~.
\end{equation*}
%%%%%%%%%%%%%%%%%%
We often use the symbols $a_{\overline f}:=a(\overline f)$ and $a^*_f:= a^*(f)$. Also, given an orthonormal basis, $\{e_j(x)\}_j\subset \mathfrak{h}$, we will write $a_j^\ast$ in place of $a^\ast (e_j)$ and $a_j$ in place of $a(\overline{e_j})$.  

%From the above definitions, one can derive the usual commutation relations, viz., 
%$[a_{\overline f},a_{\overline g}]=0=[a^\ast_f,a^\ast_g]$ and   
%$[a_{\overline f},a^\ast_g]=\langle \overline{f},g \rangle$, for every $f,\,g \in \mathfrak{h}$.

\item[$\bullet$]

The Boson field operators $a_x^\ast,\, a_x$ are now implicitly defined via the integrals
%%%%%%%%%%%%%%%%%%
\begin{equation*}
a^\ast_f =\int{\dv x\, \big\{f(x)\,a_x^\ast \big\}}~,\quad 
a_{\overline f} = \int{\dv x \,\big\{\overline{f(x)}\,a_x\big\}}~.
\end{equation*}
%%%%%%%%%%%%%%%%%%
By the orthonormal basis $\{e_j(x)\}_j$, the field operators are expressed by
%%%%%%%%%%%%%%%%%%
\begin{equation*}
a^\ast_x =\sum_j{e_j(x)\,a^\ast_j}~,\quad 
a_x = \sum_j{\overline{e_j(x)}\,a_j}~.
\end{equation*}
%%%%%%%%%%%%%%%%%%
The canonical commutation relations 
$[a_x,a_y^\ast] =\delta(x-y)$, $[a_x,a_y] = 0$ then follow.

\item[$\bullet$] The Boson field operators orthogonal to the condensate $\phi\in\mathfrak{h}$ are defined by
%
%%%%%%%%%%%%%%%%%
\begin{equation*}
a_{\perp,x}=\int\dv y\,\left\{\dlhat(x,y)a_{y}\right\}
=\int \dv y\,\left\{a_{y}\dlhat^{T}(y,x)\right\}~,
\end{equation*}
%%%%%%%%%%%%%%%%%
\begin{equation*}
a^{\ast}_{\perp,x}=\int
\dv y\,\left\{\dlhat^{T}(x,y)a^{\ast}_{y}\right\}
=\int
\dv y\,\left\{a^{\ast}_{y}\dlhat(y,x)\right\}~.
\end{equation*}
%%%%%%%%%%%%%%%%%
We can decompose the Boson field operators according to the equations
%%%%%%%%%%%%%%%%%%
\begin{equation}\label{eq:ax-dec}
a_x=\apx + \phi(x) a_{\overline{\phi}}~,\quad 
a^\ast_x=\apx^\ast + \overline{\phi(x)} a^\ast_\phi~.
\end{equation}
%%%%%%%%%%%%%%%%%%
It is worthwhile to notice the commutation relations
%%%%%%%%%%%%%%%
\begin{equation*}
\big[a_{\perp,x},a^{\ast}_{\perp,y}\big]=\dlhat(x,y)~,\ 
\big[a^{\ast}_{\perp,x},a_{\perp,y}\big]=-\dlhat^{T}(x,y)~,\ 
\big[a_{\phibar},a^{\ast}_{\perp,x}\big]
=\big[a_{\perp,x},a^{\ast}_{\phi}\big]=0~.
\end{equation*}
%%%%%%%%%%%%%%%%
%%%%%%%%%%%%%%%%%%
%\begin{equation*}
%a(\overline{\phi}) := \int{\dv x\, \big\{\overline{\phi(x)}a_x\big\}}~,\quad 
%a^\ast(\phi) := \int{\dv x \,\big\{\phi(x)a^\ast_x\big\}},
%\end{equation*}
%%%%%%%%%%%%%%%%%%

\item[$\bullet$]

Fock space operators such as the full many-body Hamiltonian, $\mathcal H$, are primarily denoted by calligraphic letters. Some exceptions pertain to annihilation and creation operators including the field operators $a_x,\,a^\ast_x,\,\apx,\,\apx^\ast$ as well as the operators $a_{\overline \phi}, a^*_\phi$  and $a_j, a_j^{\ast}$ associated with the basis $\{e_j(x)\}_j\subset \mathfrak{h}$. 

%%%%%%%%%%%%%%%%%% 
%\begin{equation}
 %       \mathcal{H} := \int{\dv x\, \Big\{a_x^\ast (-\Delta+V(x))a_x+\frac{g}{2}(a_x^\ast)^2(a_x)^2\Big\}}~.
%\end{equation}
%%%%%%%%%%%%%%%%%%

\item[$\bullet$]

Functionals on Banach spaces are often denoted also by calligraphic letters. (Their distinction from Fock space operators is self evident.)   
%%%%%%%%%%%%%%%%%%%%%%%%
%\begin{equation*}
%\mathcal{E}[k] := \mathrm{tr}\Big\{\big(1-\overline{k}\circ k\big)^{-1}\circ\Big[\overline{k}\circ h\circ k+\frac{1}{2}\overline{k}\circ\varphi+\frac{1}{2}\overline{\varphi}\circ k\Big]\Big\}~.
%\end{equation*}
%%%%%%%%%%%%%%%%%%%%%%%%

%\item[$\bullet$] 

%We use the terms ``phonon-like'' and ``low-lying'' excitations interchangeably. 

\end{itemize}

%%%%%%%%%%%%%%%%%%%%%%%%%%%%
\subsection{Paper organization}
\label{subsec:organiz}
%%%%%%%%%%%%%%%%%%%%%%%%%%%%

The remainder of the paper is organized as follows. In Sect.~\ref{sec:results} we summarize our  results and approach. Section \ref{sec:Hamiltonian} focuses on the formal construction of the quadratic Hamiltonian, $\Hap$, and the derivation of the operator Riccati equation for the 
pair-excitation kernel. In Sect.~\ref{sec:Riccati} we develop an existence theory for this Riccati equation. In Sect.~\ref{sec:spectrum} we describe the excitation spectrum and construct the associated eigenvectors of $\Hap$. Key in this description is our use of the $N$-particle sector of the Bosonic Fock space. Section~\ref{sec:applications} addresses the intimate connection of our non-Hermitian theory for low-lying excitations to Fetter's approach~\cite{fetter72} and the properties of $J$-self-adjoint operators~\cite{Albeverio2009}. In Sect.~\ref{sec:conclusion} we conclude our paper by outlining a few open problems.

%%%%%%%%%%%%%%%%%%%%%%%%%%%%
\section{Hamiltonian model, main results, and methodology}
\label{sec:results}
%%%%%%%%%%%%%%%%%%%%%%%%%%%%
In this section, we introduce the full many-body Hamiltonian, and summarize our results and approach.  The more precise statements of results along with derivations or proofs can be found in the corresponding sections, as specified below.

The starting point is the many-body Hamiltonian in Fock space, viz., 
\begin{equation}\label{eq:Ham-def}
\Hcal=\int \dv x\dv y\,\left\{a^{\ast}_{x}\epsilon(x,y)a_{y}+\frac{1}{2}
 a^{\ast}_{x}a^{\ast}_{y} \upsilon(x-y)a_{x}a_{y}\right\}~,
\end{equation}
where $\epsilon(x,y)=\left\{-\Delta_{x}+V(x)\right\}\delta(x,y)$ is the kinetic part, $\upsilon(x)$ is the pairwise interaction potential, and $V(x)$ is the trapping potential. We assume that $\upsilon(x)$ is positive, symmetric, integrable and bounded on $\RR^3$. The trapping potential $V(x)$ is positive and such that the one-particle Schr\"odinger operator $-\Delta + V$ has discrete spectrum. 

%%%%%%%%%%%%%%%%%%%%%%%%%%%%%%%%%%%%%%%%%%%%%%%%%
\subsection{Reduced Hamiltonian and operator Riccati equation for $k$}
\label{subsec:reduced-Ricc-summ}
%%%%%%%%%%%%%%%%%%%%%%%%%%%%%%%%%%%%%%%%%%%%%%%%%

Section~\ref{sec:Hamiltonian} describes Wu's approach~\cite{wu98} in a language closer to operator theory, which serves our objectives.
By heuristics, we reduce Hamiltonian~\eqref{eq:Ham-def} to the quadratic form
\begin{equation*}
\Hap=N E_{\text{H}}+h(a_\perp^{\ast}, a_\perp)+\frac{1}{2N}f_\phi(a_\perp^{\ast},a_\perp^{\ast}) a_{\phibar}^2+\frac{1}{2N} \overline{f_\phi}(a_\perp,a_\perp) {a_\phi^{\ast}}^2~,
\end{equation*}
where $E_H$ is the (mean field) Hartree energy per particle, $h(a_\perp^{\ast}, a_\perp)$ and $\overline{f_\phi}(a_\perp,a_\perp)$ are operators of the form $\int \dv x\,\dv y\,\{a_{\perp, x}^\ast h(x,y) a_{\perp, y}\}$ and $\int \dv x\,\dv y\,\{a_{\perp, x} \overline{f_\phi(x,y)} a_{\perp, y}\}$ for suitable kernels $h(x,y)$ and $f_\phi(x,y)$, and $\phi$ is the condensate wave function; see Sect.~\ref{subsec:Herm}. Our derivation of the reduced Hamiltonian $\Hap$ relies on~\eqref{eq:ax-dec} and the conservation of the particle number. This $\Hap$ provides a simple model for pair excitation. Our goal is to solve the eigenvalue problem $\Hap \vert\psi\rangle = E_N \vert \psi\rangle$. 

Subsequently, we transform $\Hap$ non-unitarily according to $\tHap:= e^{\W}\Hap e^{-\W}$ where the operator $\W$ is of the form $-(2N)^{-1}\int \dv x\,\dv y\ \{k(x,y)\,a_{\perp,x}^\ast a_{\perp, y}^\ast\} a_{\phibar}^2$ which conserves the total number of particles; see Sect.~\ref{subsec:non-uni}.
The Riccati equation for kernel $k$ is extracted via the requirement that the non-Hermitian operator $\tHap$ does \emph{not} contain any terms with the product $a_\perp^\ast a_\perp^\ast$; see Sect.~\ref{subsec:Ricc-deriv}. 
If $k(x,\phibar)=0$, the Riccati equation for $k$ reads
\begin{equation*}
	h\circ k+ k\circ h^T +f_\phi  +k\circ \overline{f_\phi}\circ k=\lambda\otimes_{\text{s}}\phi~,
\end{equation*}
where the Lagrange multiplier $\lambda$ is determined self-consistently.

%%%%%%%%%%%%%%%%%%%%%%%%%%%%%%%%%%%%%%%%%%%%%%%%%
\subsection{Existence theory for $k$}
\label{subsec:existence-summ}
%%%%%%%%%%%%%%%%%%%%%%%%%%%%%%%%%%%%%%%%%%%%%%%%%
In Sect.~\ref{sec:Riccati}, we introduce a functional of $\kbar$ and $k$ by use of which we develop an existence theory for $k$. This functional, $\mathcal{E}[\kbar,k]:\mathrm{dom}(\mathcal{E})\to\mathbb{R}$, reads
\begin{equation*}
\mathcal{E}[\kbar, k]:=\mathrm{tr}\Big\{\big(\delta-\overline{k}\circ k\big)^{-1}\circ\Big(\overline{k}\circ h\circ k+\frac{1}{2}\overline{k}\circ f_\phi+\frac{1}{2}\overline{f_\phi}\circ k\Big)\Big\}~;
\end{equation*}
see Sect.~\ref{subsec:exist-lemmas} for the definition of $\mathrm{dom}(\mathcal{E})$. Setting the functional derivative of $\mathcal{E}[\kbar, k]$ with respect to $\kbar$ equal to zero yields the Riccati equation for $k$. 

We prove the existence of solutions to the Riccati equation for $k$ by assuming that 
\begin{align*}
h(\ebar,e)-\big\vert\vphi(\ebar,\ebar)\big\vert \geq 
c\Vert e\Vert_{L^{2}}^{2}\quad \quad
 \forall e\in \phi^{\perp}=
 \big\{e\in \mathfrak{h}^{1}_{V}\ \big\vert\ e\perp\phi\big\}~.
\end{align*}
In particular, this condition is satisfied if $\phi$ is a minimizer of the Hartree energy, $E_H$. 
The aforementioned inequality is employed as a hypothesis in the main existence theorem, Theorem~\ref{thm:Existence} (Sect.~\ref{subsec:exist-thm-proof}). In fact,  Theorem~\ref{thm:Existence} states that the above inequality and the property that $f_\phi$ is Hilbert-Schmidt imply that the functional $\mathcal{E}$ restricted to $\mathrm{dom}(\mathcal E)_\perp=\mathrm{dom}(\mathcal E)\cap \{k\in\mathfrak{B}_2(\mathfrak{h}^1_V)\, \big\vert\, k(x,\phibar)=0\}$ attains a minimum for some $k\in \mathrm{dom}(\mathcal E)_\perp$ which is a weak solution to the operator Riccati equation. We emphasize that $\phi$ does not need to be a minimizer of the Hartree energy. Our proof makes use of a basis of $\phi^\perp$, the theory of complex ($\mathcal{C}$-) symmetric operators and a variational principle based on functional $\mathcal E$. In Sect.~\ref{subsec:non-unique_k}, we discuss an implication of our existence theory, namely, the non-uniqueness of solutions to the Riccati equation.

%%%%%%%%%%%%%%%%%%%%%%%%%%%%%%%%%%%%%%%%%%%%%%%%%
\subsection{Spectrum and eigenvectors of reduced non-Hermitian Hamiltonian}
\label{subsec:reduced-spec-summ}
%%%%%%%%%%%%%%%%%%%%%%%%%%%%%%%%%%%%%%%%%%%%%%%%%
In Sect.~\ref{sec:spectrum}, we study the eigenvectors and spectrum of the non-unitarily transformed Hamiltonian $\tHap$, under the assumptions of Theorem~\ref{thm:Existence} for $k$. A highlight of our analysis is the explicit construction of these eigenvectors in $\FF_N$ by Fock space techniques. We write $\tHap=N E_H+\mathcal{\Hcal}_{\rm ph}$ where 
\begin{align*}
\mathcal{\Hcal}_{\rm ph}&:=
h_{\rm ph}\big(a^{\ast}_{\perp},a_{\perp}\big)
+\frac{1}{N}(a^{\ast}_\phi)^{2}
\vphib\big(a_{\perp},a_{\perp}\big)~; \\
  h_{\rm ph}\big(a^{\ast}_{\perp},a_{\perp}\big)&:=\int \dv x\,\dv y\ \{a_{\perp, x}^\ast (h+k\circ\vphib)(x,y) a_{\perp, y}\}~. 
\end{align*}
Evidently, $h_{\rm ph}(a^{\ast}_{\perp},a_{\perp})$ forms the diagonal part of $\tHap-N E_H$. We show that  $h_{\rm ph}$  is responsible for the discrete phonon-like excitation spectrum of the trapped Bose gas.

The main result is captured by a theorem (Theorem~\ref{thm:spectrum-phonon}), according to which the following equality of spectra holds:
\begin{equation*}
	\sigma\left(\Hcal_{\rm ph}\big\vert_{\FF_{N}}\right)
 =\sigma\left(h_{\rm ph}(a^{\ast}_{\perp},a_{\perp})\big\vert_{\FF_{N}}\right)~.
\end{equation*}
Furthermore, in this theorem we show that for every eigenvector of $h_{\text{ph}}(a_\perp^\ast, a_\perp)$ with eigenvalue $E$ there is a unique eigenvector of $\mathcal{\Hcal}_{\rm ph}$ with the same eigenvalue, $E$.

Our analysis is based on the following steps. First, we provide a formalism for the decomposition of $\FF_N$ into appropriate orthogonal subspaces (Sect.~\ref{subsec:spectrum-lemmas}). Our technique is similar to that in the construction by Lewin, Nam, Serfaty and Solovej~\cite{Lewin2014}. However, here we consider the eigenvectors of a Hamiltonian that conserves the number, $N$, of particles as opposed to the Bogoliubov Hamiltonian studied in~\cite{Lewin2014}. Second, we show that by the restriction $\Vert k\Vert_{\text{op}}<1$, the spectrum of the one-particle Schr\"odinger-type operator $h_{\text{ph}}$ is positive and discrete, and the corresponding eigenfunctions form a non-orthogonal Riesz basis of $\phi^\perp$ (Sect.~\ref{subsec:spectrum-hph}). The proof of the main theorem (Theorem~\ref{thm:spectrum-phonon}) relies on the above steps to show that the eigenvalue problem for $\mathcal{\Hcal}_{\rm ph}$ can be reduced to a finite-dimensional system of equations that has an upper triangular form (Sect.~\ref{subsec:proof-spectrum-thm}).    

%%%%%%%%%%%%%%%%%%%%%%%%%%%%%%%%%%%%%%%%%%%%%%%%%
\subsection{Connection of non-Hermitian and Hermitian approaches}
\label{subsec:connection-summ}
%%%%%%%%%%%%%%%%%%%%%%%%%%%%%%%%%%%%%%%%%%%%%%%%%
In Sect.~\ref{sec:applications}, we compare our approach to Fetter's formalism~\cite{fetter72}. In particular, we prove the existence of solutions to a PDE system for one-particle excitations, which reduces to Fetter's system~\cite{fetter72} when the interaction potential $\upsilon$ is replaced by $g\delta$ for some constant $g>0$. To this end, we assume that a solution to the operator Riccati equation exists. In this vein, we discuss the connection of the Riccati equation for $k$ to the theory of $J$-self-adjoint matrix operators by Albeverio and coworkers~\cite{AlbeverioMotovilov2019,Albeverio2009,AlbeverioMotovilov2010}. 

Starting with the relevant Bogoliubov Hamiltonian~\cite{fetter72}, we indicate that its diagonalization via ``quasiparticle'' operators (in Fetter's terminology) leads to the PDE system ($j=1,\,2,\,\ldots$)
\begin{equation*}
\begin{pmatrix}
 h^T_\perp & -{f_\phi}_\perp \\
 \overline{{f_\phi}}_\perp & -h_\perp
\end{pmatrix} 
\circ\begin{pmatrix}
u_j(x) \\
v_j(x) 
\end{pmatrix} = E_j
\begin{pmatrix}
u_j(x) \\
v_j(x)
\end{pmatrix}
\end{equation*}
for the one-particle wave functions $u_j$ and $v_j$ and respective eigenvalues $E_j$ (Sect.~\ref{subsec:Bog-eigenvalue}). Here, $q_\perp$ ($q=h,\,f_\phi$) is the projection of operator $q$ on space $\phi^\perp$. Notably, we show that the existence of solutions to the Riccati equation for $k$ implies the solvability of the above system for $(u_j, v_j)$; see Sect.~\ref{subsec:eigenv-Fet-existence}. We also prove that the completeness relations between $u_j$ and $v_j$, previously posed by Fetter~\cite{fetter72}, directly follow from our approach. In Sect.~\ref{subsec:J-self-adj}, we invoke ideas from $J$-self-adjoint operator theory to show that the restriction $\Vert k\Vert_{\text{op}}< 1$ yields a positive spectrum $\{E_j\}_{j=1}^\infty$ for the symplectic matrix involved in the system for $(u_j, v_j)$. 
\color{black}

%%%%%%%%%%%%%%%%%%%%%%%%%%%%%%%%%%%%%%%%%%%%%%%%%%%%%%%%
%%%%%%%%%%%%%%%%%%%%%%%%%%%%%%%%%%%%%%%%%%%%%%%%%%%%%%%%
\section{Construction of quadratic many-body Hamiltonian} 
\label{sec:Hamiltonian}
%%%%%%%%%%%%%%%%%%%%%%%%%%%%%%%%%%%%%%%%%%%%%%%%%%%%%%%%
%%%%%%%%%%%%%%%%%%%%%%%%%%%%%%%%%%%%%%%%%%%%%%%%%%%%%%%%
In this section, we formally construct a quadratic (Hermitian) Hamiltonian and transform it non-unitarily. A core ingredient of this approach is that the number of atoms is strictly conserved. We follow the treatment of Wu~\cite{wu61,wu98} but replace his delta-function potential for repulsive pairwise atomic interactions by a smooth potential. 

Section~\ref{subsec:Herm} focuses on heuristic approximations in the Hermitian setting, where we expand the Hamiltonian in powers of Boson field operators for noncondensate particles. Section~\ref{subsec:non-uni} concerns the non-unitary transformation of the quadratic Hermitian Hamiltonian. In Sect.~\ref{subsec:Ricc-deriv}, we derive a Riccati equation for the pair excitation kernel of the transformation. Section~\ref{subsec:Ricc-comm} provides some discussion on the procedure.

%%%%%%%%%%%%%%%%%%%%%%%%%%%%%%%%%%%%%%%%%%%%%%%%%
\subsection{Reduction of Hamiltonian in Hermitian setting}
\label{subsec:Herm}
%%%%%%%%%%%%%%%%%%%%%%%%%%%%%%%%%%%%%%%%%%%%%%%%%
In this subsection, we formally reduce the many-body Hamiltonian to a quadratic Hermitian operator in Fock space. The total number of particles is conserved. Our main result is described by \eqref{eq:H-app}--\eqref{eq:f-def} below.

We start with Hamiltonian~\eqref{eq:Ham-def}. 
Let $\phi$ denote the (one-particle) condensate wave function, which has $L^2$-norm $\Vert\phi\Vert_2=1$. Recall decomposition~\eqref{eq:ax-dec} for the Boson field operators $a_x$, $a_x^{\ast}$. The particle number operator, $\Ncal$, on $\FF$ can thus be decomposed as  
\begin{equation*}
	\Ncal=\int \dv x\ \{a_x^{\ast} a_x\}=a^{\ast}_\phi a_{\overline{\phi}}+\int \dv x\ \{\apx^{\ast} \apx\}=:\Ncal_\phi+\Ncal_\perp~,
\end{equation*}
where $\Ncal_\phi:=a^{\ast}_{\phi}a_{\overline{\phi}}$ is the number operator for condensate atoms; $\Ncal_\phi$ and $\Ncal_\perp$ commute, and $\Hcal$ commutes with $\Ncal$, viz., $[\Hcal, \Ncal]=\Hcal \Ncal-\Ncal \Hcal=0$. We use the $N$-th fiber, $\FF_N$, of the Bosonic Fock space, considering state vectors $\vert \psi\rangle_N$ that satisfy
\begin{equation*}
	\Ncal \vert \psi\rangle_N = N \vert \psi\rangle_N~;\qquad \Vert \vert \psi\rangle_N \Vert=1~.
\end{equation*}

Following Wu~\cite{wu61}, we first expand $\Hcal$ is powers of $\apx$, $\apx^{\ast}$ by applying decomposition~\eqref{eq:ax-dec} for $a_x$, $a_x^{\ast}$. The Hamiltonian $\Hcal$ reads
\begin{eqnarray*}
\Hcal&=&\int \dv x\,\dv y\ \left\{
\overline{\phi(x)}\epsilon(x,y)\phi(y)+
\frac{1}{2}(\Ncal_{\phi}-1)\vert\phi(x)\vert^{2}\upsilon(x-y)\vert\phi(y)\vert^{2}\right\}\Ncal_{\phi}
\\
&& \mbox{} +\int \dv x\,\dv y\ \left\{
\apx^{\ast}\Big(\epsilon(x,y)\phi(y)+
(\Ncal_{\phi}-1)\phi(x)\upsilon(x-y)\vert\phi(y)\vert^{2}
\Big) a_{\phibar}   \right\}
\\
&&\mbox{} +\int \dv x\,\dv y\ \left\{a^{\ast}_{\phi}
\Big(\overline{\phi(x)}\epsilon(x,y)+(\Ncal_{\phi}-1)
\overline{\phi(y)}\upsilon(x-y)\vert\phi(x)\vert^{2}\Big)
a_{\perp,y}\right\}
\\
&&\mbox{} +\int \dv x\, \dv y\ \left\{
\apx^{\ast}\Big(\epsilon(x,y)
+\Ncal_{\phi}\,(\upsilon\ast\vert\phi\vert^{2})(x)\,\delta(x,y)
+\Ncal_{\phi}\phi(x)\upsilon(x-y)\overline{\phi(y)}\Big)a_{\perp,y}
\right\}
\\
&&\mbox{} +\frac{1}{2}\int \dv x\,\dv y\ \left\{
\apx^{\ast} a^{\ast}_{\perp,y}\phi(x)\upsilon(x-y)\phi(y) a_{\phibar}^{2}
+
{a^{\ast}_{\phi}}^{2}
\overline{\phi(x)}\upsilon(x-y)\overline{\phi(y)}\apx a_{\perp,y}
\right\}
\\
&&\mbox{} +\int \dv x\,\dv y\ \left\{
\apx^{\ast} a^{\ast}_{\perp,y}
\upsilon(x-y)\phi(y)\apx a_{\phibar}
+a^{\ast}_{\phi}
\apx^{\ast} \overline{\phi(y)}\upsilon(x-y)\apx a_{\perp,y}\right\}
\\
&&\mbox{} +\frac{1}{2}\int \dv x\,\dv y\ \left\{
\apx^{\ast} a^{\ast}_{\perp,y}\upsilon(x-y)\apx a_{\perp,y}\right\}~.
\end{eqnarray*}
Recall that $\epsilon(x,y)=\left\{-\Delta_{x}+V(x)\right\}\delta(x,y)$.
%In what follows, we should keep in mind that $a_{\phibar}$ and $a_\phi^*$ commute with $\apx$ and $\apx^\ast$.

The next step is to reduce $\Hcal$ to a Hermitian operator quadratic in $a_\perp$, $a_\perp^{\ast}$. First, we drop the terms that are cubic or quartic in $a_\perp$, $a^{\ast}_\perp$. Second, we make the substitution $\Ncal_\phi=\Ncal-\Ncal_\perp$ and replace $\Ncal$ by $N$ ($\Ncal\mapsto N$ with $N\gg 1$) because $\vert \psi\rangle\in\FF_N$. We then drop the term $\Ncal_\perp^2$. We take $N-1\simeq N$ and \emph{apply a Hartree-type equation} for the condensate wave function $\phi$ which we write as
\begin{equation*}
\int \dv y\ \left\{\epsilon(x,y)\phi(y)+N\phi(x)\upsilon(x-y)|\phi(y)|^2\right\}-\mu \phi(x)=0~.	
\end{equation*}
This results in the elimination of terms linear in $a_\perp$, $a_\perp^{\ast}$ in the Hamiltonian $\Hcal$. The multiplier $\mu$ enables us to impose the normalization constraint $\Vert\phi\Vert_2=1$; thus,
\begin{equation*}
\mu=\int \dv x\,\dv y\ \left\{\overline{\phi(x)}\epsilon(x,y)\phi(y)+N|\phi(x)|^2 \upsilon(x-y)|\phi(y)|^2\right\}~.	
\end{equation*}
The PDE for $\phi$ formally becomes the Gross-Pitaevskii equation~\cite{Gross61,Pitaevskii61} if $\upsilon$ is replaced by $g\delta$ for some constant $g>0$.  

Consequently, the original Hamiltonian $\mathcal H$ is reduced to the quadratic form
\begin{subequations}
\begin{equation}\label{eq:H-app}
\Hap=N E_{\text{H}}+h(a_\perp^{\ast}, a_\perp)+\frac{1}{2N}f_\phi(a_\perp^{\ast},a_\perp^{\ast}) a_{\phibar}^2+\frac{1}{2N} \overline{f_\phi}(a_\perp,a_\perp) {a_\phi^{\ast}}^2  
\end{equation}
where, abusing notation slightly, we define the operators
\begin{eqnarray}
	h(a_\perp^{\ast}, a_\perp)&:=&\int \dv x\,\dv y\ \left\{\apx^{\ast} h(x,y) a_{\perp,y}\right\}~,\label{eq:h-op-def}\\
	f_\phi(a_\perp^{\ast},a_\perp^{\ast})&:=& \int \dv x\,\dv y\ \left\{\apx^{\ast} f_\phi(x,y) a_{\perp,y}^{\ast}\right\}~, \label{eq:f-op-def}
	\end{eqnarray}
along with the corresponding kernels
\begin{eqnarray}
h(x,y)&:=&\epsilon(x,y)+N(\upsilon\ast|\phi|^2)(x)\,\delta(x,y)+N\gamma(x,y)-\mu~,\label{eq:h-def}\\
f_\phi(x,y)&:=& N\phi(x) \upsilon(x-y) \phi(y)~,\quad \color{black} \gamma(x,y):=\phi(x) \upsilon(x-y)\overline{\phi(y)}~.\label{eq:f-def} 	
\end{eqnarray}
\end{subequations}
In the above, the Hartree energy functional, $E_{\text{H}}$, is defined by
\begin{equation*}
E_{\text{H}}=\int \dv x\,\dv y\ \left\{\overline{\phi(x)}\epsilon(x,y)\phi(y)+\frac{N}{2}|\phi(x)|^2 \upsilon(x-y) |\phi(y)|^2\right\}~.
\end{equation*}

Equation~\eqref{eq:H-app} is the desired quadratic Hamiltonian. Note the key property
\begin{equation*}
[\Hap,\Ncal]=0~.	
\end{equation*}
%
%From now on we focus on $\Hap$, which replaces the many-body Hamiltonian $\Hcal$.

%%%%%%%%%%%%%%%%%%%%%%%%%%%%%%%%%%%%%%%%%%%%%%%%%
\subsection{Non-unitary transformation of quadratic Hamiltonian $\Hap$}
\label{subsec:non-uni}
%%%%%%%%%%%%%%%%%%%%%%%%%%%%%%%%%%%%%%%%%%%%%%%%%
In this subsection, we transform $\Hap$ non-unitarily by use of the pair-excitation kernel, $k$. The main result is given by~\eqref{eq:H-transf-def} and~\eqref{eq:ric-def} below.

 For this purpose, we invoke the following quadratic operator:
\begin{subequations}
\begin{equation}\label{eq:K-op-def}
\K :=-\frac{1}{2}\int \dv x\,\dv y\ \left\{
k(x,y)a^{\ast}_{\perp,x}a^{\ast}_{\perp,y}\right\}~,
\end{equation}
where $k=k^T$. This $\K$ does not conserve the number of particles ($[\K, \Ncal]\neq 0$).  In addition, following Wu~\cite{wu61}, we introduce the operator
\begin{equation}\label{eq:W-op-def}
	\W:=-\frac{1}{2N}\int \dv x\,\dv y\ \{k(x,y)\apx^{\ast} a_{\perp,y}^{\ast}\} (a_{\phibar})^2=\frac{1}{N} \K (a_{\phibar})^2~.
\end{equation}
\end{subequations}
The kernel $k$ is not known at this stage, but must satisfy certain consistency conditions (see Sect.~\ref{subsec:Ricc-deriv}). We refrain from specifying the function space of $k$ now. A salient point of this formalism is the identity $[\W,\Ncal]=0$. Consequently, the operator $e^{\W}$, which is used to define the non-unitary transformation of $\Hap$ below, leaves $\FF_N$ invariant, i.e., $e^{\W}: \FF_N \mapsto \FF_N$. (However, $e^{\W}$ does not respect the Fock space norm.) Our goal here is to describe the non-Hermitian operator $e^{\W} \Hap e^{-\W}$.

The main idea concerning the proposed non-unitary transformation of $\Hap$ can be described as follows. Assume that $\vert \psi\rangle_N=\vert \psi\rangle$ ($\vert \psi\rangle\in \FF_N$) is an eigenvector of the (Hermitian) Hamiltonian $\Hap$ with eigenvalue $E$, viz.,
	$\Hap \vert\psi\rangle=E  \vert\psi\rangle$. Then, we have
\begin{equation*}
	\left\{e^{\W}\Hap e^{-\W}\right\} \big(e^{\W}\vert \psi\rangle\big)= E \big(e^{\W} \vert \psi\rangle\big)~.
\end{equation*}	
Hence, the non-Hermitian, non-unitarily transformed, operator $e^{\W}\Hap e^{-\W}$ has eigenvalue $E$ and eigenvector $e^{\W} \vert\psi\rangle$. It turns out that it is more tractable (in a certain sense, as shown below) to describe the transformed eigenvector $e^{\W} \vert\psi\rangle$ in $\FF_N$ than the original vector $\vert\psi\rangle$ by exploiting spectral properties of $e^{\W}\Hap e^{-\W}$. A price that one must pay for this option is that the pair-excitation kernel $k$ must satisfy the operator Riccati equation. One of our major goals here is to motivate the equation obeyed by $k$ through the computation of the non-Hermitian operator $e^{\W}\Hap e^{-\W}$. 

Next, we organize our calculation. First, we readily compute the conjugation
\begin{equation*}
	e^{\W} \apx e^{-\W}=\apx +\frac{1}{N} \khat^{T}(a^{\ast}_{\perp} ,x)\big(a_{\phibar}\big)^{2}~,
\end{equation*}
where (abusing notation) we define
\begin{eqnarray*}
\khat^{T}(x,y)&:=& \int \dv z\ \left\{k(x,z)\dlhat^{T}(z,y)\right\}~,\\
\khat^{T}(a^{\ast}_{\perp} ,x)
&:=&\int \dv y\,\dv z\ \left\{a^{\ast}_{\perp,y}
k(y,z)\dlhat^{T}(z,x)\right\}~.	
\end{eqnarray*}
In a similar vein, by virtue of~\eqref{eq:K-op-def} we compute
\begin{equation*}
	e^{\W} a_\phi^{\ast} e^{-\W}=a_\phi^{\ast} +\frac{2}{N}\K a_{\phibar}~.
\end{equation*}
In order to obtain a symmetric equation in the end, we symmetrize $h(a_\perp^{\ast}, a_\perp)$ as
\begin{equation*}
	h(a_\perp^{\ast}, a_\perp)=\frac{1}{2} \left\{h(a_\perp^{\ast}, a_\perp)+h^T(a_\perp,a_\perp^{\ast})\right\}+c_\infty~,
\end{equation*}
where $c_\infty$ is an (infinite) immaterial constant. This constant is harmless since it is added and subtracted. In fact, we remove this $c_\infty$ after we perform the calculation.

We proceed to carry out the computation of $e^{\W}\Hap e^{-\W}$. To avoid overly cumbersome expressions, we only display the manipulation of key terms of $\Hap$, for illustration purposes. We refrain from presenting the explicit computation of all terms. 

The main term that we need to compute reads
\begin{eqnarray*}
&&\int \dv x\, \dv y\ \left\{e^{\W}\Big((a^{\ast}_{\phi})^{2}a_{\perp,x}a_{\perp,y}\Big)e^{-\W}
\frac{1}{N}\overline{f_\phi(x,y)}\right\}\\
&=&
\left\{(a^{\ast}_{\phi})^{2}+\frac{2}{N}\K (2\Ncal_{\phi}-1)
+\frac{4}{N^2}\K^{2}(a_{\phibar})^{2}
 \right\}\\
 &&\times \int \dv x\,\dv y\ \left(a_{\perp,x}+\frac{1}{N}\khat^{T}(a^{\ast}_{\perp},x)
 (a_{\phibar})^{2}\right) \frac{1}{N}\overline{f_\phi(x,y)}
\left( a_{\perp,y}+\frac{1}{N}\khat(y,a^{\ast}_{\perp})
(a_{\phibar})^{2}\right)\\
&=& \left\{(a^{\ast}_{\phi})^{2}+\frac{2}{N}\K (2\Ncal_{\phi}-1)
+\frac{4}{N^2}\K^{2} (a_{\phibar})^{2} \right\} \\
&&\times \left\{\frac{1}{N}\overline{f_\phi}(a_{\perp},a_{\perp})
 +\frac{1}{N^2}\left((\khat^{T}\circ\overline{f_\phi})(a^{\ast}_{\perp},a_{\perp})
 +(\overline{f_\phi}\circ\khat)(a_{\perp},a^{\ast}_{\perp})\right)
(a_{\phibar})^2 \right. \\
&&\qquad \left. +\frac{1}{N^3}(\khat^{T}\circ \overline{f}\circ\khat)
 (a^{\ast}_{\perp},a^{\ast}_{\perp})(a_{\phibar})^{4}\right\}\\
  &=& (a^{\ast}_{\phi})^{2}\frac{1}{N}\overline{f_\phi}(a_{\perp},a_{\perp})
 +\frac{1}{N^2}\left\{(\khat^{T}\circ\overline{f_\phi})(a^{\ast}_{\perp},a_{\perp})
 +(\overline{f_\phi}\circ\khat)(a_{\perp},a^{\ast}_{\perp})\right\}
 \Ncal_{\phi}(\Ncal_{\phi}-1)\\
 &&\quad +\frac{1}{N^3}(\khat^{T}\circ \overline{f_\phi}\circ\khat)
 (a^{\ast}_{\perp},a^{\ast}_{\perp})
 (a_{\phibar})^{2}(\Ncal_{\phi}-2)(\Ncal_{\phi}-3)\\
 &&\qquad +{\rm higher\ order\ terms\ in\ } a_\perp,\,a_\perp^{\ast}~.
 \end{eqnarray*}
The above Fock space operator can be further simplified, without distortion of its commutability with $\Ncal$, via the replacement $\Ncal_\phi =\Ncal-\Ncal_\perp \mapsto N-\Ncal_\perp$.  	 
Subsequently, we drop terms higher than quadratic in $a_\perp$, $a_\perp^{\ast}$; and treat $N$ as large so that $N-l\simeq N$ if $l$ is fixed. The other relevant computations are 
\begin{eqnarray*}
	e^{\W} h(a_\perp^{\ast}, a_\perp) e^{-\W}&=& h(a_\perp^{\ast}, a_\perp)+\frac{1}{N} \big(h\circ \khat\big)(a_\perp^{\ast},a_\perp^{\ast})\,(a_{\phibar})^2~,\\
	e^{\W} h^T(a_\perp, a_\perp^{\ast}) e^{-\W}&=& h^T(a_\perp, a_\perp^{\ast})+\frac{1}{N} \big(\khat^T\circ h^T\big)(a_\perp^{\ast},a_\perp^{\ast})\,(a_{\phibar})^2~.
\end{eqnarray*}

Accordingly, we obtain the non-Hermitian quadratic operator
\begin{subequations}\label{eqs:H-tr-ric-def}
\begin{eqnarray}\label{eq:H-transf-def}
\tHap&:=& e^{\W} \Hap e^{-\W}= N E_{\text{H}}+\big(h +\khat^{T}\circ\overline{f_\phi}\big)
(a^{\ast}_{\perp} ,a_{\perp})+
\big(h^{T}+\overline{f_\phi}\circ \khat\big)(a_{\perp},a^{\ast}_{\perp}) \notag\\
&&\qquad +\frac{1}{N} \ric (a_\perp^{\ast},a_\perp^{\ast})\,(a_{\phibar})^2 +\frac{1}{N}(a_\phi^{\ast})^2\, \overline{f_\phi}(a_\perp,a_\perp)~.
\end{eqnarray}
In the formal limit $\upsilon\to \delta$, i.e., when the interaction potential becomes a delta function, this $\tHap$ becomes the reduced transformed Hamiltonian derived in~\cite{wu98}.  
The `Riccati kernel' is defined by
\begin{equation}\label{eq:ric-def}
\ric(x,y):=h\circ \dlhat \circ k+k\circ \dlhat^T \circ h^T +f_\phi +k\circ \dlhat^T \circ \overline{f_\phi}\circ \dlhat \circ k~.	
\end{equation}
\end{subequations}
Recall that the kernel $h(x,y)$ is defined by~\eqref{eq:h-def} with~\eqref{eq:f-def}, viz.,
\begin{equation*}
	h(x,y)=\{-\Delta_x +V(x)\}\delta(x,y)+N(\upsilon\ast|\phi|^2)(x)\delta(x,y)+N\phi(x)\upsilon(x-y)\overline{\phi(y)}-\mu~.
\end{equation*}

The operator $\tHap$ is the focus of our analysis. As we anticipated, we have the identity $[\tHap, \Ncal]=0$, which enables us to seek eigenvectors of $\tHap$ in $\FF_N$.  

%%%%%%%%%%%%%%%%%%%%%%%%%%%%%%%%%%%%%%%%%%%%%%%%%
\subsection{Riccati equation for $k$}
\label{subsec:Ricc-deriv}
%%%%%%%%%%%%%%%%%%%%%%%%%%%%%%%%%%%%%%%%%%%%%%%%%
Next, we heuristically outline the rationale for the derivation of an equation for $k$, in the spirit of Wu~\cite{wu61,wu98}. This equation is described by~\eqref{eq:k-Riccati} below. In Sects.~\ref{sec:Riccati}--\ref{sec:applications}, we rigorously study properties and implications of solutions to this equation.

By inspection of~\eqref{eq:H-transf-def}, we see that $\tHap-N E_H$ consists of  two types of terms: (i) Terms that contain both $a_\perp^{\ast}$ and $a_\perp$, and no $a_{\phibar}$ and $a_\phi^{\ast}$. The sum of these terms forms the `diagonal part' of $\tHap$, and can be described by use of a (nonlocal) one-particle Schr\"odinger operator. In the periodic setting~\cite{leehuangyang}, the use of this operator yields the phonon spectrum. (ii) Terms that contain $a_\perp^{\ast}$ and $a_{\phibar}$, or $a_\perp$ and $a_\phi^{\ast}$. In the periodic setting, it can be effortlessly argued that this second part does \emph{not} affect the phonon spectrum \emph{provided} $\ric(a_\perp^{\ast}, a_\perp^{\ast})=0$. Following Wu~\cite{wu61,wu98}, we require that
\begin{equation*}
	\ric=\lambda \otimes_{\text{s}}\phi~, 
\end{equation*}
where $\otimes_{\text{s}}$ denotes the symmetrized tensor product. In view of~\eqref{eq:ric-def}, we thus have an equation for $k$. Here, $\lambda(x)$ is arbitrary and can be chosen to satisfy a prescribed constraint involving the inner product $k(x,\phibar)$.  Notably, the operator $\W$ is invariant under changes of this constraint. In other words, physical predictions are not affected by the choice of $k(x,\phibar)$. For example, we can impose $k(x,\phibar)=0$~\cite{wu61,wu98}. This condition removes $\dlhat$, $\dlhat^T$ from the related equations, which is natural since
\begin{equation*}
\int \dv x\,\dv y\ \left\{\ric(x,y) \apx^{\ast}\apy^{\ast}\right\}=\int \dv x\,\dv y\ \left\{\big(\dlhat \circ \ric \circ \dlhat^T\big)(x,y)\ a_x^{\ast} a_y^{\ast}\right\}~. 	
\end{equation*}
The expression for $\ric(x,y)$ becomes
\begin{subequations}
\begin{equation}\label{eq:ric-nodelta}
\ric(x,y)=h\circ k+k\circ h^T +f_\phi +k\circ \overline{f_\phi}\circ k~.	
\end{equation}
Consequently, the equation for $k$ reads
\begin{equation}\label{eq:k-Riccati}
h\circ k+ k\circ h^T +f_\phi  +k\circ \overline{f_\phi}\circ k=\lambda\otimes_{\text{s}}\phi=\frac{1}{\sqrt{2}}(\lambda\otimes \phi+\phi\otimes\lambda)~, 	
\end{equation}
\end{subequations}
where $\lambda$ should be determined self-consistently. In fact, $\lambda(x)$ obeys the equation
\begin{equation*}
	\lambda(x)=C_1 \phi(x)+\sqrt{2}(h\circ k+ k\circ h^T +f_\phi +k\circ \overline{f_\phi}\circ k)(x,\phibar) =C_1 \phi(x)+\sqrt{2}(k\circ h^T +f_\phi)(x,\phibar)
\end{equation*}
with $C_1=-\langle \phibar,\lambda\rangle$; see Sect.~\ref{subsec:exist-thm-proof}. We refer to~\eqref{eq:k-Riccati} as the `operator Riccati equation' for $k$. In Sect.~\ref{sec:spectrum}, we show that this equation leads to an  excitation spectrum that is identical to the one from Fetter's formalism~\cite{fetter72}. By virtue of~\eqref{eq:k-Riccati}, the transformed approximate Hamiltonian becomes
\begin{eqnarray*}
\tHap&=& N E_{\text{H}}+\big(h +\khat^{T}\circ\overline{f_\phi}\big)
(a^{\ast}_{\perp} ,a_{\perp})+
\big(h^{T}+\overline{f_\phi}\circ \khat\big)(a_{\perp},a^{\ast}_{\perp})+\frac{1}{N}(a_\phi^{\ast})^2\, \overline{f_\phi}(a_\perp,a_\perp)~.
\end{eqnarray*}
%

%%%%%%%%%%%%%%%%%%%%%%%%%%%%%%%%%%%%%%%%%%%%%%%%%
\subsection{A few comments}
\label{subsec:Ricc-comm}
%%%%%%%%%%%%%%%%%%%%%%%%%%%%%%%%%%%%%%%%%%%%%%%%%

It is worthwhile to comment on aspects of our heuristic procedure. First, in hindsight, it is of some interest to discuss how~\eqref{eq:k-Riccati} can be motivated more transparently. The main observation is that, in regard to $\Hap$, we can consider the quadratic matrix form
\begin{equation*}
\int \dv x\,\dv y\ \big(a_{\perp,x}\ ,\ a^{\ast}_{\perp,x}\big)
\left(\begin{matrix}
-h^{T}(x,y)& N^{-1}(a^{\ast}_{\phi})^{2}\overline{f_\phi}(x,y)
\\
-N^{-1}f_\phi(x,y)(a_{\phibar})^{2}&h(x,y)
\end{matrix}\right)\left(\begin{matrix}
-a^{\ast}_{\perp,y}\\ a_{\perp,y}
\end{matrix}\right)~.
\end{equation*}
In view of the commutability of $a_\perp,\,a_\perp^{\ast}$ with $a_{\phibar},\,a^{\ast}_\phi$ we can perform the following conjugation of the above $2\times 2$ matrix, assuming for simplicity that $k(x,\phibar)=0$:
\begin{eqnarray*}
&&\left(\begin{matrix}
\delta & 0\\
N^{-1}k (a_{\phibar})^2  & \delta
\end{matrix}\right)\circ
\left(\begin{matrix}
-h^{T} & N^{-1} (a^{\ast}_{\phi})^{2}\overline{f_\phi}
\\
-N^{-1} f_\phi (a_{\phibar})^{2} & h
\end{matrix}\right)\circ
\left(\begin{matrix}
\delta & 0\\
-N^{-1}k (a_{\phibar})^{2}&  \delta
\end{matrix}\right)
\\
&=&\left(
\begin{matrix}
-h^{T}-N^{-2} \overline{f_\phi}\circ k \Ncal_{\phi}(\Ncal_{\phi}-1)
&N^{-1} (a^{\ast}_{\phi})^{2}\overline{f_\phi}
\\
-N^{-1} \widetilde{\rm Ric} (a_{\phibar})^{2}
&h+N^{-2} k\circ \overline{f_\phi}\Ncal_{\phi}(\Ncal_{\phi}-1)
\end{matrix}\right)
\end{eqnarray*}
where 
\begin{equation*}
\widetilde{\rm Ric}=
h\circ k+k\circ h^{T}+f_\phi 
+\frac{1}{N^2}k\circ\overline{f_\phi}\circ k\,\Ncal_{\phi}(\Ncal_{\phi}-1)~.
\end{equation*}
Now replace $\Ncal_\phi$ with $N$ in the last expression and take $N-1\simeq N$; thus, $\widetilde{\rm Ric}$ is reduced to $\ric$ (with the $\dlhat$ and $\dlhat^T$ removed). \emph{Equation~\eqref{eq:k-Riccati} then results from the requirement that the transformed $2\times 2$ matrix  is upper triangular.} We will show that this property implies that the excitation spectrum of $\Hap$ coincides with the one of the diagonal part of $\tHap$, and is identical to the spectrum of Fetter's approach~\cite{fetter72}; see Sect.~\ref{sec:spectrum}. 

A second comment concerns the Hartree-type equation for $\phi$, which becomes the Gross-Pitaevskii equation if $\upsilon$ is replaced by $g\delta$ for some constant $g>0$. We write the relevant PDE as $\HH_{\text{H}}\phi=\mu\phi$ where
\begin{equation}\label{eq:Hartree-1op-def}
\HH_{\text{H}}:=-\Delta_x +V(x)+N\big(\upsilon\ast |\phi|^2\big)(x)
\end{equation}
is a one-particle Hartree operator. In our analysis, we will consider the interaction potential, $\upsilon(x)$, to be positive, integrable and smooth. For a \emph{trapping} potential $V(x)$, where $V(x)\to \infty$ as $|x|\to \infty$, \color{black} the condensate wave function $\phi(x)$ is bounded and decays exponentially as $|x|\to\infty$. 

We are tempted to loosely comment on the assumptions underlying the uncontrolled approximations for the many-body Hamiltonian in this section.  We expect that the simplifications leading to the reduced Hamiltonian $\tHap$ make sense provided 
\begin{equation*}
\frac{\langle \psi\vert \Ncal_\perp^{l}\vert\psi\rangle}{N^l}\ll 1	\quad \forall\, \vert\psi\rangle\in\FF_N~;\quad l=1,\,2,\,3,\,4~.
\end{equation*}
%
%The rigorous derivation of non-Hermitian Hamiltonian $\tHap$ lies beyond our scope.

%%%%%%%%%%%%%%%%%%%%%%%%%%%%%%%%%%%%%%%%%%%%%%%%%%%%%%%%
%%%%%%%%%%%%%%%%%%%%%%%%%%%%%%%%%%%%%%%%%%%%%%%%%%%%%%%%
\section{Existence theory for operator Riccati equation: Variational approach} 
\label{sec:Riccati}
%%%%%%%%%%%%%%%%%%%%%%%%%%%%%%%%%%%%%%%%%%%%%%%%%%%%%%%%
%%%%%%%%%%%%%%%%%%%%%%%%%%%%%%%%%%%%%%%%%%%%%%%%%%%%%%%%
In this section, we address the existence of solutions to~\eqref{eq:k-Riccati}. Our analysis is partly inspired by works of Albeverio, Tretter and coworkers, e.g., \cite{AlbeverioMotovilov2019,Albeverio2009,AlbeverioMotovilov2010,Tretter2016}, who rigorously connected the operator Riccati equation to the spectral theory of $J$-self-adjoint operators. In our work, we view an existence proof for $k$ as a necessary step towards ensuring the self-consistency of the approximation and non-unitary transform for the Bosonic many-body Hamiltonian. The existence proof for $k$ paves the way to establishing the connection of pair excitation to the phonon spectrum in a trap (Sect.~\ref{sec:spectrum}).

Our theory invokes an appropriate functional, $\mathcal E[\kbar, k]$, and two related lemmas (Sect.~\ref{subsec:exist-lemmas}).
A highlight is Theorem~\ref{thm:Existence} on the existence of $k$ (Sect.~\ref{subsec:exist-thm-proof}). We stress that our existence proof differs significantly from the approach found in~\cite{AlbeverioMotovilov2019,Albeverio2009,AlbeverioMotovilov2010}. First, we utilize a variational approach by seeking stationary points of the functional $\mathcal E[\kbar, k]$ on a Hilbert space, instead of applying the fixed-point argument of~\cite{Albeverio2009}. \color{black} Note that the fixed-point argument in~\cite{Albeverio2009} makes use of operator estimates that are not expected to hold for the operator $\mathrm{Ric}$ of~\eqref{eq:ric-def}. The variational approach developed here is amenable to constraints inherent to our problem; thus, the term $\lambda\otimes_{\text s}\phi$ of~\eqref{eq:k-Riccati} emerges as a Lagrange multiplier. Alternative approaches of variational character for block operator matrices (not for the Riccati equation per se) are described in~\cite{Tretter-book}. \color{black}

Second, our variational approach reveals that Riccati equation~\eqref{eq:k-Riccati} may in principle \emph{not} have a unique solution. Our existence proof indicates how one can construct an infinite number of solutions for $k$. These correspond to saddle points of the underlying functional, $\mathcal E$. This lack of uniqueness can pose a challenge in the subsequent analysis of the phonon spectrum (Sect.~\ref{sec:spectrum}). As a remedy to this issue, we point out that a restriction on the norm of $k$, i.e., $\|k\|_\mathrm{op}<1$, warrants uniqueness (see also~\cite{AlbeverioMotovilov2010}). By this restriction, the $k$ that solves Riccati equation~\eqref{eq:k-Riccati} is in fact a minimizer of $\mathcal E$.

%%%%%%%%%%%%%%%%%%%%%%%%%%%%%%%%%%%%%%%%%%%%%%%%%
\subsection{Functional $\mathcal E[\kbar, k]$ and useful lemmas}
\label{subsec:exist-lemmas}
%%%%%%%%%%%%%%%%%%%%%%%%%%%%%%%%%%%%%%%%%%%%%%%%%
Next, we define the relevant Hilbert space and the functional $\mathcal E[\kbar, k]$ which yields~\eqref{eq:k-Riccati}. We also prove two lemmas  needed for our existence theory.

%%%%%%%%%Definition of functional%%%%%%%%%%%%%%%%%%
\begin{definition}\label{def:energy-func} 
Let $\mathfrak{h}^1_V(\mathbb{R}^3\times\mathbb{R}^3)$ be the space of functions $k(x,x')$ such that 
\begin{equation*}
    \iint{\dv x\, \dv x'\ \Big\{|\nabla_x k(x,x')|^2+|\nabla_{x'}k(x,x')|^2+\big(V(x)+V(x')\big)|k(x,x')|^2\Big\}}<\infty~.
\end{equation*}
The energy functional  $\mathcal{E}[\kbar,k]:\mathrm{dom}(\mathcal{E})\to\mathbb{R}$ is defined by
\begin{subequations}\label{eqs:en-funct-dom}
\begin{equation} \label{eq:energy-func}
\mathcal{E}[\kbar, k]:=\mathrm{tr}\Big\{\big(\delta-\overline{k}\circ k\big)^{-1}\circ\Big(\overline{k}\circ h\circ k+\frac{1}{2}\overline{k}\circ f_\phi+\frac{1}{2}\overline{f_\phi}\circ k\Big)\Big\}
\end{equation}
where
\begin{equation}\label{eq:en-func-domain}
\mathrm{dom}(\mathcal{E}) := \Big\{k\in \mathfrak{B}_2(\mathfrak{h}^1_V)\big|\,k^T = k\,\,\mathrm{and}\,\, \Vert k\Vert_{\rm op}<1\Big\}\subset\mathfrak{B}_2(\mathfrak{h}^1_V)~.
\end{equation}
\end{subequations}
\end{definition}
%%%%%%%%%End of definition%%%%%%%%%%%%%%%%%%%%%%

\begin{remark}\label{rmk:hV1}
The space $\mathfrak{h}^1_V(\mathbb{R}^3\times\mathbb{R}^3)$ is the same as the space $\mathfrak{B}_2\big(\mathfrak{h}^1_V\big)$.	If $k\in\mathfrak{B}_2(\mathfrak{h}^1_V)$ and $\|k\|_2 <1$ then 
$(\delta-\overline{k}\circ k)^{-1}\in\mathfrak{B}_2(\mathfrak{h}^1_V)$. Thus, $\mathrm{dom}(\mathcal{E})$ is nonempty. The inequality $\|k\|_2 <1$ implies $\|k\|_\mathrm{op}<1$. Further remarks on $\|k\|_\mathrm{op}<1$ are deferred to Sect.~\ref{subsec:non-unique_k}. 
\end{remark}

The first lemma can be stated as follows.

%%%%%%%%%%LEMMA 1%%%%%%%%%%%%%%%%%%%%%%
\begin{lemma}\label{lem:variationalderivative}
The functional derivative of $\mathcal{E}[\overline{k},k]$ with respect to symmetric variations of $\overline{k}$ in $\mathfrak{h}^1_V(\mathbb{R}^3\times\mathbb{R}^3)$, denoted by $\delta\mathcal{E}/\delta\overline{k}$ where $\delta\mathcal{E}/\delta\overline{k}\in\mathfrak{B}_2^\ast(\mathfrak{h}^1_V)=\mathfrak{B}_2(\mathfrak{h}^1_V)$, is 
\begin{equation*}
    \frac{\delta\mathcal{E}[\kbar,k]}{\delta\overline{k}}= \frac{1}{2}(\delta-k\circ\overline{k})^{-1}\circ\big\{h\circ k+k\circ h^T+f_\phi+k\circ\overline{f_\phi}\circ k\big\}\circ(\delta-\overline{k}\circ k)^{-1}~.
\end{equation*}
\end{lemma}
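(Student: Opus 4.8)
The plan is to compute the first variation of $\mathcal{E}[\kbar,k]$ directly, treating $\kbar$ and $k$ as independent variables (as is standard for functionals depending on a complex kernel and its conjugate), and then identify the resulting linear functional of the symmetric variation $\delta\kbar$ with an element of $\mathfrak{B}_2(\mathfrak{h}^1_V)$ via the trace pairing. I would write $G := \overline{k}\circ h\circ k+\tfrac12\overline{k}\circ f_\phi+\tfrac12\overline{f_\phi}\circ k$ and $P := (\delta-\overline{k}\circ k)^{-1}$, so that $\mathcal{E}=\mathrm{tr}(P\circ G)$, and then perturb $\overline{k}\mapsto \overline{k}+\varepsilon\,\overline{m}$ with $\overline{m}=\overline{m}^T$ and collect the $O(\varepsilon)$ term.

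The two contributions are: first, the variation of $G$, which gives $\mathrm{tr}\big(P\circ(\overline{m}\circ h\circ k+\tfrac12\overline{m}\circ f_\phi)\big)$; second, the variation of $P$, for which I use the identity $\delta P = P\circ(\delta\overline{k}\circ k)\circ P = P\circ(\overline{m}\circ k)\circ P$ obtained by differentiating $P\circ(\delta-\overline{k}\circ k)=\delta$. This yields $\mathrm{tr}\big(P\circ \overline{m}\circ k\circ P\circ G\big)$. Using cyclicity of the trace, both terms can be written as $\mathrm{tr}(\overline{m}\circ R)$ for an explicit operator $R$; namely $R = h\circ k\circ P + \tfrac12 f_\phi\circ P + k\circ P\circ G\circ P$. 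The functional derivative with respect to unconstrained variations is then $R^T$ (the transpose arising from the pairing convention, since $\mathrm{tr}(\overline{m}\circ R)=\mathrm{tr}(\overline{m}^T\circ R^T)^T$-type bookkeeping), and the \emph{symmetric} part must be extracted because the admissible $\overline{m}$ are symmetric; so $\delta\mathcal{E}/\delta\overline{k} = \tfrac12(R+R^T)$.

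The remaining work is algebraic: I would show that after symmetrization and using $h^T$, $f_\phi^T=f_\phi$, $\overline{f_\phi}^T=\overline{f_\phi}$, $\overline{k}^T=\overline{k}$, $P^T = (\delta - k\circ\overline{k})^{-1}$ (which follows since $(\overline{k}\circ k)^T = k^T\circ\overline{k}^T = k\circ\overline{k}$), together with the resolvent-type identities $P\circ\overline{k} = \overline{k}\circ(\delta - k\circ\overline{k})^{-1} = \overline{k}\circ P^T$ and $k\circ P = P^T\circ k$, everything collapses. The key maneuver is to recognize that $P\circ G\circ P$, when combined with the other terms and symmetrized, telescopes into $\tfrac12\,P^T\circ\{h\circ k+k\circ h^T+f_\phi+k\circ\overline{f_\phi}\circ k\}\circ P$. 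I expect the main obstacle to be precisely this bookkeeping: correctly tracking transposes through the trace pairing (the paper's convention for $\delta\mathcal{E}/\delta\overline{k}$ as an element of $\mathfrak{B}_2^\ast(\mathfrak{h}^1_V)=\mathfrak{B}_2(\mathfrak{h}^1_V)$), and verifying that the resolvent factors rearrange to sandwich the Riccati expression symmetrically as $(\delta-k\circ\overline{k})^{-1}\circ\{\cdots\}\circ(\delta-\overline{k}\circ k)^{-1}$ rather than in some other order. A secondary technical point is justifying that all manipulations stay within the trace class — here the hypotheses $\|k\|_{\rm op}<1$ (so $P$ is bounded), $k\in\mathfrak{B}_2(\mathfrak{h}^1_V)$, and the Hilbert--Schmidt property of $f_\phi$, plus boundedness of $h$ on the relevant scale, guarantee that each product under the trace is trace-class, so the variation is well-defined and the differentiation under the trace is legitimate.
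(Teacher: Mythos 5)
Your proposal is correct and follows essentially the same route as the paper: a direct Gateaux derivative in $\overline{k}$, the resolvent identity for $\delta P$, cyclicity of the trace to isolate the perturbation, and the intertwining relations $k\circ P=P^T\circ k$ (equivalently $P^T\circ k\circ\overline{k}=P^T-\delta$) to collapse everything to $\tfrac12 P^T\circ\mathrm{Ric}\circ P$; the paper reaches the same operator $R=P^T\circ(h\circ k+\tfrac12 f_\phi+\tfrac12 k\circ\overline{f_\phi}\circ k)\circ P$ by grouping $P\circ\overline{k}$ as a unit before differentiating, and then symmetrizes implicitly by noting that $\mathrm{Ric}$ is symmetric when $k$ is, exactly as your final $\tfrac12(R+R^T)$ step does explicitly.
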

%%%%%%%%%%END OF LEMMA 1%%%%%%%%%%%%%%%%%%%%%%

%%%%%%%%%%%%%PROOF OF LEMMA 1%%%%%%%%%%%%%%%%%%%%%%%%%%%%%%%
\begin{proof}
Consider the arbitrary symmetric perturbation $\ell(x,\xprime)$.
It suffices to show that
\begin{align*}
&\Big(\frac{\dv}{\dv s}
\mathcal{E}[\kbar+s\overline{\ell},k]\Big)\Big\vert_{s=0}=
\\
&\frac{1}{2}\int \dv x\,\dv \xprime\ \left\{
\overline{\ell}(x,\xprime)(\delta-k\circ\overline{k})^{-1}\circ\big\{h\circ k+k\circ h^T+f_\phi+k\circ\overline{f_\phi}\circ k\big\}\circ(\delta-\overline{k}\circ k)^{-1}(x,\xprime)
\right\}\ .
\end{align*}
First, by differentiating the formal identity
$\delta = (\delta-\kbar \circ k)^{-1}\circ(\delta-\kbar\circ k)$ we obtain 
\begin{align*}
\Big(\frac{\dv}{\dv s}
\big\{\delta-(\kbar+s\overline{\ell})\circ k\big\}^{-1}\Big)\Big\vert_{s=0}
=\big(\delta-\kbar\circ k\big)^{-1}
\circ\overline{\ell}\circ k\circ\big(\delta-\kbar\circ k\big)^{-1}\ .
\end{align*}
Using, e.g., the Neumann series for $(\delta-\overline{k}\circ k)^{-1}$, we realize that
\begin{align*}
k\circ\big(\delta-\kbar\circ k\big)^{-1}
=\big(\delta-k\circ\kbar\big)^{-1}\circ k~.
\end{align*}
Hence, we also obtain the identity
\begin{align*}
\Big(\frac{\dv}{\dv s}\big\{\big(\delta-(\kbar+s\overline{\ell})\circ k\big)^{-1}
\circ(\kbar+s\overline{\ell})\big\}\Big)\Big\vert_{s=0}=
\big(\delta-\kbar\circ k\big)^{-1}\circ\overline{\ell}\circ
\big(\delta-k\circ\kbar\big)^{-1}~.
\end{align*}
Now express $\mathcal{E}$ as the sum
\begin{equation*}
\mathcal{E} = \mathrm{tr}\big\{(\delta-\overline{k}\circ k)^{-1}\circ\overline{k}\circ\big(h\circ k + {\textstyle\frac{1}{2}} f_\phi\big)\big\} + \mathrm{tr}\big\{\big(\delta-\overline{k}\circ k\big)^{-1}\circ{\textstyle\frac{1}{2}}(\overline{f_\phi}\circ k)\big\}=: \mathcal{E}_1 + \mathcal{E}_2~. 
\end{equation*}
The use of the cyclic property of the trace along with $\ell^T=\ell$ and $(\overline k\circ k)^T = k\circ\overline{k}$ yield
\begin{equation*}
\begin{split}
\Big(\frac{\dv }{\dv s}\mathcal{E}_1[\overline k +s\overline \ell,k]\Big)\Big|_{s=0} &= \mathrm{tr}\big\{(\delta-\overline k\circ k)^{-1}\circ\overline \ell\circ(\delta-k\circ \overline k)^{-1}\circ(h\circ k+{\textstyle\frac{1}{2}}f_\phi)\big\} \\
&= \mathrm{tr}\big\{\overline \ell\circ(\delta-k\circ \overline k)^{-1}\circ(h\circ k+{\textstyle\frac{1}{2}}f_\phi)\circ(\delta-\overline k\circ k)^{-1}\big\}
\end{split}
\end{equation*}
and
\begin{equation*}
\begin{split}
\Big(\frac{\dv}{\dv s}\mathcal{E}_2[\overline k +s\overline \ell,k]\Big)\Big|_{s=0} &= \mathrm{tr}\big\{{\textstyle\frac{1}{2}}(\delta-\kbar\circ k\big)^{-1}
\circ\overline{\ell}\circ k\circ\big(\delta-\kbar\circ k\big)^{-1}\circ(\overline{f_\phi}\circ k)\big\} \\
& = \mathrm{tr}\big\{{\textstyle\frac{1}{2}}
\overline{\ell}\circ k\circ\big(\delta-\kbar\circ k\big)^{-1}\circ(\overline{f_\phi}\circ k)\circ(\delta-\kbar\circ k\big)^{-1}\big\} \\
&=\mathrm{tr}\big\{{\textstyle\frac{1}{2}}
\overline{\ell}\circ\big(\delta-k\circ \kbar\big)^{-1}\circ(k\circ\overline{f_\phi}\circ k)\circ(\delta-\kbar\circ k\big)^{-1}\big\}~.
\end{split}
\end{equation*}
Now combine the above results to obtain the expression
\begin{equation*} 
\Big(\frac{\dv}{\dv s}\mathcal{E}[\overline k +s\overline \ell,k]\Big)\Big|_{s=0}=
\frac{1}{2}\mathrm{tr}\Big\{\overline{\ell}\circ\Big(
 \big(\delta-k\circ\kbar\big)^{-1}\circ \ric
 \circ\big(\delta-\kbar\circ k\big)^{-1}\Big)\Big\}~,
\end{equation*}
 where $\rm{Ric}$ is defined by~\eqref{eq:ric-nodelta}.
 Note that ${\rm Ric}$ is manifestly symmetric
 if $k$ is symmetric.
 This observation completes the proof of Lemma~\ref{lem:variationalderivative}. \hfill $\square$
 \end{proof}
 %%%%%%%%%%%%%END OF PROOF OF LEMMA 1%%%%%%%%%%%%%%%%%%%%%%%%%%%%%%%
 
 %%
 \begin{remark}\label{rmk:weak-soln}
The notion of the weak solution as the critical point of the functional $\mathcal E[\kbar,k]$ is relevant to our existence theorem (Theorem~\ref{thm:Existence}). Consider the space $\phi^{\perp}=
 \big\{e\in \mathfrak{h}^{1}_{V}\ \big\vert\ e\perp\phi\big\}$. We remind the reader that a bounded operator $k\in\mathfrak{B}(\phi^{\perp},\phi^{\perp})$ has a weak solution to the Riccati equation
\begin{equation*}
   k\circ h_\perp^T + h_\perp\circ k + k\circ\overline{f_\phi}\circ k +f_\phi = 0 
    \end{equation*}
provided
\begin{equation*}
    \langle k\circ h_\perp^T p,r\rangle +\langle k p,h^T_\perp r\rangle+\langle k\circ\overline{f_\phi}\circ k p,r\rangle =\langle-f_\phi p,r\rangle.\quad\forall p,\,r\in\mathrm{dom}\{h^T_\perp\}~,
\end{equation*} 
where $h_\perp$ is the projection of operator $h$ on space $\phi^\perp$. 
\end{remark}

Before stating the second lemma, we remark on the condensate wave function, $\phi$.

\begin{remark}\label{rmk:Hartree}
By Sect.~\ref{sec:Hamiltonian}, recall that $\phi$ satisfies $\HH_{\rm H}\phi(x) = \mu\phi(x)$  where the one-particle Hartree operator $\HH_{\rm H}$ is defined in~\eqref{eq:Hartree-1op-def}. We now state a few assumptions, which primarily concern the interaction potential $\upsilon(x)$ and the trapping potential $V(x)$.
First, let us assume that $\upsilon(x)$ is positive, symmetric, integrable, and smooth. \color{black}  If the equation for $\phi$ comes from minimizing the Hartree energy functional, $E_{H}$, viz., 
$$
E_{H}(\phi)
:=\int \dv x\, \dv y\ \left\{
\overline{\phi(x)}\epsilon(x,y)\phi(y)+\frac{N}{2}\vert\phi(x)\vert^2\upsilon(x-y)\vert\phi(y)\vert^2\right\}~,
$$
with $\Vert\phi\Vert_2=1$ then $\mu$ is
the lowest eigenvalue of the linear operator that results from fixing $\phi$ in $\HH_{\rm H}$. The existence theorem (Theorem~\ref{thm:Existence}) is stated and proved for a condensate $\phi$ that is not necessarily a minimizer of $E_H$. In fact, we replace the assumption of $\phi$ being such a minimizer by a less restrictive hypothesis (see Lemma~\ref{lem:interaction}). \color{black} We assume that the potential $V$ is such that $-\Delta+V$ has 
 discrete spectrum; for example, $V(x) = c\vert x\vert^2$ ($c>0$). The spectrum of  
 $\HH_{\rm H}$ is also discrete since $\HH_{\rm H}$ is a compact perturbation of $-\Delta+V$. \color{black} 
 \end{remark}
 %

%%%%%%%%%%%%%%% LEMMA 2 %%%%%%%%%%%%%%%%%%%%%%%%%%%
\begin{lemma}\label{lem:interaction}
If $\upsilon(x)$ has positive Fourier transform $\widehat{\upsilon}(\xi)$, $\widehat{\upsilon}(\xi)\ge 0$, and $\phi$ is a minimizer of the functional $E_H(\phi)$, then for some $c>0$
the following inequality holds:
\begin{equation*}
     h(\ebar,e)-\big\vert\vphi(\ebar,\ebar)\big\vert\geq c\Vert e\Vert_{2}
     \quad \forall e\in \phi^\perp=\big\{e\in \mathfrak{h}_V^1\color{black}\ \big\vert\ e\perp\phi\big\} 
\end{equation*}
where $h(\cdot,\cdot)$ and $\vphi(\cdot,\cdot)$ are defined from~\eqref{eq:h-op-def}--\eqref{eq:f-def}. 
\end{lemma}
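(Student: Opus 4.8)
The plan is to rewrite the quadratic form $h(\ebar,e)-|\vphi(\ebar,\ebar)|$ in a way that exposes a nonnegative bulk term plus a lower-order piece controlled by the spectral gap of the Hartree operator restricted to $\phi^\perp$. Recall from~\eqref{eq:h-def}--\eqref{eq:f-def} that, for $e\in\phi^\perp$,
\begin{equation*}
h(\ebar,e) = \langle \overline{e},(-\Delta+V)e\rangle + N\!\int\! \dv x\,\dv y\,\upsilon(x-y)|\phi(y)|^2|e(x)|^2 + N\!\int\!\dv x\,\dv y\,\overline{e(x)}\phi(x)\upsilon(x-y)\overline{\phi(y)}e(y) - \mu\|e\|_2^2,
\end{equation*}
while $\vphi(\ebar,\ebar)=N\int\dv x\,\dv y\,\overline{e(x)}\,\phi(x)\,\upsilon(x-y)\,\phi(y)\,\overline{e(y)}$. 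The first step is to observe that the three $N$-dependent terms combine, via the positivity of $\widehat{\upsilon}$, into something manifestly controllable: writing $\upsilon(x-y)=\int \dv\xi\,\widehat\upsilon(\xi)e^{i\xi\cdot(x-y)}$ (up to normalizing constants), the direct term $N\int\upsilon(x-y)|\phi(y)|^2|e(x)|^2$ is nonnegative, and the ``exchange'' term together with $-|\vphi(\ebar,\ebar)|$ can be bounded below by completing a square in the variable $\overline{e(x)}\phi(x)$ against $e(y)\overline{\phi(y)}$ versus $\overline{e(y)}\phi(y)$. Concretely, for each fixed $\xi$ the contribution is $N\widehat\upsilon(\xi)\big(|\langle \overline{e}\phi,e^{i\xi\cdot}\rangle|^2_{\text{exch}} \pm \text{terms}\big)$, and since $\widehat\upsilon\ge0$ the combination $\gamma$-term minus $|f_\phi|$-term is bounded below by $0$ after the Cauchy--Schwarz/square-completion; this is exactly where $\widehat\upsilon(\xi)\ge0$ is used and cannot be dropped.

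The second step handles the remaining piece $\langle\overline{e},(-\Delta+V)e\rangle-\mu\|e\|_2^2$. Here I would invoke that $\phi$ is a \emph{minimizer} of $E_H$: by Remark~\ref{rmk:Hartree}, $\mu$ is the lowest eigenvalue of the linear operator obtained by freezing $\phi$ in $\HH_{\rm H}=-\Delta+V+N(\upsilon\ast|\phi|^2)$, with $\phi$ the corresponding ground state. Since $-\Delta+V$ (hence $\HH_{\rm H}$, a compact perturbation) has purely discrete spectrum with a spectral gap above $\mu$, there is $c_0>0$ with
\begin{equation*}
\langle \overline{e},\HH_{\rm H}e\rangle - \mu\|e\|_2^2 \ge c_0\|e\|_2^2 \qquad \forall e\in\phi^\perp.
\end{equation*}
Expanding $\HH_{\rm H}$ shows this is precisely $\langle\overline{e},(-\Delta+V)e\rangle + N\int\upsilon(x-y)|\phi(y)|^2|e(x)|^2 - \mu\|e\|_2^2 \ge c_0\|e\|_2^2$, i.e. the ``direct'' $N$-term and the kinetic/trap/chemical-potential terms together already dominate $c_0\|e\|_2^2$ on $\phi^\perp$. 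The third step is simply to assemble: $h(\ebar,e)-|\vphi(\ebar,\ebar)|$ equals $\big(\langle\overline{e},\HH_{\rm H}e\rangle-\mu\|e\|_2^2\big)$ plus $\big(N\gamma\text{-term}-|\vphi(\ebar,\ebar)|\big)$, the first $\ge c_0\|e\|_2^2$ by the gap and the second $\ge 0$ by $\widehat\upsilon\ge0$; taking $c=c_0$ finishes the proof. (A minor caveat: one should check the exchange term $N\gamma$ is real and nonnegative, or absorb any genuinely indefinite remainder into a slightly smaller $c<c_0$ using that $N\gamma(\ebar,e)$ and $\vphi(\ebar,\ebar)$ differ only by complex-conjugating one $\phi$; since $|\vphi(\ebar,\ebar)|$ already takes absolute value, $N\gamma(\ebar,e)-|\vphi(\ebar,\ebar)|\ge -2|\vphi(\ebar,\ebar)|$ is too weak, so the honest route is the $\xi$-by-$\xi$ square-completion above, which I believe yields exact nonnegativity.)

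The step I expect to be the main obstacle is the second one — making rigorous the claim that the pointwise-in-$\xi$ combination of the exchange term $N\gamma$ and $|\vphi(\ebar,\ebar)|$ is nonnegative. The difficulty is that $|\vphi(\ebar,\ebar)|$ involves an absolute value of a complex quantity, so one cannot simply diagonalize in Fourier space; one must argue that the phase can always be chosen adversarially and still the $\widehat\upsilon(\xi)\ge0$ weighting produces a sum of squares. The clean way is: for any phase $\theta$, $\mathrm{Re}\,(e^{i\theta}\vphi(\ebar,\ebar)) = N\int\widehat\upsilon(\xi)\,\mathrm{Re}\big(e^{i\theta}\,\widehat{\overline{e}\phi}(\xi)\,\widehat{\overline{e}\phi}(-\xi)\big)\dv\xi$ while $N\gamma(\ebar,e)=N\int\widehat\upsilon(\xi)\,|\widehat{\overline{e}\phi}(\xi)|^2\dv\xi$ (using $\widehat{\overline{e}\,\overline\phi}(\xi)=\overline{\widehat{\overline e\phi}(-\xi)}$), and then $|ab|\le\frac12(|a|^2+|b|^2)$ applied under the $\widehat\upsilon\ge0$ integral gives $N\gamma(\ebar,e)-|\vphi(\ebar,\ebar)|\ge0$. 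Everything else (discreteness of the spectrum, the Hartree eigenvalue equation, exponential decay of $\phi$) is already granted in Remark~\ref{rmk:Hartree}, so modulo this Fourier-positivity lemma the argument is routine. \hfill$\square$
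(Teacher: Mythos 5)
Your proposal is correct and takes essentially the same route as the paper: the paper also splits $h=\HH_{\rm H}\delta+N\gamma-\mu$, uses Parseval with $\widehat{\upsilon}\ge 0$ to show $N\gamma(\ebar,e)\ge|\vphi(\ebar,\ebar)|$ (your $|ab|\le\tfrac12(|a|^2+|b|^2)$ step, which does go through using the evenness of $\widehat\upsilon$), and then invokes the spectral gap of $\HH_{\rm H}$ above the simple ground-state eigenvalue $\mu$ on $\phi^\perp$. The obstacle you flagged is not an obstacle, and your final Fourier-positivity argument is exactly the paper's.
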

%

%%%%%%%%%% PROOF OF LEMMA 2 %%%%%%%%%%%%%%%%%%%%
\begin{proof}
Define $g(x):= \overline{\phi(x)}\,e(x)$. Parseval's identity yields
\begin{equation*}
    \iint{\dv x\,\dv y\ \{\overline{e(x)}\phi(x)N\upsilon(x-y) \overline{\phi}(y)e(y)\}} = \int{\dv\xi\ \{N\widehat{\upsilon}(\xi)|\widehat{g}(\xi)|^2\}}~,
\end{equation*}
which dominates the integral
\begin{equation*}
    \iint{\dv x\,\dv y\ \{\overline{e}(x)f_\phi (x,y)\overline{e}(y)\}} =\int{d\xi\{N\widehat{\upsilon}(\xi)\big(\overline{\widehat{g}(\xi)}\big)^2}\}~.
\end{equation*}
Since $\phi$ is the minimizer of the Hartree functional, $E_H(\phi)$, we can assert that
$\phi$ is the  eigenfunction  with the lowest eigenvalue of
the operator $\HH_{\rm H}$ and is therefore simple.
If $e\perp\phi$ then $\big<\ebar, \HH_{H}e\big>\geq c\Vert e\Vert_{L^{2}}$
for some $c>0$ because the spectrum of the Hartree operator $\HH_{H}$ is discrete (see Remark~\ref{rmk:Hartree}). \hfill $\square$
\end{proof}
%%%%%%%%%%END OF PROOF OF LEMMA 2 %%%%%%%%%%%%%%%

Lemma~\ref{lem:interaction} motivates the inequality involving $h$ and $f_\phi$ as a key assumption of Theorem~\ref{thm:Existence}, which replaces the requirement that $\phi$ is a minimizer of $E_H(\cdot)$.

%%%%%%%%%%%%%%%%%%%%%%%%%%%%%%%%%%%%%%%%%%%%%%%%%
\subsection{Existence theorem and proof}
\label{subsec:exist-thm-proof}
%%%%%%%%%%%%%%%%%%%%%%%%%%%%%%%%%%%%%%%%%%%%%%%%%
The existence theorem can be stated as follows:
%%%%%%%%%%%Statement of Existence Theorem%%%%%%%%%%%%%%%%%%%%%%
\begin{theorem}\label{thm:Existence}
Suppose that the kernels $h(x,y)$ and $\vphi(x,y)$ satisfy the inequality
\begin{align}
h(\ebar,e)-\big\vert\vphi(\ebar,\ebar)\big\vert \geq 
c\Vert e\Vert_{L^{2}}^{2}\quad \quad
 \forall e\in \phi^{\perp}=
 \big\{e\in \mathfrak{h}^{1}_{V}\ \big\vert\ e\perp\phi\big\}~,
 \label{eq:Condition1}
\end{align}
for some constant $c>0$. Moreover, let us assume that $\vphi$ is Hilbert-Schmidt.

Consider the functional $\mathcal{E}[\kbar,k]$, defined in~\eqref{eq:energy-func}, 
with domain 
\begin{equation*}
\mathrm{dom}(\mathcal{E})_\perp := \mathrm{dom}(\mathcal{E})
\cap\left\{k\in\mathfrak{B}_2(\mathfrak{h}^1_V)\ \big\vert\ k(x,\phibar)=0\right\}
\end{equation*}
which consists of the compact $\mathcal{C}$-symmetric Hilbert-Schmidt operators $k\in\mathfrak{B}_2(\mathfrak{h}^1_V)$ satisfying $k(x,\overline{\phi})=0$. 

Then the functional $\mathcal{E}$ restricted to $\mathrm{dom}(\mathcal{E})_\perp$ attains a minimum for some $k\in\mathrm{dom}(\mathcal{E})_\perp$ which is a weak solution of the operator Riccati equation~\eqref{eq:k-Riccati}. 
%\begin{equation}\label{Riccati}
% h\circ k+k\circ h^T+f_\phi+k\circ\overline{f_\phi}\circ k = 
%\lambda \otimes_{\rm s}\phi \end{equation}
The function $\lambda(x)$ entering this equation is a Lagrange multiplier due to the restriction and equals
\begin{equation}\label{eq:lagrangemultiplier}
  \lambda(x) = \sqrt{2}\left\{\big(k\circ\gamma\big)\big(x, \overline{\phi}\big)+f_\phi\big(x,\overline{\phi}\big)-{\textstyle\frac{1}{2}}\vphi(\phibar,\phibar)\phi(x)\right\}~. 
\end{equation}
\end{theorem}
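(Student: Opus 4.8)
The plan is to realize the solution $k$ as a minimizer of $\mathcal{E}$ over $\mathrm{dom}(\mathcal{E})_\perp$ by the direct method of the calculus of variations, and then to extract Riccati equation~\eqref{eq:k-Riccati} as the associated Euler--Lagrange equation by means of Lemma~\ref{lem:variationalderivative}. Throughout, I would work in an orthonormal basis $\{e_j\}_{j\ge1}$ of $\phi^\perp$ made of eigenfunctions of the Hartree operator $\HH_{\rm H}$ (available by Remark~\ref{rmk:Hartree}, since $-\Delta+V$ has discrete spectrum), so that membership of $k$ in $\mathfrak{B}_2(\mathfrak{h}^1_V)$ is encoded by the weighted $\ell^2$-summability of its matrix elements in that basis. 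Because $k^T=k$ one has $k^\ast=\kbar$, so $k$ is a $\mathcal{C}$-symmetric (complex symmetric) compact operator; by the Autonne--Takagi factorization there are an orthonormal system $\{\psi_\alpha\}\subset\phi^\perp$ and singular values $s_\alpha\ge0$ with $\sup_\alpha s_\alpha=\|k\|_{\mathrm{op}}<1$ for which $k=\sum_\alpha s_\alpha\,\psi_\alpha\otimes_{\mathrm{s}}\psi_\alpha$. Then $\kbar\circ k$ and $k\circ\kbar$ annihilate $\phi$, leave $\phi^\perp$ invariant, and carry eigenvalues $s_\alpha^2<1$ there, so that $(\delta-\kbar\circ k)^{-1}$ and $(\delta-k\circ\kbar)^{-1}$ fix $\phi$, preserve $\phi^\perp$, and have eigenvalues $(1-s_\alpha^2)^{-1}$; this structure is used in every estimate below.

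\emph{Step 1 (coercivity).} Using cyclicity of the trace and the resolvent identities established in the proof of Lemma~\ref{lem:variationalderivative}, I would rewrite $\mathcal{E}[\kbar,k]=\mathrm{tr}(h\,\Gamma_k)+\mathrm{Re}\,\mathrm{tr}(\vphib\,A_k)$, where $\Gamma_k:=(\delta-k\circ\kbar)^{-1}\circ(k\circ\kbar)$ acts within $\phi^\perp$, is $\ge0$, and $A_k:=(\delta-k\circ\kbar)^{-1}\circ k$ satisfies $A_k\circ\overline{A_k}=\Gamma_k+\Gamma_k^2$. Hypothesis~\eqref{eq:Condition1} is precisely the positivity, with gap $c>0$, of the Bogoliubov-type quadratic form with block structure $\bigl(\begin{smallmatrix}h_\perp & {\vphi}_\perp\\ {\vphib}_\perp & h_\perp^T\end{smallmatrix}\bigr)$ on $\phi^\perp$; combined with $\vphi\in\mathfrak{B}_2(\mathfrak{h})$ and $h=(-\Delta+V)+(\text{bounded})-\mu$, it yields $\mathcal{E}[\kbar,k]\ge -C_0$ on $\mathrm{dom}(\mathcal{E})_\perp$ and, more importantly, that along any minimizing sequence $\{k_n\}$ one has a bound on $\|k_n\|_{\mathfrak{B}_2(\mathfrak{h}^1_V)}$ and a uniform gap $\|k_n\|_{\mathrm{op}}\le1-\varepsilon_0<1$: the positive term $\mathrm{tr}(h\,\Gamma_{k_n})$ diverges both as $\|k_n\|_{\mathrm{op}}\to1$ and as $\|k_n\|_{\mathfrak{B}_2(\mathfrak{h}^1_V)}\to\infty$, while the $\vphi$-term is dominated by it. This uniform gap is what prevents the minimizing sequence from escaping to the boundary of the (weakly non-closed) set $\mathrm{dom}(\mathcal{E})_\perp$.

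\emph{Step 2 (compactness and weak lower semicontinuity).} Since $\mathfrak{B}_2(\mathfrak{h}^1_V)$ is a Hilbert space, a subsequence satisfies $k_n\rightharpoonup k_\infty$ weakly there; the linear constraint $k(x,\phibar)=0$ and the symmetry $k^T=k$ are weakly closed, and the uniform gap passes to the limit, so $k_\infty\in\mathrm{dom}(\mathcal{E})_\perp$. The compact embedding $\mathfrak{B}_2(\mathfrak{h}^1_V)\hookrightarrow\mathfrak{B}_2(\mathfrak{h})$ (a consequence of the discreteness of the spectrum of $-\Delta+V$; cf.\ Remark~\ref{rmk:hV1}) gives $k_n\to k_\infty$ in $\mathfrak{B}_2(\mathfrak{h})$, hence $\kbar_n\circ k_n\to\kbar_\infty\circ k_\infty$ in trace norm and, by the uniform gap, $(\delta-\kbar_n\circ k_n)^{-1}\to(\delta-\kbar_\infty\circ k_\infty)^{-1}$ in operator norm. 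Consequently the $\vphi$-dependent term of $\mathcal{E}$ (genuinely lower order, since $\vphi\in\mathfrak{B}_2$) converges, while $\mathrm{tr}(h\,\Gamma_k)$ splits into a weakly lower semicontinuous positive part, the $\mathfrak{h}^1_V$-energy carried by $\Gamma_k$, plus weakly continuous remainders. Therefore $\mathcal{E}[\kbar_\infty,k_\infty]\le\liminf_n\mathcal{E}[\kbar_n,k_n]=\inf_{\mathrm{dom}(\mathcal{E})_\perp}\mathcal{E}$, and $k_\infty$ is a minimizer. \emph{I expect this step to be the main obstacle}: establishing weak lower semicontinuity of a functional built from the operator inverse $(\delta-\kbar\circ k)^{-1}$ while simultaneously controlling the \emph{open} constraint $\|k\|_{\mathrm{op}}<1$; the coercivity of Step~1, which is really just a restatement of~\eqref{eq:Condition1}, is exactly what makes it work.

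\emph{Step 3 (Euler--Lagrange equation and the Lagrange multiplier).} At the minimizer $k=k_\infty$ one has $\frac{\dv}{\dv s}\mathcal{E}[\kbar+s\,\overline{\ell},k]\big\vert_{s=0}=0$ for every symmetric $\overline{\ell}\in\mathfrak{B}_2(\mathfrak{h}^1_V)$ tangent to the constraint, i.e.\ with $\overline{\ell}(x,\phi)=0$, equivalently $\overline{\ell}$ supported on $\phi^\perp\otimes\phi^\perp$. By Lemma~\ref{lem:variationalderivative} this reads $\mathrm{tr}\bigl\{\overline{\ell}\circ\bigl((\delta-k\circ\kbar)^{-1}\circ\ric\circ(\delta-\kbar\circ k)^{-1}\bigr)\bigr\}=0$ for all such $\overline{\ell}$; since $\overline{\ell}$ ranges over all symmetric operators on $\phi^\perp$, this forces the $\phi^\perp$-compression of $(\delta-k\circ\kbar)^{-1}\circ\ric\circ(\delta-\kbar\circ k)^{-1}$ to vanish, and as the conjugating factors fix $\phi$ and preserve $\phi^\perp$, the $\phi^\perp$-compression of $\ric$ vanishes too; being symmetric, $\ric$ is therefore of the form $\lambda\otimes_{\mathrm{s}}\phi$, which is precisely Riccati equation~\eqref{eq:k-Riccati}, holding weakly in the sense of Remark~\ref{rmk:weak-soln}. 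Finally, to identify $\lambda$, I would apply~\eqref{eq:k-Riccati} to $\phibar$ in the second slot: the terms $h\circ k$ and $k\circ\vphib\circ k$ vanish because $k(x,\phibar)=0$, so $\lambda(x)$ equals $\sqrt{2}\,(k\circ h^T+\vphi)(x,\phibar)$ up to a free additive multiple of $\phi$; using that $h$ is Hermitian, so $h^T=\overline{h}$, together with the Hartree relation $\HH_{\rm H}\phi=\mu\phi$ to rewrite $(k\circ h^T)(x,\phibar)$ through $\gamma$, and fixing the free multiple by pairing once more with $\phibar$ and invoking $\|\phi\|_2=1$, yields exactly~\eqref{eq:lagrangemultiplier}. \hfill$\square$
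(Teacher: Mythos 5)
Your proposal is correct in outline, but it follows a genuinely different route from the paper. The paper does not run the direct method on $\mathcal{E}$ in the operator variable. Instead it first uses the Takagi decomposition $k=\sum_j z_j\,e_j\otimes e_j$ of the compact $\mathcal{C}$-symmetric operator $k$ to write $\mathcal{E}$ as the explicit sum $\sum_j(1-|z_j|^2)^{-1}\big\{h(\ebar_j,e_j)|z_j|^2+\tfrac12\big(\vphi(\ebar_j,\ebar_j)z_j+\vphib(e_j,e_j)\overline{z}_j\big)\big\}$, minimizes each summand in closed form over $z_j$ (obtaining the roots $z_j^{\pm}$ of \eqref{eq:alphai}, with $|z_j^{+}|<1$ guaranteed by \eqref{eq:Condition1}), and thereby reduces the problem to maximizing $\sum_j\mathcal{F}(e_j)$, with $\mathcal{F}(e)=h(\ebar,e)-\sqrt{h^{2}(\ebar,e)-|\vphi(\ebar,\ebar)|^{2}}$, over orthonormal frames of $\phi^{\perp}$; that scalar maximization is handled one vector at a time by weak compactness in $\phi^{\perp}$ and the compact embedding into $L^{2}$, and the Hilbert--Schmidt bound on $k$ follows from $\sum_j|z_j^{+}|^{2}\le C\Vert\vphi\Vert_2^2$. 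What the paper's route buys: the closed-form roots make the role of \eqref{eq:Condition1} transparent, show that the minimizer automatically satisfies $\Vert k\Vert_{\mathrm{op}}=\sup_j|z_j^{+}|<1$, and are reused later (Remark~\ref{rmk:positivity} and the saddle-point/non-uniqueness discussion of Sect.~\ref{subsec:non-unique_k}); it also sidesteps weak lower semicontinuity of the full operator functional. What your route buys: it is the standard variational scheme, and it does not lean on the simplicity-of-spectrum caveat behind decomposition \eqref{eq:csymmetric}. Both arguments use Lemma~\ref{lem:variationalderivative} for the Euler--Lagrange step and identify $\lambda$ the same way, by contracting \eqref{eq:k-Riccati} with $\phibar$ and using $k(x,\phibar)=0$ together with the Hartree equation.

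One technical suggestion on your Step~2, which you rightly flag as the crux: the proposed splitting of $\mathrm{tr}(h\,\Gamma_k)$ into ``a weakly lsc positive part plus weakly continuous remainders'' is delicate, because the remainder still carries the unbounded $h$ and is quadratic in $h^{1/2}\circ k_n$, which converges only weakly in $\mathfrak{B}_2(\mathfrak{h})$; weak convergence of both factors of a bilinear form does not give convergence of the form. A cleaner path is to write $\Gamma_k=M\circ M^{\ast}$ with $M=k\circ(\delta-\kbar\circ k)^{-1/2}$, so that $\mathrm{tr}(h\,\Gamma_k)=\Vert h^{1/2}\circ k\circ(\delta-\kbar\circ k)^{-1/2}\Vert_2^2$ on $\phi^{\perp}$ (where $h\geq c>0$ by \eqref{eq:Condition1}); since $h^{1/2}\circ k_n$ converges weakly in $\mathfrak{B}_2(\mathfrak{h})$ while $(\delta-\kbar_n\circ k_n)^{-1/2}$ converges in operator norm (by your uniform gap and the strong $\mathfrak{B}_2(\mathfrak{h})$-convergence of $k_n$), the product converges weakly in $\mathfrak{B}_2(\mathfrak{h})$ and the squared Hilbert--Schmidt norm is weakly lower semicontinuous. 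With that repair, your Steps 1--3 assemble into a complete alternative proof.
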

%%%%%%%%%End of Statement of Existence Theorem%%%%%%%%%%%%%%%%%

At this stage, two remarks are in order.

%%%%%%%%%%%%%Remark on C-symmetric operators%%%%%%%%%%%%%%%%%%%%%%%%
\begin{remark}\label{rmk:C-symm}
We will seek stationary points of $\mathcal{E}[\kbar,k]$ under the constraint $k^T = k$. We now describe a generalization of the spectral theorem for compact operators with \emph{symmetric} kernels which is invoked in the proof of Theorem~\ref{thm:Existence}. Let $\mathcal{C}$ denote the operator of complex conjugation on $\mathfrak{h}$ where
\begin{equation*}
\mathcal{C}f(x)=\overline{f(x)} \qquad  \forall\,f\in \mathfrak{h}~.
\end{equation*}
An operator $\mathcal T$ on $\mathfrak{h}$ \color{black} is called \textit{complex-symmetric} (``$\mathcal{C}$-symmetric'') if it satisfies 
\begin{equation*}
\mathcal{C}\mathcal{T}=\mathcal{T}^\ast\mathcal{C}~,
\end{equation*}
where $\mathcal{T}^\ast$ is the Hermitian conjugate of $\mathcal{T}$ ($\mathcal{T}^\ast(x,y)=\overline{\mathcal{T}(y,x)}$). Clearly, integral operators whose kernels are symmetric in their arguments are $\mathcal{C}$-symmetric. 
An important property is that any compact complex-symmetric operator $\mathcal T$ such that $\mathcal{T}^\ast\circ \mathcal{T}$ has \emph{simple spectrum} admits the decomposition
\begin{equation}\label{eq:csymmetric}
\mathcal{T} = \sum_{n=1}^\infty{a_n (u_n\otimes \mathcal{C}u_n)}~,
\end{equation}
where $a_n\in\mathbb{C}$ converge to zero as $n\to\infty$ and $\{u_n\}_{n=1}^\infty$ is an orthonormal basis of $\mathfrak{h}$. This property comes from the identity $(\mathcal{C}\mathcal{T})\circ(\mathcal{C}\mathcal{T}) = \mathcal{T}^\ast\circ \mathcal{T}$, which implies the commutation relation
$[\mathcal{C}\mathcal{T}, \mathcal{T}^\ast \circ \mathcal{T}] = 0$.
In particular, $\mathcal{C}\mathcal{T}$ commutes with the spectral measure (and any eigenprojector) of the positive operator $\mathcal{T}^\ast \circ \mathcal{T}$. Since the latter operator has simple spectrum, it follows that these two operators have the same eigenspace. This fact allows us to pass from the eigenvalue equation $(\mathcal{T}^\ast \circ \mathcal{T})(u_n,x) = c_n u_n(x)$ to the eigenvalue equation $\mathcal{C}\mathcal{T}(u_n,x) = a_n u_n(x)$; 
thus, $|a_n|^2 = c_n$.
It can be directly shown that all $\mathcal{C}$-symmetric tensor products $u\otimes v$ \color{black} must have $v(x) = \mathcal{C}u(x)$. Hence,  we can also pass from the spectral representation 
\begin{equation*}
\mathcal{T}^\ast\circ \mathcal{T} = \sum_{n=1}^\infty{c_n (u_n\otimes u_n)}
\end{equation*}
to expression~\eqref{eq:csymmetric}. This result amounts to a version of the spectral theorem for compact $\mathcal{C}$-symmetric operators; see, e.g.,~\cite{Garcia2005}. 
\end{remark}
%%%%%%%%%%%%%%End of Remark on C-symmetric operators%%%%%%%%%%%%%%%%%%%%%%%%
%%%%%%%%Remark on Lagrange mult.%%%%%%%%%%%%%%%%%%%%%%%%%%%
\begin{remark}
The reader should compare~\eqref{eq:lagrangemultiplier}, regarding the Lagrange multiplier $\lambda$, with equation (3.24) in Wu's paper~\cite{wu98}, which employs a delta-function interaction potential. The respective formulas for $\lambda(x)$ differ by a factor of $\sqrt{2}$ because of our choice of a different normalization factor for $\lambda\otimes_{\mathrm s} \phi$. 	
\end{remark}
%%%%%%%%End of Remark on Lagrange mult.%%%%%%%%%%%%%%%%%%%%%%%%%%%

We can now proceed to prove Theorem~\ref{thm:Existence}. Notably, we consider a condensate $\phi$ that is not necessarily a minimizer of the Hartree energy, $E_H$.

%
%\begin{remark}\label{rmk:weak-strong-soln}
%\color{red} A discussion on weak and strong solutions to the Riccati equation is deferred to the Appendix. The global minimizer $k$ of $\mathcal{E}[\kbar, k]$ is in fact a Hilbert-Schmidt solution to the operator Riccati equation. \color{black} 
%\end{remark}
%

%%%%%%%Proof of Existence Theorem%%%%%%%%%%%%%%%%
\begin{proof} We split the proof of Theorem~\ref{thm:Existence} into three main steps.

{\bf Step 1.} We now express the functional $\mathcal E$ in terms of a suitable basis and describe critical points, by taking into account the theory of $\mathcal C$-symmetric operators. By Remark~\ref{rmk:C-symm}, any $k$ satisfying our assumptions admits the decomposition
\begin{equation*}
k(x,x') =\sum_{j=1}^\infty {z_{j}e_{j}(x)e_{j}(x')}\ ,
\quad  e_{j}\in\phi^{\perp}~,
\end{equation*}
where $\{e_{j}(x)\}_j\subset \mathfrak{h}$ is an orthonormal basis and the coefficients $\{z_{j}\}\subset\mathbb{C}$ are such that $z_{j}\to0$ as $j\to \infty$. For the moment, we assume $|z_j|\not=1$ for all $j$ so that
\begin{equation*}
    (\delta-\overline{k}\circ k)^{-1}(x,\xprime)= \color{black} \sum_{j= 1}^\infty{\Big(\frac{1}{1-|z_j|^2}\Big)\ebar_j(x)e_j(x')}~.
\end{equation*}
The substitution of the two preceding expressions into~\eqref{eq:energy-func} for the energy furnishes
\begin{equation*}
\mathcal{E}\Big(\{e_j\}, \{z_{i}\}\Big)	=\sum_{j= 1}^\infty {\frac{1}{1-|z_{j}|^{2}}\left\{
h(\ebar_{j},e_{j})|z_{j}|^{2}+\frac{1}{2}\Big(
\vphi(\ebar_{j},\ebar_{j})z_j+\vphib(e_{j},e_{j}) \overline{z}_j\Big)\right\}}~,
\end{equation*}
where $\vphib(e_{j},e_{j})=\overline{\vphi(\ebar_{j},\ebar_{j})}$. \color{black} The derivative of $\mathcal{E}\big(\{e_j\},\{z_j\}\big)$ with respect to $\overline{z_{j}}$ reads
\begin{equation*} 
\frac{\partial}{\partial\overline{z}_{j}}\mathcal{E}\Big(\{e_{j}\},\{z_j\}\Big)=
\frac{1}{2}\sum_{j=1}^{\infty}
\frac{2h(\ebar_{j},e_{j})z_{j}+\vphib(e_{j},e_{j})+
\vphi(\ebar_{j},\ebar_{j})z_{j}^{2}}{(1-\vert z_{j}\vert^{2})^{2}}~. \color{black}
\end{equation*}
Setting $\partial\mathcal{E}/\partial\overline{z}_{j}=0$ gives two roots, viz., 
\begin{equation} \label{eq:alphai}
z^{\pm}_j=\frac{-h(\ebar_{j},e_{j})\pm\sqrt{h^{2}(\ebar_{j},e_{j})-
\vert\vphi(\ebar_{j},\ebar_{j})\vert^{2}}}{\vphi(\ebar_{j},\ebar_{j})}~.
\end{equation}
The assumption stated by~\eqref{eq:Condition1} 
%\begin{align*}
%h(\ebar,e)-\vert\vphi(\ebar_{j},\ebar_{j})\vert \geq c\Vert e\Vert_{L^{2}}
%\quad ,\quad \forall e\in\phi^{\perp}
%\end{align*}
guarantees that $\vert z^{\pm}_{j}\vert\not= 1$, provided $e_j$ is a member of the function space $\phi^\perp$; in fact, $|z_j^+|< 1$ and $|z_j^-|>1$. Regarding $\mathcal E(\{e_j\}, \{z_j\})$, notice that the summand (for fixed $j$ and $e_j=e$) is described by the function
\begin{equation*}
f(z;e):= \color{black}\frac{2h(\ebar,e)\vert z\vert^{2}+\vphi(\ebar,\ebar)\overline{z}
+\vphib(e,e)z}{1-\vert z\vert^{2}}
\end{equation*}
which takes real values with $f(0;e)=0$, while
\begin{align*}
\lim_{\vert z\vert\to 1^{-}}f(z;e)=+\infty~.
\end{align*}
Thus, the function $f(z;e_j)$
attains a  minimum at $z=z_{j}^{+}$. \color{black} On the other hand, we have
\begin{align*}
\lim_{\vert z\vert\to\infty}f(z;e)
=-2h(\ebar,e)~,
\end{align*}
and $f(z_j^{-};e_j)=-h(\ebar_j,e_j)-
\sqrt{h^{2}(\ebar_j,e_j)-\vert\vphi(\ebar_j,\ebar_j)\vert^{2}}$
which implies that $f(z;e_j)$  has a maximum at $z=z_j^{-}$ in view of
\begin{equation*}
\lim_{\vert z\vert\to 1^{+}}f(z;e)=-\infty~.
\end{equation*}

By~\eqref{eq:alphai} the evaluation of $\mathcal{E}$ with the roots $z^{\pm}_j$  yields 
\begin{equation*} 
\mathcal{E}\Big(\{(z^{\pm}_{j}\}, \{e_j\}\Big)= 
-\frac{1}{2}\sum_{j=1}^{\infty}
{\Big\{h(\ebar_{j},e_{j})\mp 
\sqrt{h^{2}(\ebar_{j},e_{j})-|\vphi(\ebar_{j},\ebar_{j}|^2}\Big\}}~.
\end{equation*} 
In light of the preceding discussion, we choose the root $z^{+}_{j}$ where $\vert z^{+}_{j}\vert <1$ and define
\begin{align*}
\mathcal{F}(e):=h(\ebar,e)-
\sqrt{h^{2}(\ebar,e)-\vert\vphi(\ebar,\ebar)\vert^{2}}~,
\end{align*}
so that the value of the functional $\mathcal E$ reads
\begin{align*}
\mathcal{E}\big(\{z_{j}^{+}\},\{e_{j}\}\big)
=-\frac{1}{2}\sum_{j=1}^{\infty}\mathcal{F}(e_{j})\ .
\end{align*}

{\bf Step 2.}  
So far, the minimization problem over compact symmetric operators has been converted into the following problem: 
\begin{align*}
\min_{\{e_{j}\},{\rm orthonormal}}\left\{-\frac{1}{2}\sum_{j=1}^{\infty}
\mathcal{F}(e_{j})\right\}=-\frac{1}{2}\max_{\{e_{j}\},{\rm orthonormal}}
\sum_{j=1}^{\infty}\mathcal{F}(e_{j})~.
\end{align*}
Next, we prove that the minimum is attained. The construction of the orthonormal set  $\{e_{j}(x)\}_{j=1}^\infty$
can be carried out inductively via a standard procedure, 
so we skip the details here. 
For simplicity, we describe the first
step. The remaining steps are similar. Our task is to maximize $\mathcal{F}(e)$, focusing on
\begin{align*}
\max_{\Vert e\Vert_{L^{2}} =1,e\in\phi^\perp}\mathcal{F}(e)~.
\end{align*}
Consider a maximizing sequence $\{e_{n}\}\in \phi^\perp$ with $\Vert e_{n}\Vert_{L^{2}}=1$
such that
\begin{align*}
\lim_{n\to\infty}\mathcal{F}(e_{n})=:\mathcal{F}_{\rm max}~. \color{black}
\end{align*}
We can assert that $\mathcal{F}_{\rm max}$ is finite, which follows from the
observation
\begin{align*}
\mathcal{F}(e)=
\frac{\vert\vphi(\ebar,\ebar)\vert^{2}}
{h(\ebar,e)+\sqrt{h^{2}(\ebar,e)-\vert\vphi(\ebar,\ebar)\vert^{2}}}
\leq \vert\vphi(\ebar,\ebar)\vert\leq \Vert\vphi\Vert_{L^{2}}
\Vert e\Vert_{L^{2}}^{2}~.
\end{align*} 
We can assume without loss of generality that $\mathcal{F}_{\rm max}>0$,
for if $\mathcal{F}_{\rm max}=0$ then 
$\vphi(\ebar,\ebar)$ vanishes identically on the set on which 
we try to maximize, namely
$\phi^\perp$.
We know that $\vphi(\ebar_{n},\ebar_{n})$ is bounded and from the
expression of $\mathcal{F}(e_{n})$ we conclude that 
$h(\ebar_{n},e_{n})$ is also bounded. Since 
$\{e_{n}\}\subset \phi^{\perp}$ we can find a subsequence 
(again denoted by $\{e_{n}\}$) which converges weakly in $\phi^{\perp}$
to some $e_{1}\in \phi^{\perp}$. Because $\phi^{\perp}$ is compactly
embedded in $\mathfrak{h}=L^2(\mathbb{R}^3)$ we conclude that 
$\{e_{n}\}$ converges strongly in $\mathfrak{h}$.
Up to this subsequence, we therefore have
\begin{equation*}
\lim_{n\to\infty}\mathcal{F}(e_{n})=\mathcal{F}(e_{1})=
\mathcal{F}_{\rm max}~.
\end{equation*}
Furthermore, we can assert that 
\begin{equation*}
h(\ebar_{n},e_{n})\to h(\ebar_{1},e_{1})~.
\end{equation*}
The reason is that $\mathcal{F}(e)$ is a decreasing
function of $h(\ebar,e)$. Thus, if 
$$
h(\ebar_{n},e_{n})\to \widetilde{h}<h(\ebar_{1},e_{1})
$$
then $\lim_{n\to\infty}\mathcal{F}(e_{n})$ will not be the maximum,
which leads to a contradiction.
In conclusion, the subsequence $\{e_{n}\}$ converges 
strongly in $\phi^{\perp}$ and 
$\mathcal{F}(e_{1})=\mathcal{F}_{\rm max}$.

Next, we check that the overall minimum is finite. Condition~\eqref{eq:Condition1} implies that $h(\ebar,e)$ is bounded below. This 
means that the the overall minimum is finite provided 
\begin{align*}
\sum_{j=1}^{\infty}\mathcal{F}(e_{j})
&\leq C\sum_{j=1}^{\infty}
\vert\vphi(\ebar_{j},e_{j})\vert^{2}
=\sum_{j=1}^{\infty}\left\vert\ 
\int
\dv x\,\dv y\,\left\{\ebar_{j}(x)\vphi(x,y)\ebar_{j}(y)\right\}\right\vert^{2}
\\
&\leq \int\dv x\,
\sum_{j=1}^{\infty}\left\vert\ \int
\dv y\,\left\{\vphi(x,y)\ebar_{j}(y)\right\}\right\vert^{2}
\leq \int 
\dv x\,\dv y\,\left\{\vert\vphi(x,y)\vert^{2}\right\}<\infty~.
\end{align*}
The last condition indeed holds. Note that $z_j^+$ is recast to the expression
\begin{align*}
z_{j}^{+}=\frac{\vphi(\ebar_{j},\ebar_{j})}
{h(\ebar_{j},e_{j})+\sqrt{h^{2}(\ebar_{j},e_{j})-\vert\vphi(\ebar_{j},\ebar_{j})
\vert^{2}}}~.
\end{align*}
Accordingly, we see that
\begin{align*}
\sum_{j=1}^{\infty}\vert z_{j}^{+}\vert^{2}\leq C
\sum_{j=1}^{\infty}\vert\vphi(\ebar_{j},\ebar_{j})\vert^{2}
\leq \Vert\vphi\Vert^{2}_{L^{2}(\mathbb{R}^{3}\times\mathbb{R}^{3})}~.
\end{align*}
Thus, $k$ is Hilbert-Schmidt.

{\bf Step 3.} So far, we showed that the minimum is attained in $\mathrm{dom}(\mathcal{E})_\perp$. We must now take into account the constraint
 $k(x,\phibar)=0$ via a Lagrange multiplier. We introduce the
 Lagrange multiplier as an operator with 
 symmetric kernel $\ell(x,y)$ where
 \begin{align*}
 \ell(x,y)=\big(\lambda\otimes_{\rm s}\phi\big)(x,y)
 :=\frac{1}{\sqrt{2}}\left\{\lambda(x)\phi(y)+\lambda(y)\phi(x)\right\}~. 
 \end{align*}
In the above, $\phi$ is the condensate wave function and $\lambda(x)$ is to be
determined. 
Hence, the modified energy functional to be minimized has the form 
\begin{align*}
\tilde{\mathcal{E}}[\kbar, k]:=\mathcal{E}[\kbar,k]-{\rm tr}
\left\{\overline{\ell}\circ k+\kbar\circ \ell\right\}~.
\end{align*}
In view of Lemma~\ref{lem:variationalderivative}, setting equal to zero the functional derivative of $\tilde{\mathcal{E}}$ with respect to $\kbar$ yields Riccati equation~\eqref{eq:k-Riccati}.
%
%\begin{align*}
%h\circ k+k\circ h^{T}+\vphi
%+k\circ\vphib\circ k=\ell~;
%\end{align*}
%
Given that $h=\mathbb{H}_{\rm H}\delta +N\gamma-\mu$ and
$\mathbb{H}_{\rm H}\phi=\mu\phi$, we compute $\lambda$ by
contracting the above equation for $k$ with $\phibar$. Thus, we obtain~\eqref{eq:lagrangemultiplier}.
%\begin{align*}
%\lambda(x)=2(k\circ\gamma^{T})(x,\phibar)
%+2\vphi(x,\phibar)-\vphi(\phibar,\phibar)\phi(x)\ .
%\end{align*}

{\em Note on $k$ as a weak solution.} We conclude the proof by showing that $k$ satisfies the definition of a weak solution (Remark~\ref{rmk:weak-soln}).  The condition that $\mathcal{E}$ is minimized implies that the first variation with respect to $\kbar$ vanishes, i.e.,
\begin{equation*}
    \mathrm{tr}\Big[\kbar_1\circ(\delta-k\circ\overline{k})^{-1}\circ\Big\{h_\perp\circ k+k\circ h^T_\perp+f_{\phi}+k\circ\overline{f_\phi}\circ k\Big\}\circ(\delta-\overline{k}\circ k)^{-1}\Big]=0~,
\end{equation*}
for all $\kbar_1\in\mathrm{dom}(\mathcal{E})$. Without loss of generality, we can make the substitution $\kbar_1 \mapsto (\delta-\overline{k}\circ k)^{-1}\circ\kbar_1\circ(\delta-\overline{k}\circ k)^{-1}$ so that the equation for vanishing first variation reads
\begin{equation*}
    \mathrm{tr}\Big[\kbar_1\circ\Big\{h_\perp\circ k+k\circ h^T_\perp+f_\phi+k\circ\overline{f_\phi}\circ k\Big\}\Big]=\mathrm{tr}\Big[\kbar_1\circ\mathrm{Ric}\Big]=0~.
\end{equation*}
If $\kbar_1 := p\otimes _{\mathrm s}r$ for $p,\,r\in\mathrm{dom}(h_\perp)$, the condition $\mathrm{tr}\Big[\kbar_1 \circ\mathrm{Ric}\Big]=0$ \color{black} translates to 
\begin{equation*}
\begin{split}
\mathrm{tr}\Big[p\otimes_{\mathrm{s}} r\circ\mathrm{Ric}\Big] &= \mathrm{tr}\Big[(k\circ p\otimes_{\mathrm{s}} r)\circ h_\perp \Big]+\mathrm{tr}\Big[(p\otimes_{\mathrm{s}} r)\circ  k\circ h_\perp\Big] \\ 
&+\mathrm{tr}\Big[(p\otimes_{\mathrm{s}} r)\circ\Big(f_\phi+k\circ\overline{f_\phi}\circ k\Big)\Big] \\
&= \sqrt{2} \big(\langle k p, h_\perp r\rangle + \langle p, (k\circ h_\perp) r\rangle + \langle p, \big(f_\phi+k\circ\overline{f_\phi}\circ k\big)r\rangle\big)~.
\end{split}
\end{equation*}
Observe that this expression is defined for any $p,\,r\in\mathrm{dom}(h_\perp)$, which is the space of test functions for the weak formulation of the Riccati equation.
\hfill $\square$
\end{proof}
%

%%%%%%%%%%%%%%%%%%%%%%%%%%%%%%%%%%%%%%%%%%%%%%%%%
\subsection{On the non-uniqueness of solution for $k$}
\label{subsec:non-unique_k}
%%%%%%%%%%%%%%%%%%%%%%%%%%%%%%%%%%%%%%%%%%%%%%%%%

Next, we discuss the important issue of the non-uniqueness of solutions for $k$ by our variational approach. The energy functional $\mathcal{E}[\kbar,k]$ has infinitely many critical points which correspond to choosing one of the two roots $z_j^\pm$, at every index $j$, in the proof of Theorem~\ref{thm:Existence}. We can show that these different choices correspond to minimax points. To this end, pick an
arbitrary $e_{1}(x)$, normalized so that $\Vert e_{1}\Vert_{\mathfrak{h}}=1$, and
let 
\begin{align*}
X^{\perp}(\phi,e_{1}):=\left\{e(x)\ \big\vert\ e\perp \{\phi,e_{1}\}\right\}~.
\end{align*}
Also, consider the subspace 
$X(e_{1}):={\rm span}(e_{1})=\left\{z_{1}e_{1}(x)\in\mathfrak{h}\ \vert\ z_{1}\in{\mathbb C}\right\}$.
Accordingly, we set up the min-max problem expressed by
\begin{align*}
\max_{e\in X(e_{1}),\vert z_{1}\vert >1}\left\{
\min_{\Vert k\vert_{X^{\perp}(\phi,e_{1})}\Vert_{\rm op}<1}
\mathcal{E}(\kbar,k)\right\}~.
\end{align*}
By repetition of the above argument (proof of Theorem~\ref{thm:Existence}), this min-max problem generates the (saddle type) critical point of the functional $\mathcal{E}(\kbar,k)$. More generally, the maximum can be taken over any finite collection of $\{(e_{j_k},z_{j_k})\}_k$, producing a unique solution for every distinct sequence.

Evidently, the only solution $k$ that obeys $\|k\|_\mathrm{op}<1$ is the one given in the proof of Theorem~\ref{thm:Existence}. Thus, we single out the choice $\{z_j = z_j^+\}$ with $|z_j^+|<1$ for all $j$ as the one yielding the unique pair-excitation kernel for our model. 
%%%%%%Remark on positive h%%%%%%%%
\begin{remark}\label{rmk:positivity}
 If $z_j$ is chosen in~\eqref{eq:alphai} such that $z_j = z_j^+$ for all $j> 0$ (i.e., $\|k\|_\mathrm{op}<1$) then
\begin{equation*}
\begin{split}
    (h+k\circ\overline{f_\phi})(\overline{e}_j,e_j) &= h(\overline{e}_j,e_j) + (k\overline{e}_j,\overline{f_\phi}e_j)= h(\overline{e}_j,e_j)+(z_j^{+})\overline{f_\phi}(e_j,e_j) \\
    &= \sqrt{h^2(\overline{e}_j,e_j)-|f_\phi(\overline{e}_j,\overline{e}_j)|^2}>0~.
\end{split}
\end{equation*}
This property will be relevant for the spectrum of the reduced Hamiltonian (Sect.~\ref{subsec:spectrum-hph}).
\end{remark}
%%%%%%Remark on positive h%%%%%%%%

%%%%%%%%%%%%%%%%%%%%%%%%%%%%%%%%%%%%%%%%%%%%%%%%%%%%%%%%
%%%%%%%%%%%%%%%%%%%%%%%%%%%%%%%%%%%%%%%%%%%%%%%%%%%%%%%%
\section{Spectrum and eigenvectors of reduced Hamiltonian} 
\label{sec:spectrum}
%%%%%%%%%%%%%%%%%%%%%%%%%%%%%%%%%%%%%%%%%%%%%%%%%%%%%%%%
%%%%%%%%%%%%%%%%%%%%%%%%%%%%%%%%%%%%%%%%%%%%%%%%%%%%%%%%
In this section, we describe the spectrum and eigenvectors of the reduced transformed Hamiltonian $\tHap$ in the $N$-th sector of Fock space, $\FF_N$  (see Sect.~\ref{subsec:Ricc-deriv}). For this purpose, we decompose $\FF_N$ into suitable orthogonal subspaces. A similar technique is used in~\cite{Lewin2014} in connection to the Bogoliubov Hamiltonian which does not conserve the particle number.

We start by writing the transformed, quadratic non-Hermitian Hamiltonian as
\begin{subequations}\label{eqs:H-phon-def}
\begin{align}
\tHap=NE_H+\mathcal{\Hcal}_{\rm ph}~,\quad \mathcal{\Hcal}_{\rm ph}:=
h_{\rm ph}\big(a^{\ast}_{\perp},a_{\perp}\big)
+\frac{1}{N}(a^{\ast}_\phi)^{2}\,
\vphib\big(a_{\perp},a_{\perp}\big)~\label{eq:Hphon-F} 
\end{align}
where $\vphi(a_{\perp}^\ast,a_{\perp}^\ast)$ (and thus $\vphib(a_{\perp},a_{\perp})$) is defined by~\eqref{eq:h-op-def} with~\eqref{eq:h-def}, and
\begin{align}
h_{\rm ph}:=h+k\circ\vphib~. \label{eq:hphon}
\end{align}
\end{subequations}
The operator $h_{\rm ph}\big(a^{\ast}_{\perp},a_{\perp}\big)$ forms the diagonal part of $\Hcal_{\rm ph}$ and is non-Hermitian. 
 
 The main result of this section is expressed by the following theorem.
 %
 %%%%THEOREM ON EIGENVALUES AND EIGENVECTORS%%%%%%%%%%%%%%%%%%%%%%%%%%%%
 \begin{theorem}\label{thm:spectrum-phonon}
 Consider the operators $\Hcal_{\rm ph}$ and $h_{\rm ph}(a^{\ast}_{\perp},a_{\perp})$
 restricted on $\FF_N$. Then 
 \begin{align*}
 \sigma\left(\Hcal_{\rm ph}\big\vert_{\FF_{N}}\right)
 =\sigma\left(h_{\rm ph}(a^{\ast}_{\perp},a_{\perp})\big\vert_{\FF_{N}}\right)~.
 \end{align*}
 Moreover, for each eigenvector $\vert\Omega\rangle_{N}\in \FF_N$ 
 of $h_{\rm ph}(a^{\ast}_{\perp},a_{\perp})$ with eigenvalue $E$ there exists
 a unique eigenvector $\vert\Psi(\Omega)\rangle_{N}\in \FF_N$ of
 $\Hcal_{\rm ph}$ such that
 \begin{align*}
 \Hcal_{\rm ph}\vert\Psi(\Omega)\rangle_{N}
 =E\vert\Psi(\Omega)\rangle_{N}~.
 \end{align*}
 \end{theorem}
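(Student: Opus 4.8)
The plan is to show that $\mathcal H_{\rm ph}$ on $\FF_N$ is, up to an upper-triangular perturbation, equal to its diagonal part $h_{\rm ph}(a^\ast_\perp,a_\perp)$, and to convert the eigenvalue equation into a finite-dimensional triangular linear system. First I would introduce the decomposition of $\FF_N$ into orthogonal subspaces graded by the number of noncondensate particles $\Ncal_\perp$: writing $\FF_N=\bigoplus_{m=0}^{N}\FF_N^{(m)}$, where $\FF_N^{(m)}$ consists of states with exactly $m$ excited particles and $N-m$ condensate particles (this is the analogue of the construction in~\cite{Lewin2014}, but now within a single particle-number sector since $[\Hcal_{\rm ph},\Ncal]=0$). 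On each $\FF_N^{(m)}$ the operator $a^\ast_\phi$ acts as (a multiple of) a shift $\FF_N^{(m)}\to\FF_N^{(m-1)}$ tensored against a raising of condensate occupation, and $a_{\phibar}$ acts the opposite way. Crucially, $h_{\rm ph}(a^\ast_\perp,a_\perp)$ preserves each $\FF_N^{(m)}$, while the off-diagonal term $\tfrac1N (a^\ast_\phi)^2\,\vphib(a_\perp,a_\perp)$ maps $\FF_N^{(m)}\to\FF_N^{(m-2)}$. Hence, in the grading by $m$ (ordered decreasingly), $\mathcal H_{\rm ph}$ has block upper-triangular form with diagonal blocks given exactly by $h_{\rm ph}(a^\ast_\perp,a_\perp)\big|_{\FF_N^{(m)}}$. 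The spectral identity $\sigma(\mathcal H_{\rm ph}|_{\FF_N})=\sigma(h_{\rm ph}(a^\ast_\perp,a_\perp)|_{\FF_N})$ then follows from the fact that the spectrum of a (bounded-below, discrete) operator with block upper-triangular structure is the union of the spectra of its diagonal blocks — here I would invoke that $h_{\rm ph}$ is a one-particle Schr\"odinger-type operator with positive discrete spectrum and a Riesz basis of eigenfunctions (Sect.~\ref{subsec:spectrum-hph}, using $\|k\|_{\rm op}<1$), so that $h_{\rm ph}(a^\ast_\perp,a_\perp)$ on each $\FF_N^{(m)}$ is diagonalizable with discrete spectrum, and the off-diagonal perturbation is nilpotent as an operator on the finite-length filtration.

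Next I would construct the eigenvector. Given an eigenvector $|\Omega\rangle_N$ of $h_{\rm ph}(a^\ast_\perp,a_\perp)$ with eigenvalue $E$, it lives in some fixed sector $\FF_N^{(m_0)}$ (or a finite sum of such, but by diagonalizability one may take it homogeneous). I seek $|\Psi(\Omega)\rangle_N = |\Omega\rangle_N + \sum_{j\ge 1}|\chi_j\rangle_N$ with $|\chi_j\rangle_N\in\FF_N^{(m_0-2j)}$, solving $\mathcal H_{\rm ph}|\Psi\rangle_N=E|\Psi\rangle_N$. Projecting onto each sector $\FF_N^{(m_0-2j)}$ yields the recursion
\begin{equation*}
\big(h_{\rm ph}(a^\ast_\perp,a_\perp)-E\big)|\chi_j\rangle_N = -\tfrac1N (a^\ast_\phi)^2\,\vphib(a_\perp,a_\perp)\,|\chi_{j-1}\rangle_N~,\qquad |\chi_0\rangle_N:=|\Omega\rangle_N~.
\end{equation*}
The point is that the eigenvalue $E$ of $h_{\rm ph}$ restricted to $\FF_N^{(m_0)}$ need \emph{not} belong to the spectrum of $h_{\rm ph}(a^\ast_\perp,a_\perp)$ restricted to the lower sectors $\FF_N^{(m_0-2j)}$ for $j\ge1$ — but this is not automatic, so here I would either argue generically (the discrete spectra of the shifted operators are disjoint for generic data) or, more robustly, note that if $E$ \emph{does} lie in a lower block's spectrum one solves the recursion in the complement of the corresponding eigenspace and adjusts by an element of that eigenspace to restore solvability; since the filtration has finite length ($j\le m_0/2$), the process terminates and produces a genuine eigenvector. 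Uniqueness of $|\Psi(\Omega)\rangle_N$ (given the normalization that its $\FF_N^{(m_0)}$-component equals $|\Omega\rangle_N$) follows because each step of the recursion determines $|\chi_j\rangle_N$ up to a kernel element, and the triangular structure leaves no freedom once one fixes the top component — formally, the map $|\Omega\rangle_N\mapsto|\Psi(\Omega)\rangle_N$ is $(\mathrm{Id}+\mathcal N)$ for a nilpotent $\mathcal N$, hence a bijection.

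I would then package this more cleanly by exhibiting the similarity transformation: on $\FF_N$ define $\mathcal T:=\exp(\mathcal R)$ where $\mathcal R$ is the nilpotent lower-triangular operator built from the blocks $(h_{\rm ph}(a^\ast_\perp,a_\perp)-E)^{-1}$-type inverses assembled above, chosen so that $\mathcal T^{-1}\mathcal H_{\rm ph}\mathcal T = h_{\rm ph}(a^\ast_\perp,a_\perp)$ on $\FF_N$ (block-diagonalization of a triangular operator with the off-diagonal part removed). Existence of such $\mathcal T$ is exactly the content of the recursion, and because $\mathcal R$ is nilpotent $\mathcal T$ is a well-defined bounded invertible operator on $\FF_N$; the spectral equality and the bijective correspondence of eigenvectors are then immediate, with $|\Psi(\Omega)\rangle_N=\mathcal T|\Omega\rangle_N$. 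The main obstacle I anticipate is precisely the solvability of the recursion when $E$ is a shared eigenvalue across sectors: one must verify a Fredholm-alternative-type compatibility condition (the right-hand side is orthogonal, in the appropriate non-orthogonal Riesz-basis sense coming from $J$-self-adjointness, to the relevant eigenspace), or else absorb the resonant part into the definition of $|\chi_j\rangle_N$; handling this carefully — and tracking that the non-Hermiticity of $h_{\rm ph}$ forces one to use left/right eigenvectors and the biorthogonal Riesz system rather than an orthonormal basis — is where the real work lies. A secondary technical point is confirming that $\vphib(a_\perp,a_\perp)$ maps $\FF_N^{(m)}\cap(\text{finite }\Ncal_\perp)$ boundedly into $\FF_N^{(m-2)}$, which follows from $\vphi\in\mathfrak B_2(\mathfrak h)$ and the standard second-quantization bound $\|\vphib(a_\perp,a_\perp)|_{\FF^{(m)}}\|\le C m\|\vphi\|_2$.
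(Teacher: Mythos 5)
Your overall architecture coincides with the paper's: the same grading of $\FF_N$ by the number of noncondensate particles (realized in the paper via the operators $\U_n,\U_n^\ast$ of Definition~\ref{def:UUstar} and Lemmas~\ref{lem:identity-P}--\ref{lem:U-prod}), the same observation that $\Hcal_{\rm ph}$ is block upper triangular with diagonal blocks given by $h_{\rm ph}(a^{\ast}_{\perp},a_{\perp})$ and off-diagonal coupling $\FF_{N,n+2}\to\FF_{N,n}$, the same deduction of the spectral equality from the top nonzero component, and the same descending finite recursion for the components of the eigenvector.

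The genuine gap is the solvability of that recursion, which you correctly flag as ``where the real work lies'' but then leave unresolved; neither of your proposed fixes would establish the theorem as stated. A generic-position argument proves nothing for fixed data, and the suggestion to ``adjust by an element of that eigenspace to restore solvability'' is a non sequitur: adding a kernel element of $h_{\rm ph}(a^{\ast}_{\perp},a_{\perp})-E$ to $|\chi_j\rangle_N$ does not change whether the inhomogeneous term lies in the range. The actual resolution is that no resonance ever occurs, because of the structure of the right-hand side. Taking $\Omega$ homogeneous, $\Omega=\prod_{p}\omega_{j_p}(x_p)$ with $E=\sum_p E_{j_p}$, the contraction $\vphib(a_\perp,a_\perp)$ applied to the previous component produces a combination of the products $\Omega_{l,m,\dots}$ obtained by deleting pairs of factors from $\Omega$; on each such product the operator $h_{\rm ph}(a^{\ast}_{\perp},a_{\perp})-E$ acts as multiplication by $-\big(E_{j_{l_1}}+E_{j_{m_1}}+\cdots\big)$, which is strictly negative because $\sigma(h_{\rm ph})\subset(0,\infty)$ under $\Vert k\Vert_{\rm op}<1$ (Remark~\ref{rmk:positivity}, Lemma~\ref{lem:hphon-spectrum}). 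Hence the Fredholm compatibility you worry about is automatic, and each component is uniquely determined within that span by dividing by a nonzero sum of positive eigenvalues, e.g.\ $c_{l,m}\big(E_{j_l}+E_{j_m}\big)=b_{2}\,\vphib(\omega_{j_l},\omega_{j_m})$ at the first step. This positivity input, inherited from the choice of the root $z_j^+$ in Theorem~\ref{thm:Existence}, is precisely what makes the construction and the uniqueness of $|\Psi(\Omega)\rangle_N$ go through; your nilpotent similarity-transform packaging is harmless but rests on the same unproved invertibility, so it does not close the gap.
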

 %
 %%%%END OF THEOREM ON EIGENVALUES AND EIGENVECTORS%%%%%%%%%%%%%%%%%%%%%

 Before we give a proof of Theorem~\ref{thm:spectrum-phonon}, we need to provide a few useful results. In Sect.~\ref{subsec:spectrum-lemmas}, we develop a formalism for the decomposition of $\FF_N$ into orthogonal subspaces. A key ingredient of our approach is the use of Fock space techniques. In Sect.~\ref{subsec:spectrum-hph}, we show that the spectrum of $h_{\rm ph}$ is discrete. In Sect.~\ref{subsec:proof-spectrum-thm}, we use this machinery (theory of Sects.~\ref{subsec:spectrum-lemmas} and~\ref{subsec:spectrum-hph}) to prove Theorem~\ref{thm:spectrum-phonon}. In our proof, we describe an explicit construction of the 
 eigenvectors of $\Hcal_{\rm ph}$ restricted on $\FF_{N}$ in terms of eigenvectors of $h_{\rm ph}(a_\perp^\ast, a_\perp)$. This construction invokes the discrete spectrum of $h_{\rm ph}$. 
 
%%%%%%%%%%%%%%%%%%%%%%%%%%%%%%%%%%%%%%%%%%%%%%%%%%%%%%%%
\subsection{Decomposition of $\FF_N$ and two related lemmas} 
\label{subsec:spectrum-lemmas}
%%%%%%%%%%%%%%%%%%%%%%%%%%%%%%%%%%%%%%%%%%%%%%%%%%%%%%%%
Next, we set the stage for the proof of Theorem~\ref{thm:spectrum-phonon}. We use the symbol $\vert\psi\rangle_n$ to denote the
vector  of $\FF$ with entry $\psi_{n}(x_{1},\,\ldots,\, x_{n})$ in the $n$-th slot and zero elsewhere. Let us also introduce the projection
$\Pcal_{N}\ :\ \FF\mapsto \FF_{N}$. \color{black}

The vector $\psi_{N}(x_{1},\,\ldots,\, x_{N})$ can be decomposed
as a direct sum according to
\begin{align*}
\psi_{N}=\sum_{n=0}^{N}\psi_{N,n}\otimes_{\rm s}
\big(\otimes^{N-n}\phi\big)=:\sum_{n=0}^{N}\psi^{\phi}_{N,n}~,\qquad \otimes^{p}\phi:=\prod_{j=1}^{p}\phi(x_{j})~,
\end{align*}
where the vectors $\psi_{N,n}(x_{1},\,\ldots\,, x_{n})$ 
satisfy the orthogonality relations ($n=1,2\ldots N$)
\begin{align*}
&\int \dv x\ \left\{\phibar(x)\psi_{N,n}(x,\,x_{2},\,\ldots\,, x_{n})\right\}=0~.
\end{align*}
We also define $\vert\psi^{\phi}_{N,n}\rangle_N:=
\big(0,\,\ldots,\, 0,\,\psi^{\phi}_{N,n},\,\ldots\big)\in \FF_{N}$.
Hence, the vector $\vert\psi^{\perp}\rangle
:=\big(\psi_{N,0},\,\psi_{N,1},\,\ldots,\, \psi_{N,N},\,0,\,\ldots\big)$
describes fluctuations around the tensor product 
$\otimes^{N}\phi$ (pure condensate). This means that we can decompose the space $\FF_{N}$ into the following direct sum 
of orthogonal subspaces: 
\begin{equation*}
\FF_{N}=\oplus_{n=0}^{N}\FF_{N,n}~;\qquad  \FF_{N,n}={\rm span}_{\psi_{n}\perp\phi}
\big(0,\,\ldots,\,0,\,\psi_{n}\otimes_{\mathrm{s}}(\otimes^{N-n}\phi),\,0\ldots\big)~. 
\end{equation*}

To describe this decomposition, we consider the number operator $\Ncal_{\phi}=a^*_\phi a_{\bar\phi}$ for the condensate. We have 
\begin{align*}
\Ncal_{\phi}\big\vert_{\FF_{N,n}}=(N-n)\I\big\vert_{\FF_{N,n}}
\quad ,\quad 0\leq  n\leq N~,
\end{align*}
where $\I$ is the identity operator on $\FF$. \color{black}
In other words, $\FF_{N,n}$ are 
the eigenspaces of $\Ncal_{\phi}$ restricted to $\FF_{N}$.
To see how the decomposition works, we invoke the identity 
\begin{align*}
\Pcal_{N}=\sum_{n=0}^{N}\frac{(-1)^{N-n}}{(N-n)!n!}
\prod_{p=0\atop p\not= N-n}^{N}\big(\Ncal_{\phi}-p\I\big)
\Pcal_{N}~,
\end{align*}
which can be understood as a resolution of the identity on $\FF_{N}$.
By introducing the projection operator (projection on $\FF_{n}$)
via the polynomial
$$
P_{n}(z):=\frac{(-1)^{n}}{n!}\prod_{j=1}^{n}
(z-j)~,
$$
we define
\begin{align*}
\Pcal_{n,n}:=P_{n}(\Ncal_{\phi})=\frac{(-1)^{n}}{n!}
\prod_{j=1}^{n}\big(\Ncal_{\phi}-j\I\big)~.  
\end{align*}
Thus, $\Pcal_{n,n}$ is the projection $\Pcal_{n,n}:\FF_{n}\mapsto\FF_{n,n}$.

At this stage, we can state the first lemma of this section as follows.

%%%%%Lemma%%%%%%%%%%%%%%%%%%%%%%%
\begin{lemma}\label{lem:identity-P}
The operator $\Pcal_{n,n}=P_n(\Ncal_\phi)$ satisfies the factorization \color{black}
\begin{equation*}
\sum_{n=0}^{N}\frac{(-1)^{N}}{(N-n)!}\,{a^{\ast}_{\phi}}^{N-n}
\Pcal_{n,n} a_{\phibar}^{N-n}
=\sum_{n=0}^{N}\frac{(-1)^{N-n}}{(N-n)!\,n!}
\prod_{p=0\atop p\not= N-n}^{N}\big(\Ncal_{\phi}-p\I\big)
\end{equation*}
which, restricted to $\FF_{N}$,
describes a resolution of $\FF_{N}$ into orthogonal
subspaces.
\end{lemma}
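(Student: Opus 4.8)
The plan is to verify the claimed operator identity directly by expanding both sides in terms of the spectral decomposition of $\Ncal_\phi$, and then to read off the resolution-of-identity statement on $\FF_N$ from the known decomposition $\FF_N=\oplus_{n=0}^N\FF_{N,n}$. First I would recall the elementary fact that on $\FF$ one has $a_{\phibar}^{\,m}\,{a^{\ast}_{\phi}}^{\,m}$ acting on a vector with $\Ncal_\phi = \ell$ as the scalar $(\ell+m)(\ell+m-1)\cdots(\ell+1)$, and correspondingly ${a^{\ast}_{\phi}}^{\,m} a_{\phibar}^{\,m}$ acts as $\ell(\ell-1)\cdots(\ell-m+1) = \Ncal_\phi(\Ncal_\phi-\I)\cdots(\Ncal_\phi-(m-1)\I)$, since $a_\phi^\ast, a_{\phibar}$ only involve the condensate mode. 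Hence ${a^{\ast}_{\phi}}^{N-n} a_{\phibar}^{N-n}$ is a polynomial in $\Ncal_\phi$, namely $\prod_{j=1}^{N-n}(\Ncal_\phi-(j-1)\I) = \prod_{p=0}^{N-n-1}(\Ncal_\phi - p\I)$.

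Next I would substitute this, together with the definition $\Pcal_{n,n}=P_n(\Ncal_\phi)=\frac{(-1)^n}{n!}\prod_{j=1}^n(\Ncal_\phi-j\I)$, into the left-hand side of the asserted identity. The left side becomes $\sum_{n=0}^N \frac{(-1)^N}{(N-n)!}\,\Big(\prod_{p=0}^{N-n-1}(\Ncal_\phi-p\I)\Big)\cdot\frac{(-1)^n}{n!}\prod_{j=1}^n(\Ncal_\phi-j\I)$. The first product runs over $p=0,1,\dots,N-n-1$ and the second over $p=1,\dots,n$; I would check that their concatenation is exactly $\prod_{p=0,\ p\neq N-n}^{N}(\Ncal_\phi-p\I)$ — this is a bookkeeping step: the combined set of shifts is $\{0,1,\dots,N-n-1\}\cup\{1,\dots,n\}$, and one must observe that as an \emph{operator identity restricted to $\FF_N$} (where $\Ncal_\phi$ has eigenvalues $0,1,\dots,N$) the missing factor $(\Ncal_\phi-(N-n)\I)$ and the duplication of factors in $\{1,\dots,\min(N-n-1,n)\}$ both vanish or are absorbed appropriately; the cleanest route is to evaluate both sides on each eigenspace $\FF_{N,n'}$, where $\Ncal_\phi=(N-n')\I$, and confirm agreement term by term. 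Matching the sign $(-1)^N = (-1)^{N-n}(-1)^n$ then gives precisely the right-hand side $\sum_{n=0}^N \frac{(-1)^{N-n}}{(N-n)!\,n!}\prod_{p=0,\ p\neq N-n}^N(\Ncal_\phi-p\I)$.

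Finally, to conclude that the restriction to $\FF_N$ is a resolution into orthogonal subspaces, I would evaluate the right-hand side on $\FF_{N,n'}$: there $\Ncal_\phi=(N-n')\I$, so the only surviving summand is $n=n'$ (every other term contains the factor $(\Ncal_\phi-(N-n')\I)=0$), and that surviving term equals $\frac{1}{(n')!\,(N-n')!}\prod_{p=0,\ p\neq n'}^N((N-n')-p) = 1$ after the standard factorial cancellation. Hence the operator acts as the identity on each $\FF_{N,n'}$ and therefore as $\I$ on $\FF_N = \oplus_{n'=0}^N \FF_{N,n'}$; equivalently each summand $\tfrac{(-1)^N}{(N-n)!}{a^\ast_\phi}^{N-n}\Pcal_{n,n}a_{\phibar}^{N-n}$ restricted to $\FF_N$ is the orthogonal projection onto $\FF_{N,n}$, which is what the statement asserts. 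The main obstacle I anticipate is purely combinatorial: correctly handling the overlapping ranges of the two products of $(\Ncal_\phi-p\I)$ factors and the ``missing'' index $p=N-n$, so that the operator identity holds on the nose rather than merely on $\FF_N$; evaluating on eigenspaces sidesteps this, at the cost of making the identity a statement about the restriction to $\FF_N$ — which is exactly how it is phrased.
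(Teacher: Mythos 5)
Your direct computation of the left-hand side contains a genuine error: you substitute ${a^{\ast}_{\phi}}^{N-n}\Pcal_{n,n}\,a_{\phibar}^{N-n}$ as if it equaled ${a^{\ast}_{\phi}}^{N-n}a_{\phibar}^{N-n}\cdot P_n(\Ncal_\phi)$, i.e.\ $\prod_{p=0}^{N-n-1}(\Ncal_\phi-p\I)\cdot\frac{(-1)^n}{n!}\prod_{j=1}^n(\Ncal_\phi-j\I)$. But $\Pcal_{n,n}=P_n(\Ncal_\phi)$ sits \emph{between} the creation and annihilation strings, and commuting it past $a_{\phibar}^{N-n}$ shifts its argument: $P_n(\Ncal_\phi)\,a_{\phibar}^{N-n}=a_{\phibar}^{N-n}\,P_n\big(\Ncal_\phi-(N-n)\I\big)$. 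The correct second factor is therefore $\frac{(-1)^n}{n!}\prod_{j=1}^{n}\big(\Ncal_\phi-(N-n+j)\I\big)$, whose index set $\{N-n+1,\dots,N\}$ is \emph{disjoint} from $\{0,\dots,N-n-1\}$ and whose union with it is exactly $\{0,\dots,N\}\setminus\{N-n\}$. With that shift the products concatenate on the nose and the identity holds as a genuine polynomial identity in $\Ncal_\phi$ -- this commutation step is precisely the mechanism of the paper's proof, and it dissolves the ``overlapping ranges / missing factor'' obstacle you flag but do not resolve. Without it, the expression you wrote down is simply not equal to the left-hand side, so the ``bookkeeping step'' you defer cannot be completed as stated.

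Your fallback -- evaluating both sides on the $\Ncal_\phi$-eigenspaces $\FF_{N,n'}$ -- is legitimate, since each summand ${a^{\ast}_{\phi}}^{N-n}\Pcal_{n,n}a_{\phibar}^{N-n}$ preserves these eigenspaces and the right-hand side is a polynomial in $\Ncal_\phi$; on $\FF_{N,n'}$ only the $n=n'$ term survives on either side ($a_{\phibar}^{N-n}$ kills the term for $n<n'$ and $P_n$ kills it for $n>n'$), and both evaluate to the same constant. But you only carry out the right-hand side, and there you drop the sign: with the coefficient $(-1)^{N-n}/\big((N-n)!\,n!\big)$ the surviving term equals $(-1)^{N}$, not $1$ (consistently, the paper writes $\Pcal_N=(-1)^N\sum_n\U_n^\ast\U_n\vert_{\FF_N}$ immediately afterwards). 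Also note that this route only proves the identity restricted to $\FF_N$, whereas the lemma asserts (and the paper's commutation argument delivers) an unrestricted operator identity; the restriction is all that is used later, so this is a minor weakening, but the central algebraic step of your proposal needs the commutation shift to be correct.
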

%%%

%%%%%%%%PROOF OF LEMMA 3%%%%%%%%%%%%%%%%%%%%%%%%
\begin{proof}
First, we note the useful identities
\begin{align*}
{a_{\phibar}}^{p}\Ncal_{\phi}=\big(\Ncal_{\phi}+p\I\big)
{a_{\phibar}}^{p}~,\quad 
{a^{\ast}_{\phi}}^{p}\Ncal_{\phi}
=\big(\Ncal_{\phi}-k\I\big){a^{\ast}_{\phi}}^{p}\qquad (p=0,\,1,\,\ldots N)~.
\end{align*}
Subsequently, for any polynomial $P(z)$ we can assert that
\begin{align*}
P(\Ncal_{\phi}) {a_{\phibar}}^{p}={a_{\phibar}}^{p}P\big(\Ncal_{\phi}-p\I\big)~,\quad 
P(\Ncal_{\phi}){a^{\ast}_{\phi}}^{p}=
{a^{\ast}_{\phi}}^{p}P\big(\Ncal_{\phi}+p\I\big)~.
\end{align*}
Moreover, we have the following formulas: 
\begin{align*}
a_{\phibar}^{n}{a^{\ast}_{\phi}}^{n}
=\prod_{p=1}^{n}\big(\Ncal_{\phi}+p\I\big)~,\quad 
{a^{\ast}_{\phi}}^{n}a_{\phibar}^{n}
=\prod_{p=0}^{n-1}\big(\Ncal_{\phi}-p\I\big)~.
\end{align*}
By using the above relations, we write
\begin{align*}
&\sum_{n=0}^{N}\frac{(-1)^{N}}{(N-n)!}{a^{\ast}_{\phi}}^{N-n}
P_{n}(\Ncal_{\phi})a_{\phibar}^{N-n}=\sum_{n=0}^{N}
\frac{(-1)^{N}}{(N-n)!}
{a^{\ast}_{\phi}}^{N-n}
{a_{\phibar}}^{N-n}
P_{n}\big(\Ncal_{\phi}-(N-n)\I\big)
\\
&=\sum_{n=0}^{N}
\frac{(-1)^{N}(-1)^{n}}{(N-n)!n!}
\prod_{j=0}^{N-n-1}\big(\Ncal_{\phi}-j\I\big)
\prod_{p=1}^{n}
\big(\Ncal_{\phi}+(n-p-N)\I\big) \\
&=\sum_{n=0}^{N}
\frac{(-1)^{N-n}}{(N-n)!\,n!}
\prod_{j=0\atop j\not=N-n}^N\big(\Ncal_{\phi}-j\I\big)=\Pcal_{N}~.
\end{align*}
We can show that $\Pcal_{n,n}$ is a projection. Indeed, notice that 
$\Ncal_{\phi}P_{n}(\Ncal_{\phi})\big\vert_{\FF_{n}}=0$,
and if $\vert\psi\rangle_{n}\in\FF_{n}$ then 
$\Pcal_{n,n}\vert\psi\rangle_{n}=\vert\psi_{n,n}\rangle_{n}$.
If we consider the decomposition
\begin{align*}
\psi_{n}=\sum_{p=0}^{n}\psi_{n,p}\otimes_{s}\big(\otimes^{n-p}\phi\big)
=\sum_{p=0}^{n}\psi^{\phi}_{n,p}
\end{align*}
then $\Pcal_{n,n}$ applied to $\big\vert\psi\big>_{n}$
produces a vector in $\FF_{n,n}$, where $\FF_{n,p}$  is the decomposition of $\FF_{n}$ for
$p=0,1\ldots n$. \hfill $\square$
\end{proof}
%%%%%%%%%%%%%END OF PROOF OF LEMMA 3%%%%%%%%%%%%%%%%%

For later algebraic convenience, we give the following definition (cf.~\cite{Lewin2014}).

%%%%%%%%Definition of U and Ustar%%%%%%%%%%%%%%
\begin{definition}\label{def:UUstar}
Consider the operators $\U_{n},\,\U^{\ast}_{n}:\FF_{N}\mapsto \FF_{N}$ 
given by 
\begin{align*}
\U_{n}:=\Pcal_{n,n}\frac{a_{\phibar}^{N-n}}{\sqrt{(N-n)!}}~,\quad 
\U_{n}^{\ast}:=
\frac{{a^{\ast}_{\phi}}^{N-n}}{\sqrt{(N-n)!}}
\Pcal_{n,n}~;\quad   n=0,\,1,\,\ldots,\, N~.
\end{align*}
\end{definition}

By Definition~\ref{def:UUstar}, the result of Lemma~\ref{lem:identity-P} implies that 
\begin{align*}
\Pcal_{N}=\sum_{n=0}^{N}
\frac{(-1)^{N}}{(N-n)!}
{a^{\ast}_{\phi}}^{N-n}\Pcal_{n,n}\Pcal_{n.n}
{a_{\phibar}}^{N-n}\Pcal_{N}&=(-1)^{N}
\sum_{n=0}^{N}\frac{{a^{\ast}_{\phi}}^{N-n}\Pcal_{n,n}}{\sqrt{(N-n)!}}
\frac{\Pcal_{n,n}{a_{\phibar}}^{N-n}}{\sqrt{N-n)!}}\Pcal_{N}
\\
&=(-1)^{N}\sum_{n=0}^{N}\U_{n}^{\ast}\U_{n}\Big\vert_{\FF_{N}}~.
\end{align*}
The key relations following from this decomposition are 
\begin{align*}
\U_{n}\vert\psi\rangle_{N}=\vert\psi_{N,n}\rangle_{n}~,\quad 
\U^{\ast}_{n}\vert\psi_{N,n}\rangle_{n}=
\vert \psi_{N,n}^{\phi}\rangle_{N}\quad {\rm and}
\quad \sum_{n=0}^{N}\vert\psi^{\phi}_{N,n}\rangle_N=\vert\psi\rangle_{N}~.
\end{align*}
Hence, we have $\U_{n}:\FF_{N}\mapsto \FF_{n,n}$ and $\U^{\ast}_{n}:\FF_{n}\mapsto \FF_{N,n}$. \color{black}

%At this point, we can provide the following useful lemma.
%%%%%%%%LEMMA 4%%%%%%%%%%%%%%%
\begin{lemma}\label{lem:U-prod} 
The operators $\{\U_{n}\}_{n=0}^{N}$ (see Definition~\ref{def:UUstar}) satisfy the relation
\begin{align*}
\U_{m}\U^{\ast}_{n}=\delta_{m,n}\Pcal_{m,m}\Pcal_{n,n}~.
\end{align*}
\end{lemma}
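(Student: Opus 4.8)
The plan is to compute both products $\U_m\U_n^\ast$ directly from Definition~\ref{def:UUstar} and exploit the structure already established: the operators $\Pcal_{n,n}=P_n(\Ncal_\phi)$ are projections onto $\FF_{n,n}$, and powers of $a_{\phibar}$ and $a_\phi^\ast$ shift the condensate-particle count in a controlled way. Writing everything out,
\[
\U_m\U_n^\ast = \Pcal_{m,m}\frac{a_{\phibar}^{N-m}}{\sqrt{(N-m)!}}\cdot\frac{{a^{\ast}_{\phi}}^{N-n}}{\sqrt{(N-n)!}}\Pcal_{n,n}~,
\]
so the core of the argument is to understand the composite operator $a_{\phibar}^{N-m}{a^{\ast}_{\phi}}^{N-n}$ sandwiched between the two projections.

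The key steps, in order, are as follows. First I would dispose of the case $m\neq n$. Since $\Pcal_{n,n}$ projects onto $\FF_{n,n}$, the range of $\U_n^\ast$ lies in $\FF_{N,n}$, the eigenspace of $\Ncal_\phi$ in $\FF_N$ with eigenvalue $N-n$. Applying $a_{\phibar}^{N-m}$ to a vector with exactly $N-n$ condensate particles produces, if $N-m\le N-n$, a vector with $N-n-(N-m)=m-n$ condensate particles (and zero if $N-m>N-n$). But $\U_m=\Pcal_{m,m}a_{\phibar}^{N-m}/\sqrt{(N-m)!}$ further projects onto $\FF_{m,m}$, i.e.\ onto the eigenspace of $\Ncal_\phi$ with eigenvalue zero in $\FF_m$. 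Hence $\U_m\U_n^\ast$ vanishes unless $m-n=0$, which gives the Kronecker delta. (Equivalently: by the commutation identities in the proof of Lemma~\ref{lem:identity-P}, $P_m(\Ncal_\phi)$ commutes past the creation/annihilation powers up to a shift of argument, and the shifted polynomial annihilates the relevant subspace unless $m=n$.)

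Second, for $m=n$ I would use the factorization formula ${a^{\ast}_{\phi}}^{N-n}a_{\phibar}^{N-n}=\prod_{p=0}^{N-n-1}(\Ncal_\phi-p\I)$ — but here the order is $a_{\phibar}^{N-n}{a^{\ast}_{\phi}}^{N-n}=\prod_{p=1}^{N-n}(\Ncal_\phi+p\I)$, the other identity recorded in that same proof. Thus
\[
\U_n\U_n^\ast = \frac{1}{(N-n)!}\,\Pcal_{n,n}\Big(\prod_{p=1}^{N-n}(\Ncal_\phi+p\I)\Big)\Pcal_{n,n}~.
\]
Now $\Pcal_{n,n}=P_n(\Ncal_\phi)$ is supported (as a nonzero projection) on the subspace where $\Ncal_\phi$ acts as the operator reduced to $\FF_{n,n}$, on which $\Ncal_\phi=0$ when restricted appropriately — more carefully, one uses that $P_n(\Ncal_\phi)$ commutes with $\Ncal_\phi$, and on its range (inside $\FF_{n,n}\subset\FF_n$) one has $\Ncal_\phi=0$, so $\prod_{p=1}^{N-n}(\Ncal_\phi+p\I)$ acts there as $\prod_{p=1}^{N-n}p\I=(N-n)!\,\I$. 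The factor $(N-n)!$ cancels the prefactor, leaving $\U_n\U_n^\ast=\Pcal_{n,n}\Pcal_{n,n}=\Pcal_{n,n}$ (idempotency), which matches $\delta_{n,n}\Pcal_{n,n}\Pcal_{n,n}$ on the diagonal. Combining the two cases yields $\U_m\U_n^\ast=\delta_{m,n}\Pcal_{m,m}\Pcal_{n,n}$, as claimed.

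I expect the main obstacle to be bookkeeping the interaction between the projections $\Pcal_{n,n}$ and the non-commuting powers $a_{\phibar}^{N-m}$, ${a^\ast_\phi}^{N-n}$ — specifically, justifying cleanly that $\prod_{p=1}^{N-n}(\Ncal_\phi+p\I)$ collapses to the scalar $(N-n)!$ on the range of $\Pcal_{n,n}$, and that the $m\neq n$ case genuinely gives zero rather than merely a projection onto a trivial intersection. Both reduce to the commutation relations $P(\Ncal_\phi)a_{\phibar}^p = a_{\phibar}^p P(\Ncal_\phi-p\I)$ and $P(\Ncal_\phi){a^\ast_\phi}^p={a^\ast_\phi}^p P(\Ncal_\phi+p\I)$ from the proof of Lemma~\ref{lem:identity-P}; once those are invoked, the computation is essentially forced. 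Everything else is routine manipulation of factorials and the definition of $P_n$.
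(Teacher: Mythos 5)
Your proof is correct and follows essentially the same route as the paper: the diagonal case uses the identity $a_{\phibar}^{N-n}{a^{\ast}_{\phi}}^{N-n}=\prod_{p=1}^{N-n}(\Ncal_{\phi}+p\I)$ together with $\Ncal_{\phi}\Pcal_{n,n}=0$ exactly as in the paper, and your off-diagonal argument by tracking the $\Ncal_{\phi}$-eigenvalue is just a more conceptual phrasing of the paper's explicit commutation computation, which likewise reduces to $a_{\phibar}\Pcal_{n,n}\big\vert_{\FF_{n}}=0$.
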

%%%%%%%END OF LEMMA 4%%%%%%%%%%%%%%%

%%%%%%%%PROOF OF LEMMA 4%%%%%%%%%%%%%
\begin{proof}
First, we make the observation that
\begin{align*}
\U_{m}\U^{\ast}_{n}=\Pcal_{m,m}
\frac{{a_{\phibar}}^{N-m}}{\sqrt{(N-m)!}}
\frac{{a^{\ast}_{\phi}}^{N-n}}{\sqrt{(N-n)!}}
\Pcal_{n,n}~.
\end{align*}
If $m<n$ (thus, $N-m>N-n$), in view of the property $a_{\phibar}\Pcal_{n,n}\big\vert_{\FF_{n}}=0$ we have 
\begin{align*}
\U_{m}\U^{\ast}_{n}
=\frac{\prod_{j=n-m}^{N-m-1}
\big(\Ncal_{\phi}+j\I\big)}
{\sqrt{(N-m)!(N-n)!}}\Pcal_{m,m}{a_{\phibar}}^{n-m}
\Pcal_{n,n}=0~.
\end{align*}
In this vein, if $n<m$ then $\U_{m}\U^{\ast}_{n}=0$. 
By $\Ncal_{\phi}\Pcal_{n,n}=0$ we assert that if $m=n$ then
\begin{align*}
\U_{n}\U^{\ast}_{n}=
\Pcal_{n,n}\frac{\prod_{j=1}^{N-n}\big(\Ncal_{\phi}+j\I\big)}{(N-n)!}
\Pcal_{n,n}=\Pcal_{n,n}~.\hskip2in \square
\end{align*}
\end{proof} 
%
%%%%%%%%END OF PROOF OF LEMMA 4%%%%%%%%%%%%%

%%%%%%%%%%%%%%%%%%%%%%%%%%%%%%%%%%%%%%%%%%%%%%%%%%%%%%%%
\subsection{On the spectrum of $h_{\rm ph}$} 
\label{subsec:spectrum-hph}
%%%%%%%%%%%%%%%%%%%%%%%%%%%%%%%%%%%%%%%%%%%%%%%%%%%%%%%%
Next, we discuss key properties of the diagonal part, $h_{\rm ph}(a_\perp^*, a_\perp)$, of the reduced Hamiltonian. Interestingly, $h_{\rm ph}(x,y)$ is similar to a self-adjoint operator. As such, many important spectral properties of self-adjoint operators carry over to $h_{\rm ph}$. \color{black}

%%%%%%%%%%%%%%%%%%Lemma ON SPECTRAL PROPERTY%%%%%%%%%%%%%%%%%%%%%%
\begin{lemma}\label{lem:hphon-spectrum} 
Assume that the pair-excitation kernel $k$ solves the operator Riccati equation~\eqref{eq:k-Riccati} with $\|k\|_\mathrm{op}<1$. 

\textbf{(i)} Then the spectrum of $h_\mathrm{ph}\,:\,\mathfrak{h}^1_V\cap \phi^\perp \mapsto\mathfrak{h}$ \color{black} is real and discrete.  The corresponding eigenfunctions $\omega_j(x)$, which satisfy $h_\mathrm{ph}(x,\omega_j)=E_j\omega_j(x)$ where $E_j>0$ are the eigenvalues (for $j=1,\,\dots$),     
form a non-orthogonal Riesz basis of $\phi^\perp$. 

\textbf{(ii)} Also, suppose that the functions $u_j(x)$ solve the adjoint problem,  i.e.,  
$h_{\rm ph}^\ast(x,u_j) = E_j u_j(x)$ on $\phi^\perp$ (for $j=1,2,\dots$).
Then the following completeness relation holds:
\begin{equation}\label{eq:completeness-omega}
\sum_{j=1}^\infty{\overline{\omega_j(x)}u_j(y)}=\widehat\delta(x,y)~.
\end{equation} 
\end{lemma}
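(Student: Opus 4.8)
The plan is to establish Lemma~\ref{lem:hphon-spectrum} in two stages, treating part \textbf{(i)} by a similarity transformation and part \textbf{(ii)} by biorthogonality of the resulting Riesz bases. First I would observe that $h_{\rm ph}=h+k\circ\vphib$ and, using the Riccati equation~\eqref{eq:k-Riccati} with $k(x,\phibar)=0$, rewrite $h_{\rm ph}$ on $\phi^\perp$ in a manifestly similar-to-self-adjoint form. The natural candidate is the symmetrized expression $S:=(\delta-\kbar\circ k)^{1/2}\circ\, ?\, \circ(\delta-\kbar\circ k)^{-1/2}$; more precisely, using $h_{\rm ph}=h+k\circ\vphib$ together with the Riccati identity one shows
\begin{equation*}
(\delta+\kbar\circ k)^{-1/2}\,?\,,
\end{equation*}
so I would not grind this out but rather note that the key algebraic fact, recorded in Remark~\ref{rmk:positivity}, is that in the eigenbasis $\{e_j\}$ of the maximizing family from Theorem~\ref{thm:Existence} one has $(h+k\circ\vphib)(\ebar_j,e_j)=\sqrt{h^2(\ebar_j,e_j)-|\vphi(\ebar_j,\ebar_j)|^2}>0$. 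The clean route: since $\|k\|_{\rm op}<1$, the operator $T:=(\delta-\kbar\circ k)^{1/2}$ is bounded and boundedly invertible on $\phi^\perp$, and one checks that $T\circ h_{\rm ph}\circ T^{-1}$ is self-adjoint (this is exactly where the Riccati equation is used — it forces the off-diagonal/antilinear part to cancel against $k\circ\vphib$). Since $-\Delta+V$ has discrete spectrum and $h$ is a relatively compact perturbation, $T\circ h_{\rm ph}\circ T^{-1}$ has discrete real spectrum bounded below, and positivity of the eigenvalues follows from Condition~\eqref{eq:Condition1} of Theorem~\ref{thm:Existence} (via Remark~\ref{rmk:positivity}).

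Next I would transfer this back to $h_{\rm ph}$ itself. Because $h_{\rm ph}=T^{-1}\circ(\text{self-adjoint})\circ T$, it is \emph{similar} to a self-adjoint operator with discrete positive spectrum; hence $\sigma(h_{\rm ph})=\{E_j\}_{j\ge1}\subset(0,\infty)$ is real and discrete. If $\{\chi_j\}$ is the orthonormal eigenbasis of the self-adjoint operator $A:=T\circ h_{\rm ph}\circ T^{-1}$, then $\omega_j:=T^{-1}\chi_j$ are the eigenfunctions of $h_{\rm ph}$ with $h_{\rm ph}(x,\omega_j)=E_j\omega_j(x)$, and the image of an orthonormal basis under a bounded invertible operator is precisely a Riesz basis of $\phi^\perp$; it is non-orthogonal because $T^{-1}$ is not unitary (unless $k=0$). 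This proves \textbf{(i)}.

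For \textbf{(ii)}, I would identify the adjoint eigenfunctions explicitly. Since $h_{\rm ph}^\ast=T\circ A\circ T^{-1}$ (using $A^\ast=A$ and $(T^{-1})^\ast=(T^\ast)^{-1}=T^{-1}$ on the real Hilbert space structure — here one must be mildly careful with the $\mathcal C$-symmetric, i.e.\ bilinear, pairing used throughout the paper), one gets $h_{\rm ph}^\ast(x,u_j)=E_j u_j(x)$ with $u_j:=T\chi_j$. The pair $\{\omega_j\},\{u_j\}$ is then biorthogonal: $\langle\overline{u_i},\omega_j\rangle=\langle\overline{T\chi_i},T^{-1}\chi_j\rangle=\langle\overline{\chi_i},\chi_j\rangle=\delta_{ij}$. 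Since $\{\chi_j\}$ is a complete orthonormal system on $\phi^\perp$, $\sum_j\overline{\chi_j(x)}\chi_j(y)=\widehat\delta(x,y)$ (the kernel of the identity on $\phi^\perp$), and applying $T^{-1}$ in $x$ and $T$ in $y$ — which together preserve the identity kernel because $(T^{-1})^T T=\delta$ on $\phi^\perp$ — yields $\sum_j\overline{\omega_j(x)}u_j(y)=\widehat\delta(x,y)$, which is~\eqref{eq:completeness-omega}.

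The main obstacle I anticipate is verifying cleanly that the explicit similarity $T=(\delta-\kbar\circ k)^{1/2}$ (or whichever symmetrizer is dictated by the Riccati equation) actually conjugates $h_{\rm ph}$ to a self-adjoint operator — this requires carefully substituting~\eqref{eq:k-Riccati} with the constraint $k(x,\phibar)=0$ and keeping track of transposes versus Hermitian adjoints, since the paper works with the bilinear (symmetric) pairing and $\mathcal C$-symmetric operators rather than the sesquilinear one; in particular one must confirm that the antilinear term $k\circ\vphib$ is exactly what is needed to kill the non-self-adjoint part of $T h T^{-1}$. A secondary technical point is justifying the relative compactness of the perturbation $h-(-\Delta+V)$ (involving $N(\upsilon*|\phi|^2)\delta$ and $N\gamma$) so that discreteness of the spectrum survives; this should follow from boundedness and decay of $\phi$ and integrability/smoothness of $\upsilon$ as assumed in Remark~\ref{rmk:Hartree}, but it needs to be stated.
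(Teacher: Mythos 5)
Your strategy is the paper's: use the Riccati equation to show that $h_{\rm ph}$ is similar to a self-adjoint operator via the square root of $\widehat\delta$ minus a product of $k$ with its conjugate, deduce a real, discrete, positive spectrum and a Riesz basis of eigenfunctions as the image of an orthonormal basis under a bounded invertible map, and obtain \eqref{eq:completeness-omega} by transporting the resolution of the identity for that orthonormal basis. Two points, however. First, the step you defer as the ``main obstacle'' is the entire content of the lemma, and it closes as follows: projecting \eqref{eq:k-Riccati} onto $\phi^\perp$ and expanding both sides, one checks the intertwining relation
\begin{equation*}
\big(h+k\circ\overline{f_\phi}\big)\circ\big(\widehat\delta-k\circ\overline{k}\big)
=\big(\widehat\delta-k\circ\overline{k}\big)\circ\big(h+f_\phi\circ\overline{k}\big)
=\big(\widehat\delta-k\circ\overline{k}\big)\circ h_{\rm ph}^{\ast}~,
\end{equation*}
where after substituting $h\circ k=-k\circ h^{T}-f_\phi-k\circ\overline{f_\phi}\circ k$ the leftover terms assemble into $k\circ\big(\overline{f_\phi}+h^{T}\circ\overline{k}+\overline{k}\circ h+\overline{k}\circ f_\phi\circ\overline{k}\big)$, which vanishes by the complex-conjugate Riccati equation. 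Second, this identity dictates the symmetrizer: it is $\widehat\delta-k\circ\overline{k}$, not $\widehat\delta-\overline{k}\circ k$ as in your $T$; the latter (the transpose of the former) intertwines $h_{\rm ph}^{T}$ with $\overline{h_{\rm ph}}$ rather than $h_{\rm ph}$ with $h_{\rm ph}^{\ast}$, so with your $T$ the operator $T\circ h_{\rm ph}\circ T^{-1}$ is \emph{not} self-adjoint. The correct conjugation is $\varkappa=(\widehat\delta-k\circ\overline{k})^{-1/2}\circ h_{\rm ph}\circ(\widehat\delta-k\circ\overline{k})^{1/2}$, which puts $\omega_j=(\widehat\delta-k\circ\overline{k})^{1/2}\eta_j$ and $u_j=(\widehat\delta-k\circ\overline{k})^{-1/2}\eta_j$ (the opposite assignment from yours). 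With that correction, the rest of your argument---discreteness from the discrete spectrum of $h$ plus compactness of $k\circ\overline{f_\phi}$, positivity via Remark~\ref{rmk:positivity}, and the biorthogonality/completeness computation---coincides with the paper's proof.
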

%%%%%%%%%%%%%%%%%%%END OF PROPOSITION ON SPECTRAL PROPERTY%%%%%%%%%%%%%%%%%%%%%%
%

%%%%%%%%%%PROOF OF PROPOSITION ON SPECTRAL PROPERTY%%%%%%%%%%%%%%%%%%%%%%%%%%%%%%%%%%%%%%%%%%%%%%%%
\begin{proof}
The following relation holds on $\phi^\perp$ by use of the Riccati equation~\eqref{eq:k-Riccati}:
\begin{equation}\label{eq:k-proj-proof}
   (h+k\circ \overline{f_\phi})\circ (\widehat \delta-k\circ\overline{k}) = (\widehat \delta-k\circ\overline{k})\circ (h+f_\phi\circ\overline{k})~.
\end{equation}
If $\|k\|_\mathrm{op}<1$ then $(\widehat\delta-k\circ\overline k)^{-1}$ and $(\widehat\delta-k\circ\overline k)^{1/2}$ exist and are bounded operators on $\phi^\perp$. \color{black} 
By~\eqref{eq:k-proj-proof}, the operator $\varkappa:=(\widehat\delta-k\circ\overline{k})^{-1/2}\circ(h+k\circ\overline{f_\phi})\circ(\widehat\delta-k\circ\overline{k})^{1/2}$ is self-adjoint. Recall that $h$ has discrete spectrum; thus, $h+k\circ \overline{f_\phi}$ has discrete spectrum because $k\circ \overline{f_\phi}$ is compact. Moreover, the eigenvalues of $h+k\circ \overline{f_\phi}$ are positive (see Remark~\ref{rmk:positivity}). 
By the spectral theorem, the eigenvalues of $\varkappa$ are then positive and discrete, and the respective eigenvectors form an orthonormal basis of $\phi^\perp$. \color{black}

\textbf{(i)} Let the eigenvalues of $\varkappa$ be $\{E_j\}_{j=1}^\infty$, with eigenvectors $\{\eta_j\}_{j=1}^\infty$. 
Since the mapping $(h+k\circ\overline{f_\phi})\mapsto(\widehat\delta-k\circ\overline{k})^{-1/2}(h+k\circ\overline{f_\phi})(\widehat\delta-k\circ\overline{k})^{1/2}$ is a similarity transformation, the operator $h_{\rm ph}$ also has real spectrum $\{E_j\}_{j=1}^\infty$. From the relation 
\begin{subequations}\label{eqs:basis-omega-u}
\begin{equation}\label{eq:omega-basis}
\omega_j(x) = (\widehat\delta-k\circ\overline{k})^{1/2}(x,\eta_j)~, 
\end{equation}
 we conclude that $\{\omega_j(x)\}_{j=1}^\infty$ forms a Riesz basis as a bounded perturbation of an orthonormal basis of $\phi^\perp$. 

\textbf{(ii)} In a similar vein, the family $\{u_j(x)\}_{j=1}^\infty$ defined by
\begin{equation}\label{eq:u-basis}
    u_j(x):= (\widehat\delta-k\circ\overline k)^{-1/2}(x,\eta_j) 
\end{equation}
\end{subequations}
forms a Riesz basis for the adjoint problem on $\phi^\perp$. 

The resolution of the identity by the eigenvectors $\eta_j$ of the operator $\varkappa$ reads
\begin{equation*}
    \sum_{j=1}^{\infty}{\eta_j(x)\overline{\eta_j(y)}} = \widehat\delta(x,y)~.
\end{equation*}
This  equation yields completeness relation~\eqref{eq:completeness-omega}, by use of~\eqref{eqs:basis-omega-u}. \hfill $\square$
\end{proof}

%%%%%%%%%%END OF PROOF OF PROPOSITION ON SPECTRAL PROPERTY%%%%%%%%%%%%%%%%%%%%%%%%%%%%%%%%%%%%%%%%%%%%%%%%

%%%%%%%%%%%%%%%%%%%%%%%%%%%%%%%%%%%%%%%%%%%%%%%%%%%%%%%%
\subsection{Proof of Theorem~\ref{thm:spectrum-phonon}}
\label{subsec:proof-spectrum-thm}
%%%%%%%%%%%%%%%%%%%%%%%%%%%%%%%%%%%%%%%%%%%%%%%%%%%%%%%%
We are now in position to prove Theorem~\ref{thm:spectrum-phonon}. 
Our argument for the construction of eigenvectors of $\Hcal_{\rm ph}$ relies on the fact the
 $h_{\rm ph}$ has discrete spectrum. Let $\vert\psi\rangle:=\vert\psi\rangle_N$.

\begin{proof}
{\bf Step 1.} We decompose $\Hcal_{\rm ph}\big\vert\psi\big>$, where 
$\Hcal_{\rm ph}$ is given in~\eqref{eqs:H-phon-def}. We show that  
\begin{align}\label{eq:Hphon-dec}
&\sum_{m=0}^{N}\U^{\ast}_{m}\U_{m}\Hcal_{\rm ph}
\sum_{n=0}^{N}\U^{\ast}_{n}\U_{n}\vert\psi\rangle 
=\sum_{m,n=0}^{N}\U^{\ast}_{m}
\left\{\U_{m}\Hcal_{\rm ph}\U^{\ast}_{n}\right\}\left(\U_{n}\big\vert\psi\rangle\right)
\nonumber
\\
&=\sum_{n=0}^{N}
\U^{\ast}_{n}h_{\rm ph}(a^{\ast}_{\perp},a_{\perp})\U_{n}\vert\psi\rangle
+\sum_{n=0}^{N-2}b_{N,n}\U^{\ast}_{n}\vphib(a_{\perp},a_{\perp})\U_{n+2}
\vert\psi\rangle~.  
\end{align}
Here, $b_{N,n}$ is a numerical constant. Regarding the operators $\U_n$, see  Definition~\ref{def:UUstar}.
%Bear in mind that 
%$a^{\ast}_{\perp},a_{\perp}$ commute with all the other operators. 

In order to derive~\eqref{eq:Hphon-dec}, we invoke Lemma~\ref{lem:U-prod}. In this vein, we notice the relations
\begin{align*}
h_{\rm ph}(a_{\perp}^{\ast},a_{\perp})\vert\psi\rangle=\sum_{n,m=0}^{N}\U^{\ast}_{n}
\U_{n}
h_{\rm ph}(a^{\ast}_{\perp},a_{\perp})\U^{\ast}_{m}\U_{m}\vert\psi\rangle
=\sum_{n=0}^{N}\U_{n}^{\ast}h_{\rm ph}(a_{\perp}^{\ast},a_{\perp})
\U_{n}\vert\psi\rangle~, \color{black}
\end{align*}
\begin{align*}
&\frac{{a^{\ast}_{\phi}}^{2}}{N}
\vphib(a_{\perp},a_{\perp})\vert\psi\rangle
=\sum_{n,m=0}^{N}
\U^{\ast}_{n}\U_{n}\frac{{a^{\ast}_{\phi}}^{2}}{N}\vphib(a_{\perp},a_{\perp})
\U^{\ast}_{m}\U_{m}\vert\psi\rangle 
\\
&=\sum_{n,m=0}^{N}\U^{\ast}_{n}
\Pcal_{n,n} \frac{{a_{\phibar}}^{N-n}}{\sqrt{(N-n)!}} \color{black}
\frac{{a^{\ast}_{\phi}}^{2}}{N}
\frac{{a^{\ast}_{\phi}}^{N-m}}{\sqrt{(N-m)!}}\Pcal_{m,m}
\vphib(a_{\perp},a_{\perp})\U_{m}\vert\psi\rangle
\\
&=\sum_{n=0}^{N-2}b_{N,n}\U^{\ast}_{n}\vphib(a_{\perp},a_{\perp})
\U_{n+2}\vert\psi\rangle~, \quad  b_{N,n}=\frac{\sqrt{(N-n)(N-n+1)}}{N}~,
\end{align*}
since only the terms with $n=m-2$ survive in the last double sum.

{\bf Step 2.} Next, we describe the finite system that results from the above decomposition.
The first observation is that $\U_{n}\vert\psi\rangle=\vert\psi_{N,n}\rangle_{n}$
where $\psi_{N,n}(x_{1},\,\ldots,\, x_{n})$ is a function orthogonal to the condensate. Let 
$\psi_{n}:=\psi_{N,n}$, a function of 
$n$ variables where $n=0,\,1,\,\ldots,\, N$. The operator $h_{\rm ph}$ acts on each of these functions $\psi_{n}$
for $n=1,\,2,\,\ldots,\, N$ by preserving the number of variables.
On the other hand, the operator $\vphib(a_{\perp},a_{\perp})$
maps $\psi_{n+2}$ to $\psi_{n}$. Denote the first action by 
$h_{\rm ph}\circ \psi_{n}$ and the second one 
by $\vphib :\psi_{n+2}$.

We elaborate on these actions. For a symmetric function
$\psi_{n}(x_{1},\,\ldots,\, x_{n})$, we have
\begin{align*}
h_{\rm ph}\circ \psi_{n}
&=d_{n}\sum_{j=1}^{n}
\int \dv y\ \left\{
h_{\rm ph}(x_{j},y)\psi(x_{1},\,\ldots,\, x_{j-1},\,y,\,x_{j+1},\,\ldots,\, x_{n})\right\}~.
\end{align*}
Similarly, $\vphib$ acts on $\psi_{n+2}(x_{1}\ldots x_{n+2})$ 
as follows: 
\begin{align*}
\vphib:\psi_{n+2}&=
b_{n}\int \dv y_{1}\,\dv y_{2}\ 
\left\{\vphib(y_{1},y_{2})
\psi_{n+2}(y_{1},\,y_{2},\,x_{1},\,\ldots,\, x_{n})\right\}~.
\end{align*}
In the above, $d_{n}$ and $b_{n}$ are some (immaterial) numerical constants.

Hence, the eigenvalue equation 
$\Hcal_{\rm ph}\vert\psi\rangle=E\vert\psi\rangle$
reduces to a finite system, viz.,  
\begin{subequations}\label{eqs:fin-sys-eigenv}
\begin{align}
h_{\rm ph}\circ\psi_{N,N}&=E\psi_{N,N}~, \label{eq:fin-sys-eigenv-1}\\
h_{\rm ph}\circ \psi_{N,N-2}+b_{2}\vphib :\psi_{N,N}&=E\psi_{N,N-2}~, \label{eq:fin-sys-eigenv-2} \\
h_{\rm ph}\circ\psi_{N,N-4}+
b_{4}\vphib :\psi_{N,N-2}&=E\psi_{N,N-4}~,\,\ldots\ .\label{eq:fin-sys-eigenv-3}
\end{align}
\end{subequations}
This system has upper triangular form and manifests the effect of pair excitation, since the number of non-condensate particles is reduced in pairs. \color{black} 
The even and odd values of $N$ should be considered separately. These equations describe how to compute the 
fluctuation vector $\vert\psi_{\perp}\rangle=
 \big(\psi_{N,0},\, \psi_{N,1},\, \ldots,\, \psi_{N,N},\, 0,\,\ldots\big)$.
 
Notably, \eqref{eq:fin-sys-eigenv-1} implies the equality of the spectra, 
$\sigma\left(\Hcal_{\rm ph}\right)
 =\sigma\left(h_{\rm ph}(a^{\ast}_{\perp},a_{\perp})\right)$, on $\FF_N$.
Indeed, if~\eqref{eq:fin-sys-eigenv-1}
has only the trivial solution then all the subsequent equations
have trivial solutions. The upper triangular form suggests that 
we can construct the eigenvalues explicitly. Note that the top equation
has infinitely many possible solutions corresponding to the spectrum of $h_{\rm ph}$ -- but choosing one of them results in a finite system of equations.

We now give the relevant construction, which serves as a proof of existence for system~\eqref{eqs:fin-sys-eigenv}. For example, start with (see Lemma~\ref{lem:hphon-spectrum}) \color{black}
\begin{align*}
\Omega_{N}=\prod_{p=1}^{N} \omega_{j_{p}}(x_{p})
\end{align*}
for given $j_p$ ($p=0,\, 1,\, \ldots,\, N$) so that $\Omega_{N}$ is an eigenvector of $h_{\rm ph}(a_\perp^\ast, a_\perp)$, viz., \color{black}
\begin{align*}
h_{\rm ph}\circ\Omega_{N} =\left(\sum_{p=0}^NE_{j_{p}}\right)\Omega_{N}~.
\end{align*}
The action of $\vphib$ on the state $\Omega_N$ \color{black} produces the collection of states
\begin{align*}
\Omega_{l,m}:=\prod_{p=0\atop p\not= l,m}^{N}\omega_{j_{p}}(x_{p})~;\qquad l,\,m=0,\,1,\,\ldots,\, N~.
\end{align*}
We can determine $\Omega_{N-2}:=\sum_{l,m}c_{l,m}\Omega_{l,m}$ which plays the role of $\psi_{N,N-2}$.
By substituting into~\eqref{eq:fin-sys-eigenv-2}, we obtain the system  
$c_{l,m}\big(E_{j_{l}}+E_{j_{m}}\big)=b_{2}
\vphib(\omega_{j_{l}},\omega_{j_{m}})$ which yields $c_{l,m}$. 
The next state, $\Omega_{N-4}$, is a linear combination 
of $\omega_{j_{k}}$ where four terms have been removed from the
original collection. The idea of computation is similar. One can proceed until all  non-condensate particles are removed.
This argument concludes our explicit construction of the eigenvectors of
$\Hcal_{\rm ph}$ in terms of eigenvectors of $h_{\rm ph}$. \color{black} \hfill $\square$

\end{proof}

It is of some interest to observe that the eigenvectors of
$\Hcal_{\rm ph}$ contain (in part) the condensate wave function,
in contrast to the eigenvectors of $h_{\rm ph}(a^{\ast}_{\perp},a_{\perp})$.

%

%%%%%%%%%%%%%%%%%%%%%%%%%%%%%%%%%%%%%%%%%%%%%%%%%%%%%%%%
%%%%%%%%%%%%%%%%%%%%%%%%%%%%%%%%%%%%%%%%%%%%%%%%%%%%%%%%
\section{Connections to a Hermitian approach and $J$-self-adjoint system} 
\label{sec:applications}
%%%%%%%%%%%%%%%%%%%%%%%%%%%%%%%%%%%%%%%%%%%%%%%%%%%%%%%%
%%%%%%%%%%%%%%%%%%%%%%%%%%%%%%%%%%%%%%%%%%%%%%%%%%%%%%%%
In this section, we focus on how  our existence theory for kernel $k$ is connected to  another approach, namely, the use of a (Hermitian) Hamiltonian that does not conserve the number of particles~\cite{fetter72,Lewin2014}. This Hamiltonian results from the Bogoliubov approximation and has the same spectrum as our non-Hermitian $\tHap$ when restricted to $\mathbb{F}_N$. Our analysis reveals a connection between (unitary) Bogoliubov-type rotations of quadratic Hamiltonians, the Riccati equation for $k$, and the theory of $J$-self-adjoint operators developed by Albeverio and coworkers~\cite{AlbeverioMotovilov2019,Albeverio2009,AlbeverioMotovilov2010}  (see also~\cite{Tretter2016,Tretter-book}). These works, however, appear not to address the possible presence of infinitely many solutions to the Riccati equation which is suggested by our existence theory. We also point out that our results so far imply the existence of solutions to the eigenvalue problem for Boson excitations formulated by Fetter, if his delta-function interaction potential is regularized~\cite{fetter72}. \color{black} 
%System \eqref{non-SA} appears in connection to the unitary rotation of a Bogoliubov Hamiltonian [Fetter], and the excitation kernel can be used to show the well-posedness of the resulting eigenvalue system.
%\begin{equation} \label{non-SA}
%M:=\begin{pmatrix}
%     h^T_\perp & -\overline{f_\phi}_\perp \\
%     {f_\phi}_\perp & -h_\perp
%    \end{pmatrix},\quad\mathrm{dom}(M):=\mathfrak{h}^1_V\oplus\mathfrak{h}^1_V.
%\end{equation}

%%%%%%%%%%%%%%%%%%%%%%%%%%%%%%%%%%%%%%%%%%%%%%%%%%%%%%
\subsection{On a reduced Hamiltonian via Bogoliubov approximation}
\label{subsec:Bog-eigenvalue}
%%%%%%%%%%%%%%%%%%%%%%%%%%%%%%%%%%%%%%%%%%%%%%%%%%%%%%
Recall our reduced Hamiltonian with a smooth interaction potential (Sect.~\ref{subsec:Herm}), viz., 
\begin{equation*}
\Hap=NE_{\rm H}+h(a^{\ast}_{\perp},a_{\perp})
+\frac{1}{2N}f_\phi(a^{\ast}_{\perp},a^{\ast}_{\perp})
a_{\phibar}^{2}
+\frac{1}{2N}\overline{f_\phi}(a_{\perp},a_{\perp}){a^{\ast}_{\phi}}^{2}~.
\end{equation*}
Let us now apply the Bogoliubov approximation to this $\Hap$ by formally replacing the operators $a_{\phibar},\,a^\ast_{\phibar}$ with $\sqrt{N}$. This results in the Hamiltonian $\mathcal{H}_\mathrm{Bog}:\mathbb{F}\mapsto\mathbb{F}$ where 
\begin{equation} \label{eq:Bog-Ham}
    \mathcal{H}_{\mathrm{Bog}} := NE_{\mathrm{H}} 
    + h(a^\ast_\perp,a_\perp)
    +\frac{1}{2}{f_\phi}(a^\ast_\perp,a^\ast_\perp)+\frac{1}{2}\overline{{f_\phi}}(a_\perp,a_\perp)~,
\end{equation}
which does not commute with the number operator $\mathcal N$.

Next, we discuss the diagonalization of $\mathcal{H}_\mathrm{Bog}$ \color{black} by using eigenstates of the operator $h_\mathrm{ph}:\mathfrak{h}^1_V\cap \phi^\perp \to \mathfrak{h}$ defined by~\eqref{eq:hphon} (Sect.~\ref{sec:spectrum}). We proceed in the spirit of Fetter~\cite{fetter72}, who diagonalizes $\mathcal{H}_{\mathrm{Bog}}-NE_H$ \color{black} via (unitary) Bogoliubov-type rotations of the Boson field operators in the space orthogonal to $\phi$; \color{black} see equation (2.14) for a delta-function interaction potential in~\cite{fetter72}. In this vein, let us consider Fetter's  ``quasiparticle'' operators $\gamma_j, \gamma_j^\ast$ which are defined as follows~\cite{fetter72}. 
%
%%%%%%%%%%Definition of quasiparticle operators%%%%%%%%%%%%%%%%%%
\begin{definition} \label{def:quasi-part}
 The operators $\gamma_j,\, \gamma^\ast_j: \FF\mapsto \FF$ ($j=1,\,2,\,\dots$)  are defined by
\begin{equation*} 
\gamma_j:= \int \dv x\ \{\overline{u_j(x)} a_{\perp, x}+\overline{v_j(x)}a^\ast_{\perp, x}\}~,\quad \gamma_j^\ast:= \int\dv x\ \{u_j(x) a^\ast_{\perp, x}+v_j(x)a_{\perp, x}\}~.
\end{equation*}
In the above, $\{u_j(x)\}_{j=1}^\infty$ is a Riesz basis of $\phi^\perp$, and $\{v_j(x)\}_{j=1}^\infty$ are chosen such that $\gamma_j$ and $\gamma_j^\ast$ satisfy the canonical commutation relations. 
\end{definition}
%%%%%%%%%%End of Definition of quasiparticle operators%%%%%%%%%%%%%%%%%%
%

One can verify that $\gamma_j, \gamma^\ast_j$ satisfy the canonical commutation relations provided
\begin{equation}\label{eq:uv-relns}
\begin{split}
\int \dv x\ \{u_j(x)v_{j'}(x) - v_j(x)u_{j'}(x)\} &= 0~,\\   
\int \dv x\ \{u_j(x)\overline{u_{j'}(x)}-v_j(x)\overline{v_{j'}(x)}\} &= \delta_{jj'}~.
\end{split}
\end{equation}
%
%The last two relations are consequences of diagonalizing $\mathcal{H}_{\mathrm{Bog}}$ by use of $\{\gamma_j^\ast,\gamma_j\}_{j=1}^\infty$, as we show below. \color{black}  %When the canonical commutation relations hold \color{red} in addition to $\mathcal{H}_{\mathrm{Bog}}$ \color{black} being diagonal in $\gamma_j$, we are justified in using the term `quasiparticle' to describe the $\gamma$ operators.
We proceed to show that the diagonalization of $\mathcal{H}_{\mathrm{Bog}}$ in terms of $\gamma_j$ and $\gamma_j^*$ implies that $\{u_j(x),v_j(x)\}_{j=1}^\infty$ must solve a linear system of PDEs. Following Fetter's procedure~\cite{fetter72}, let us momentarily assume the following completeness relations:
\begin{equation}\label{eq:complete-relns}
\begin{split}
\sum_{j=1}^\infty{\{u_j(x)\overline{u_j(x')}-\overline{v_j(x)}v_j(x')\}} &= \widehat{\delta}(x,x')~, \\
\sum_{j=1}^\infty{\{u_j(x)\overline{v_j(x')}-\overline{v_j(x)}u_j(x')\}} &=0,\quad\forall x,\,x'\in\mathbb{R}^3~.
\end{split}
\end{equation}
In conjunction with Definition~\ref{def:quasi-part}, these relations allow us to decompose the Boson field operators $a_{\perp, x}$ and $a^\ast_{\perp, x}$ as
\begin{equation*}
a_{\perp, x} = \sum_{j=1}^\infty{\{u_j(x)\gamma_j - \overline{v_j(x)}\gamma_j^\ast\}}~,\quad 
a^\ast_{\perp, x} = \sum_{j=1}^\infty{\{\overline{u_j(x)}\gamma_j^\ast - v_j(x)\gamma_j\}}~.
\end{equation*}
These two relations together with Definition~\ref{def:quasi-part} amount to a Bogoliubov-type (unitary) transformation in the space orthogonal to the condensate $\phi$. 
The substitution of these expressions into~\eqref{eq:Bog-Ham} along with the requirement that the terms proportional to $\gamma_j\gamma_l$ and $\gamma_j^\ast \gamma_l^\ast$ vanish (for all $j,\,l=1,\,2,\,\ldots$) yields the following eigenvalue problem involving a symplectic matrix: 
 \begin{equation}\label{eq:Fetter-PDE}
\begin{pmatrix}
 h^T_\perp & -{f_\phi}_\perp \\
 \overline{{f_\phi}}_\perp & -h_\perp
\end{pmatrix} 
\circ\begin{pmatrix}
u_j(x) \\
v_j(x) 
\end{pmatrix} = E_j
\begin{pmatrix}
u_j(x) \\
v_j(x)
\end{pmatrix}~;\quad j=1,\,2,\,\ldots\,.
\end{equation}
In the above, $h_\perp$ and ${f_\phi}_\perp$ are the projections of operators $h$ and $f_\phi$ on space $\phi^\perp$. System~\eqref{eq:Fetter-PDE} should be compared to equations (2.21a,b) in~\cite{fetter72}. We alert the reader that the notation for $u_j$ and $E_j$ here is the same as the one used for the eigenvectors and eigenvalues of $h_{\rm ph}^\ast$ in Sect.~\ref{subsec:spectrum-hph}. In fact, the corresponding quantities turn out to be identical in the two eigenvalue problems, as we discuss below.  \color{black}

%%%%%%%%%%%%%%%%%%%%%%%%%%%%%%%%%%%%%%%%%%%%%%%%%%%%%%
\subsection{On the existence of solutions to eigenvalue problem for $(u_j,\,v_j)$} 
\label{subsec:eigenv-Fet-existence}
%%%%%%%%%%%%%%%%%%%%%%%%%%%%%%%%%%%%%%%%%%%%%%%%%%%%%%

Let us recall the spectral theory for $h_{\rm ph}$, particularly Lemma~\ref{lem:hphon-spectrum} (Sect.~\ref{subsec:spectrum-hph}). We should also add that this theory relies on the existence of solutions to the Riccati equation for $k$, Theorem~\ref{thm:Existence} (Sect.~\ref{sec:Riccati}). To make a connection to system~\eqref{eq:Fetter-PDE}, consider the solutions $\omega_j$ and $u_j$ ($j=1,\,2,\,\ldots$) to the eigenvalue problem for $h_{\rm ph}$ and its adjoint. This 
problem is expressed by the equations
\begin{equation*}
\begin{split}
h_\mathrm{ph}(x,\omega_j)= (h+k\circ\overline{f_\phi})(x,\omega_j) &= E_j \omega_j(x)~,\\
h^\ast_\mathrm{ph}(x,u_j) = (h^T+f_\phi\circ\overline{k})(x,u_j) &= E_j u_j(x)\qquad (j=1,\,2,\,\ldots\,)~.
\end{split}
\end{equation*}
Notice that if we define $v_j(x) := -\overline{k}(x,u_j)$ 
then the (adjoint) equation for $u_j(x)$ here immediately takes the form of the first equation in system~\eqref{eq:Fetter-PDE}. We can show that this definition for $v_j$ also gives the second equation in system~\eqref{eq:Fetter-PDE} by employing the conjugate Riccati equation (for $\overline k$). Indeed, notice that 
\begin{equation*}
\begin{split}
-h_\perp (x,v_j) &= (h_\perp \circ\overline{k})(x,u_j)= (-\overline{k}\circ h_\perp^T -\overline{f_\phi}-\overline k\circ f_\phi \circ\overline k)(x,u_j)  \\
&= -\overline{k}\circ \big[E_ju_j(x)-f_\phi\circ\overline{k}(x,u_j)\big] -\overline{f_\phi}(x,u_j)-(\overline k\circ f_\phi \circ\overline k)(x,u_j)  \\
&= -E_j v_j(x) - \overline{f_\phi}(x,u_j)~.
\end{split}
\end{equation*}
Hence, the existence of eigenvectors $\{\omega_j, u_j\}_{j=1}^\infty$ and spectrum $\{E_j\}_{j=1}^\infty$ in regard to $h_{\rm ph}$ entails the existence of solutions to system~\eqref{eq:Fetter-PDE}. 

%%%%%%%%%%%%%%%Remarks%%%%%%%%%%%%%%%%%%%%%%%
\begin{remark} 
\textbf{(i)} We showed an intimate connection of the eigenvalue problem for $h_{\rm ph}$ and its adjoint, based on the Riccati equation for kernel $k$, to PDE system~\eqref{eq:Fetter-PDE} coming from Fetter's Hermitian view. A direct comparison to the results in~\cite{fetter72} is meaningful if Fetter's delta-function interaction is appropriately regularized. This connection is in fact a manifestation of a deeper theory which links the Riccati equation to $J$-self-adjoint matrix operators~\cite{AlbeverioMotovilov2019,Albeverio2009,AlbeverioMotovilov2010}. We briefly discuss aspects of this theory in Sect.~\ref{subsec:J-self-adj}. 

\textbf{(ii)} In Fetter's paper~\cite{fetter72}, an ansatz for the many-body ground state, $\vert\psi_0\rangle$, of the quadratic Hamiltonian $\mathcal{H}_\mathrm{Bog}$ on $\FF$ is 
\begin{equation*}
\vert\psi_0\rangle = Ze^{\mathcal G} \{{a_{\phibar}^\ast}^N\}|vac\rangle~;\quad \mathcal G := \frac{1}{2}\int \dv x\,\dv y\ b(x,y)\,a^\ast_{\perp, x}a^\ast_{\perp, y}~.
\end{equation*}
However, a single governing equation for the associated kernel $b(x,y)$ is not given in~\cite{fetter72}. Instead, the condition $\gamma_j |\psi_0\rangle =0$ is applied for all $j=1,2,\dots$, which yields the following system of integral relations:
\begin{equation*}
\int \dv y\ \big\{b(x,y)\overline{u}_j(y)\big\} = -\overline{v_j(x)}\qquad (j=1,\,2,\,\ldots)~. 
\end{equation*}
Evidently, by comparison of this formalism to our approach,  we realize that kernel $b$ coincides with $k$, and the above integral relations are already a consequence of our solution for $k$. In fact, in~\cite{fetter72} Fetter uses the above integral system to define the kernel 
$b(x,y)=k(x,y)$ when $\{u_j(x),v_j(x)\}$ solve the matrix eigenvalue problem \eqref{eq:Fetter-PDE} (under a delta-function interaction). Our existence proof for $k$ furnished in the context of Theorem~\ref{thm:Existence} shows that the ground state $\vert\psi_0\rangle$ is self-consistent, in the sense that the integral system stemming from $\vert\psi_0\rangle$ and \eqref{eq:Fetter-PDE} is well-posed if a solution to the Riccati equation for $k$ exists. 
\end{remark}
%%%%%%%%%%%%%%%End of Remarks%%%%%%%%%%%%%%%%%%%%%%%

At this stage, we find it compelling to give the following corollary for system~\eqref{eq:Fetter-PDE}.
%
%%%%%%%%%%%%%%%Corollary on Fetter's formalism%%%%%%%%%%%%%%%%%%%%%%%
\begin{corollary} \label{cor:Cor1}
For  $\{u_j(x), v_j(x)\}_{j=1}^\infty$ that solve \eqref{eq:Fetter-PDE}, completeness relations~\eqref{eq:complete-relns} and orthogonality relations~\eqref{eq:uv-relns} hold.
%
%\begin{equation*}
%    \sum_{j=1}^\infty{\{u_j(x)\overline{u}_j(x')-\overline{v}_j(x)v_j(x')\}}=\widehat\delta(x,x')~.
%\end{equation*}
%
%Additionally, orthogonality relations~\eqref{eq:uv-relns} hold.
\end{corollary}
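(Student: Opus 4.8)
The plan is to derive the completeness and orthogonality relations for $\{u_j, v_j\}$ directly from the spectral theory of $h_{\rm ph}$ established in Lemma~\ref{lem:hphon-spectrum}, using the identification $v_j(x) = -\overline{k}(x, u_j)$ that was already verified in Sect.~\ref{subsec:eigenv-Fet-existence} to reconcile the eigenvalue problems for $h_{\rm ph}^\ast$ and system~\eqref{eq:Fetter-PDE}. The starting point is the completeness relation~\eqref{eq:completeness-omega} for the Riesz bases $\{\omega_j\}$ and $\{u_j\}$ of $\phi^\perp$, namely $\sum_j \overline{\omega_j(x)}\, u_j(y) = \widehat{\delta}(x,y)$, together with the representations $\omega_j = (\widehat{\delta} - k\circ\overline{k})^{1/2}(\cdot,\eta_j)$ and $u_j = (\widehat{\delta}-k\circ\overline{k})^{-1/2}(\cdot,\eta_j)$ from~\eqref{eqs:basis-omega-u}, where $\{\eta_j\}$ is the orthonormal eigenbasis of the self-adjoint operator $\varkappa$.

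First I would establish the orthogonality relations~\eqref{eq:uv-relns}. Using $v_j = -\overline{k}(\cdot, u_j)$ and the representation of $u_j$ in terms of $\eta_j$, the combination $\int dx\,\{u_j \overline{u_{j'}} - v_j \overline{v_{j'}}\}$ becomes $\langle \overline{u_{j'}}, (\widehat\delta - k\circ\overline k) u_j\rangle$ — here one uses $\overline{v_j} = -k(\cdot,\overline{u_j})$ and the $\mathcal{C}$-symmetry $k^T = k$ to turn $\int v_j\overline{v_{j'}}$ into $\langle \overline{u_{j'}}, \overline{k}\circ k\, u_j\rangle$. Substituting $u_j = (\widehat\delta - k\circ\overline k)^{-1/2}(\cdot,\eta_j)$ collapses this to $\langle \overline{\eta_{j'}}, \eta_j\rangle = \delta_{jj'}$. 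The second relation $\int dx\,\{u_j v_{j'} - v_j u_{j'}\} = 0$ follows similarly: it equals $-\langle \eta_j, \overline{k}(\cdot, u_{j'})\rangle + \langle\eta_{j'}, \overline{k}(\cdot,u_j)\rangle$ after using the $(\widehat\delta - k\circ\overline k)^{-1/2}$ representation on one factor, and symmetry of $\overline{k}$ together with the fact that $\overline{k}$ is diagonalized compatibly with $\varkappa$ forces the two terms to cancel. For the completeness relations~\eqref{eq:complete-relns}, the first one, $\sum_j\{u_j(x)\overline{u_j(x')} - \overline{v_j(x)}v_j(x')\} = \widehat\delta(x,x')$, is obtained by writing the sum as $\big[(\widehat\delta - k\circ\overline k)^{-1/2} \circ \big(\sum_j \eta_j\otimes\eta_j\big)\circ(\widehat\delta - k\circ\overline k)^{-1/2}\big](x,x')$ minus the corresponding $k\circ\overline k$-conjugated piece, then invoking $\sum_j\eta_j\otimes\eta_j = \widehat\delta$ and simplifying the resulting operator identity $(\widehat\delta - k\circ\overline k)^{-1} - k\circ(\widehat\delta-\overline k\circ k)^{-1}\circ\overline k = \widehat\delta$, which is a routine resolvent manipulation. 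The second completeness relation is the antisymmetric counterpart and vanishes by the same symmetry considerations used for the orthogonality identities.

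The main obstacle I anticipate is bookkeeping the precise interplay between the operator $(\widehat\delta - k\circ\overline k)^{\pm 1/2}$, the complex conjugation operator $\mathcal{C}$, and the transpose: because $k$ is $\mathcal{C}$-symmetric but not self-adjoint, one must be careful about whether $\overline{k}$, $k^T$, or $k^\ast$ appears in each contraction, and about which square root ($(\widehat\delta - k\circ\overline k)^{1/2}$ versus $(\widehat\delta - \overline k\circ k)^{1/2}$) acts on the left versus the right. The identity $k\circ(\widehat\delta - \overline k\circ k)^{-1} = (\widehat\delta - k\circ\overline k)^{-1}\circ k$ (already used in the proof of Lemma~\ref{lem:variationalderivative}) and its square-root analogue $k\circ(\widehat\delta - \overline k\circ k)^{-1/2} = (\widehat\delta - k\circ\overline k)^{-1/2}\circ k$ will be the key algebraic lubricants; once these are in hand the rest is mechanical. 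I would also note explicitly that the regularization $\upsilon$ smooth (rather than a delta function) guarantees $f_\phi$ Hilbert–Schmidt, hence $k$ compact, so all the spectral sums converge and the formal manipulations are justified; under Fetter's delta-function interaction one would only get these relations formally, which is precisely the caveat the corollary's surrounding text flags.
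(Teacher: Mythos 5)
Your proposal is correct, and for the completeness relations~\eqref{eq:complete-relns} it follows essentially the paper's own route: both arguments push the resolution of the identity $\sum_j \eta_j\otimes\eta_j=\widehat\delta$ through the factors $(\widehat\delta-k\circ\overline k)^{\pm 1/2}$, use $v_j=-\overline k(\cdot,u_j)$, and finish with the intertwining identity $(\widehat\delta-k\circ\overline k)^{-1}\circ k=k\circ(\widehat\delta-\overline k\circ k)^{-1}$. Where you genuinely diverge is in the orthogonality relations~\eqref{eq:uv-relns}: the paper derives them by the classical Green's-identity manipulation of system~\eqref{eq:Fetter-PDE} itself (multiplying by eigenvectors, subtracting, and dividing by $E_j-\overline{E_{j'}}$), which applies to any solution of the system but only yields orthogonality off the diagonal and requires the separate normalization choice $\Vert u_j\Vert_2^2-\Vert v_j\Vert_2^2=1$; your route instead substitutes $u_j=(\widehat\delta-k\circ\overline k)^{-1/2}(\cdot,\eta_j)$ directly and collapses $\int\{u_j\overline{u_{j'}}-v_j\overline{v_{j'}}\}$ to $\langle\overline{\eta_{j'}},\eta_j\rangle=\delta_{jj'}$, which is cleaner and delivers the normalization for free, at the price of applying only to the specific eigenfunctions constructed in Lemma~\ref{lem:hphon-spectrum} rather than to an arbitrary normalized solution family. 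Two small points: the operator appearing in your $v$-contraction is $k\circ\overline k=k\circ k^\ast$ (self-adjoint), not $\overline k\circ k$ — your subsequent use of $(\widehat\delta-k\circ\overline k)$ is the consistent one, so this is only the bookkeeping slip you yourself flagged; and for $\int\{u_jv_{j'}-v_ju_{j'}\}=0$ the symmetry $k^T=k$ alone already makes $\int u_jv_{j'}=-\langle u_j,\overline k(\cdot,u_{j'})\rangle$ symmetric in $j\leftrightarrow j'$, so no appeal to compatibility with the diagonalization of $\varkappa$ is needed.
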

%%%%%%%%%%%%%%%End of Corollary on Fetter's formalism%%%%%%%%%%%%%%%%%%%%%%%
%

%%%%%%%%%%%%%%%Proof of Corollary on Fetter's formalism%%%%%%%%%%%%%%%%%%%%%%%
\begin{proof}
We resort to the spectral theory of operator $h_{\rm ph}$ on space $\phi^\perp$, particularly the proof of Lemma~\ref{lem:hphon-spectrum} (Sect.~\ref{subsec:spectrum-hph}). 
Recall the completeness relation for the basis $\{\eta_j\}_{j=1}^\infty$ of $\phi^\perp$, as well as the relation $\eta_j(x) = (\widehat\delta-k\circ\overline k)^{1/2}(x,u_j)$. 

Hence, on $\phi^\perp$ we have
\begin{equation*}
\begin{split}
\widehat{\delta}(x,x') &= (\widehat\delta - k\circ\overline k)^{1/2}\left\{\sum_{j=1}^\infty{\eta_j(x)\overline{\eta_j(x')}}\right\}(\widehat\delta - k\circ\overline k)^{-1/2} \\
&= (\widehat\delta - k\circ\overline k)\left\{\sum_{j=1}^\infty{u_j(x)\overline{\eta_j(x')}}\right\}(\widehat\delta - k\circ\overline k)^{-1/2} = (\widehat\delta-k\circ\overline k)\sum_{j=1}^\infty{u_j(x)\overline{u_j(x')}}~.
\end{split}
\end{equation*}
Thus, we obtain    
\begin{equation*}
\sum_{j=1}^\infty{u_j(x)\overline{u_j(x')}}=\frac{\widehat\delta(x,x')}{\widehat\delta - k\circ\overline{k}}~.
\end{equation*}
By use of the relation $v_j(x) = -\overline{k}(x,u_j)$, we can therefore assert that
\begin{equation*}
\sum_{j=1}^\infty{\overline{v_j(x)}v_j(x')} = \sum_{j=1}^\infty k(\overline{u_j},x) {\overline k(x',u_j)}= \frac{k\circ \overline{k}}{\widehat\delta- k\circ \overline k}~.
\end{equation*}
The last two equations entail the first relation of~\eqref{eq:complete-relns}.

Next, we invoke the equation just derived to write 
\begin{equation*}
\sum_{j=1}^\infty{u_j(x)\overline{v_j(x')}} = -\sum_{j=1}^\infty u_j(x) k(x',\overline{ u_j})= -(\widehat{\delta}-k\circ \overline{k})^{-1}\circ k~.
\end{equation*}
Alternatively, we have
\begin{equation*}
\sum_{j=1}^\infty{\overline{v_j(x)}u_j(x')} = -\sum_{j=1}^\infty{k(x,\overline{u_j})u_j(x')} = -k\circ (\widehat{\delta}- \overline k\circ k)^{-1}~.
\end{equation*}
Thus, we obtain the second completeness relation of~\eqref{eq:complete-relns} by using the identity $(\widehat{\delta}-k\circ \overline{k})^{-1}\circ k=k\circ (\widehat{\delta}- \overline k\circ k)^{-1}$.

Regarding orthogonality relations \eqref{eq:uv-relns}, the manipulation of system~\eqref{eq:Fetter-PDE} yields the following equations:
\begin{equation*}
\begin{split}
(E_j - \overline{E_{j'}}) \int \dv x\ \{u_j(x)\overline{u_{j'}(x)}\}&=\int \dv x\ \{-\overline{u_{j'}(x)}f_\phi(x,v_j)+u_j(x)\overline{f_\phi}(x,\overline{v_{j'}})\}~,\\
(E_j - \overline{E_{j'}}) \int \dv x\ \{v_j(x)\overline{v_{j'}(x)}\}&=\int \dv x\ \{\overline{v_{j'}(x)}\,\overline{f_\phi}(x,u_j)-v_j(x)f_\phi(x,\overline{u_{j'}})\}~.
\end{split}
\end{equation*} 
By subtracting the second equation from the first one, we can obtain the second orthogonality relation of~\eqref{eq:uv-relns}, if $\Vert u_j\Vert^2_2- \Vert v_{j}\Vert^2_2 \neq 0$ and this normalization for $u_j$ and $v_j$ is chosen to give unity. The first orthogonality relation of~\eqref{eq:uv-relns} follows by a similar procedure which we omit here.  \color{black}  \hfill $\square$
\end{proof}
%%%%%%%%%%%%%%%End of Proof of Corollary on Fetter's formalism%%%%%%%%%%%%%%%%%%%%%%%

%%%%%%%%%%%%%%%%%%%%%%%%%%%%%%%%%%%%%%%%%%
\subsection{On the $J$-self-adjoint system}
\label{subsec:J-self-adj}
%%%%%%%%%%%%%%%%%%%%%%%%%%%%%%%%%%%%%%%%%%
Next, we discuss in more detail the connection between Riccati equation~\eqref{eq:k-Riccati} and main ideas from the theory of $J$-self-adjoint operators found in, e.g.,~\cite{AlbeverioMotovilov2019,Albeverio2009,AlbeverioMotovilov2010}. A link between these two theories is suggested by the eigenvalue problem~\eqref{eq:Fetter-PDE}, which involves the symplectic matrix
\begin{equation} \label{eq:M-def}
M:=\begin{pmatrix}
     h^T_\perp & -{f_\phi}_\perp \\
     \overline{f_\phi}_\perp & -h_\perp
    \end{pmatrix};\quad\mathrm{dom}(M):=\mathfrak{h}^1_V\oplus\mathfrak{h}^1_V~.
\end{equation}
Note the matrix 
\begin{equation*}
\widetilde{M}:=\begin{pmatrix}
     h^T & -{f_\phi} \\
     \overline{f_\phi} & -h
    \end{pmatrix}
\end{equation*}
has the zero eigenvalue with eigenvector $(\phi, \phibar)$.
\color{black}

Suppose that $\phi(x)$, $h(x,y)$ and ${f_\phi}(x,y)$ satisfy the assumptions of Theorem~\ref{thm:Existence} (Sect.~\ref{subsec:exist-thm-proof}). Let $k$ be the unique solution to the Riccati equation with $\|k\|_{\mathrm{op}}<1$. Then the operator matrix
\begin{equation*}
W := \begin{pmatrix}
\widehat\delta & \quad k \\
 \overline{k} & \quad \widehat\delta 
\end{pmatrix} : \phi^\perp\oplus \phi^\perp \mapsto \phi^\perp\oplus\phi^\perp
\end{equation*}
is boundedly invertible~\cite{Albeverio2009}, with inverse 
\begin{equation*}
W^{-1} = \begin{pmatrix}
 (\widehat\delta-k\circ \overline k)^{-1} & -k\circ(\widehat\delta-\overline k\circ k)^{-1} \\
 -\overline{k}\circ(\widehat\delta-k\circ \overline{k})^{-1} & (\widehat\delta-\overline{k}\circ k)^{-1}
\color{black}\end{pmatrix}~. 
\end{equation*}

Now let us consider the diagonal matrix 
\begin{equation*} 
D := \begin{pmatrix}
 h^T_\perp + k\circ\overline{f_\phi}_\perp & 0 \\
 0 & -h_\perp-\overline{k}\circ{f_\phi}_\perp
\end{pmatrix}~.
\end{equation*}
The spectrum of $D$ is $\sigma(\overline{h_\mathrm{ph}})\cup\sigma(-h_\mathrm{ph})$, which under the assumptions of Theorem~\ref{thm:Existence} consists of two disjoint parts. 
%Now consider the matrix $D W$. 
%\begin{equation}
%\begin{split}
%DW 
%&=\begin{pmatrix}
% h^T_\perp + k\circ\overline{f_\phi} & h^T_\perp\circ k+ k\circ\overline{f_\phi}\circ k \\
 %-h_\perp\circ \overline{k} -\overline{k}\circ f_\phi\circ \overline{k} & -h_\perp-\overline{k}\circ f_\phi
%\end{pmatrix}.
%\end{split}
%\end{equation}
Since $k$ obeys the Riccati equation on $\phi^\perp$, we have 
\begin{equation*}
DW = \begin{pmatrix}
 h^T_\perp + k\circ\overline{f_\phi}_\perp & \quad -k\circ h_\perp - {f_\phi}_\perp \\
 \overline{k}\circ h^T_\perp +\overline{f_\phi}_\perp & \quad -h_\perp-\overline{k}\circ{f_\phi}_\perp
\end{pmatrix} = WM~,
\end{equation*}
where $M$ is defined by~\eqref{eq:M-def}. Thus, the matrix $M$ is similar to the diagonal matrix $D$.

We proceed to describe implications of this similarity relation. Eigenvectors of the diagonal operator matrix $D$ are of two types. One type is of the form $ \big({\omega}_j(x),0\big)$ where ${\omega}_j(x)$ is an eigenvector of ${h_\mathrm{ph}}$, and another type is of the form $\big(0,\overline{\omega_j(x)}\big)$ (see Sect.~\ref{subsec:spectrum-hph}). This fact  yields two types of eigenvectors for $M$ after transformation by $W^{-1}$, viz.,
\begin{equation*}
W^{-1}\begin{pmatrix}
 {\omega}_j(x) \\
 0
\end{pmatrix} = \begin{pmatrix}
 (\widehat\delta-k\circ \overline{k})^{-1}(x,{\omega}_j) \\
 -\overline{k}\circ(\widehat\delta-k\circ \overline{k})^{-1}(x,\omega_j)
\end{pmatrix}~,
\end{equation*}
and
\begin{equation*}
W^{-1}\begin{pmatrix}
0\\
\overline{\omega_j(x)}
\end{pmatrix} = \begin{pmatrix}
-k\circ(\widehat\delta-\overline{k}\circ k)^{-1}(x,\overline{\omega_j})\\
(\widehat\delta-\overline{k}\circ k)^{-1}(x,\overline{\omega_j})
\end{pmatrix}~.
\end{equation*}
For the second type of eigenvector, we make the identifications 
\begin{equation*}
u_j(x):=(\widehat\delta-k\circ \overline{k})^{-1}(x,\omega_j)\quad \mbox{and}\quad 
v_j(x):= -\overline{k}\circ(\widehat\delta-k\circ \overline{k})^{-1}(x, \omega_j)= -\overline{k}(x,u_j)~.
\end{equation*}
The eigenvectors of the second type should be excluded because they yield a negative spectrum.   
%
%%%%%%%%%%%Remark on Fetter--> Riccati%%%%%%%%%
\begin{remark}
So far, we assumed that the Riccati equation for $k$ is satisfied (and solutions to this equation exist by Theorem~\ref{thm:Existence}). \color{black} Conversely, if we assume that the integral system $v_j=-\overline{k}(x, u_j)$ as well as PDE system~\eqref{eq:Fetter-PDE} hold then $k$ must obey the Riccati equation. This claim can be proved by use of the methods that we already developed.   	
\end{remark}
%%%%%%%%%%%%%%%%%%%%%%End of Remark on Fetter--> Riccati%%%%%%%%

%%%%%%%%%%%%%%%%%%%%%%%%%%%%%%%%%%%%%%%%%%%%%%%%%%%%%%%%%%%%%%%%%%%%%%%%%%%
\section{Conclusion}
\label{sec:conclusion}
%%%%%%%%%%%%%%%%%%%%%%%%%%%%%%%%%%%%%%%%%%%%%%%%%%%%%%%%%%%%%%%%%%%%%%%%%%%
In concluding this paper, we stress the intimate, and perhaps surprising, mathematical connection between two apparently disparate approaches (those of Fetter~\cite{fetter72} and Wu~\cite{wu61}) to the problem of Boson excitations via the theory of $J$-self-adjoint operators~\cite{Albeverio2009}. The results presented here form an application of its powerful machinery to a physics-inspired problem with interesting implications. 

Notably, the similarity relation $WMW^{-1} = D$, discussed in Sect.~\ref{subsec:J-self-adj}, shows that the spectrum of $h_\mathrm{ph}$ can change for different solutions to the Riccati equation, but in a predictable way. In particular, since the spectrum  $\sigma(M) = \sigma(D) = \sigma(h_\mathrm{ph})\cup\sigma(-h_\mathrm{ph})$, the (double) spectrum $\sigma(h_\mathrm{ph})\cup\sigma(-h_\mathrm{ph})$ is unaffected by the choice of $k$ solving the Riccati equation. However, the spectrum $\sigma(h_\mathrm{ph})$ will change under different choices of solutions for $k$. In light of our analysis, the only possible change induced by $\sigma(h_\mathrm{ph}(k))\mapsto\sigma(h_\mathrm{ph}(k'))$ for two different solutions $k$ and $k'$ ($k'\not=k$) is such that a finite collection  of eigenvalues $\{E_j\}\subset \sigma(h_\mathrm{phon}(k))$ is mapped to $\{-E_j\}\subset\sigma(h_\mathrm{phon}(k'))$ while the rest of the eigenvalues remain unchanged. 

We are tempted to mention a few open problems motivated by our work. For example, given the existence of the kernel $k$ with $\Vert k\Vert_{\text{op}}<1$, it is of interest to study the existence of the Boson pair correlation function in a trap at zero temperature. Another possible extension is to consider the effect of a non-unitary transformation analogous to $e^{\mathcal W}$ by including contributions from higher-order (cubic and quartic terms) in the reduced many-body Hermitian Hamiltonian. This consideration would plausibly require the introduction of additional kernels, which must satisfy several consistency conditions. Finally, it is conceivable that the non-Hermitian approach involving $k$ can be extended to the setting of finite (positive) temperatures below the phase transition in the presence of a trapping potential. In the spirit of the periodic case~\cite{leeyang}, we could construct an effective quadratic Hamiltonian that involves a parameter expressing the average fraction of particles at the condensate, and subsequently transform it non-unitarily. Alternatively, one may use a Hermitian approach at finite temperatures akin to Fetter's formalism, e.g., the approach of~\cite{Griffin1996}.

\color{black}

\begin{acknowledgements}
The second author (D.M.) is indebted to Professor Tai Tsun Wu for valuable discussions on the Bose-Einstein condensation. The authors are grateful to Dr.~Eite Tiesinga for bringing Ref.~\cite{fetter72} to their attention and Professor Matei Machedon for various discussions on Boson dynamics. 
\end{acknowledgements}

% Authors must disclose all relationships or interests that 
% could have direct or potential influence or impart bias on 
% the work: 
%\section*{Conflict of interest}
%The authors declare that they have no conflict of interest.

%\appendix

%%%%%%%%%%%%%%%%%%%%%%%%%%%%%%%%%%%%%%%%%%%%%
%%%%%%%%%%%%%%%%%%%%%%%%%%%%%%%%%%%%%%%%%%%%%
%\section{Estimates of Eigenvalue growth rates and the Boundedness of Fetter's Kernel}
%\label{app:hyper-func}
%This section describes asymptotics for the growth rate of $E_j$ and relates this to the norms $\|u_j\|, \|v_j\|$. We can use this to give %sufficient conditions on the trapping potential $V(x)$ such that $\|K\|<\infty$. These conditions are, however, overly restrictive, since %they exclude the harmonic oscillator potential. Methods to find the operator bound of $K$ are discussed which do not include estimating %it's $L^2$ norm. 
%%%%%%%%%%%%%%%%%%%%%%%%%%%%%%%%%%%%%%%%%%%%%
%%%%%%%%%%%%%%%%%%%%%%%%%%%%%%%%%%%%%%%%%%%%%

%\nocite{*}
%\bibliographystyle{plain}

\bibliographystyle{spmpsci}      % mathematics and physical sciences

\bibliography{GMS_ArXiv-Revision}

\end{document}